\crefname{hypothesis}{Hypothesis}{Hypotheses}
\title{A Mathematical Aspect of Bloch's Theorem\thanks{Submitted to the editors DATE.
\funding{This work was supported by the National Key R \& D Program of China under grants 2019YFA0709600 and 2019YFA0709601, the National Natural Science Foundation of China under grant 12021001.}}}
\author{Yan Li\thanks{LSEC, Institute of Computational Mathematics and Scientific/Engineering Computing, Academy of Mathematics and Systems Science, Chinese Academy of Sciences, Beijing 100190, China; and School of Mathematical Sciences, University of Chinese Academy of Sciences, Beijing 100049, China (\email{liyan2021@lsec.cc.ac.cn}, \email{azhou@lsec.cc.ac.cn}).}
\and Bin Yang\thanks{NCMIS, Academy of Mathematics and Systems Science, Chinese Academy of Sciences, Beijing 100190, China (\email{binyang@lsec.cc.ac.cn}).}
\and Aihui Zhou\footnotemark[2]}
\begin{document}

\maketitle

% REQUIRED
\begin{abstract}
In this paper, by studying a class of 1-D Sturm-Liouville problems with periodic coefficients, we show and classify the solutions of periodic Schrodinger equations in a multidimensional case, which tells that not all the solutions are Bloch solutions. In addition, we also provide several properties of the solutions and quasimomenta and illustrate the relationship between bounded solutions and Bloch solutions.
\end{abstract}

% REQUIRED
\begin{keywords}
Bloch's theorem, Bloch solution, spectrum, periodic Schrödinger equation, quasimomentum
\end{keywords}

% REQUIRED
\begin{MSCcodes}
34C25, 34L10, 35J10, 35Q40
\end{MSCcodes}

\section{Introduction}
 Bloch's theorem  provides a framework for describing the behavior of electrons in crystal lattices, which is essential for predicting and interpreting the properties of materials used in various technological applications. It provides a powerful tool for solving partial differential equations in periodic media which are common in scientific and engineering computing. By exploiting the periodic feature of these problems, the theorem allows us to reduce the complexity of the solution space and achieve highly accurate numerical approximations. In particular, according to Bloch's theorem, we can discretize periodic systems by the plane wave method, which is by far most used in present-day electronic structure calculations and  is the most natural and popular method for periodic solids calculations. We refer to \cite{ ashcroft1976solid, kittel2018introduction, kuchment1993floquet, kuchment2016overview,  martin2020electronic, reed1978methods, singh2006planewaves, sjostedt2000alternative} for more details.

However, Kuchment pointed in \cite{kuchment2016overview} that Bloch's theorem was probably never proved by Bloch himself. We understand that there are some versions of Bloch's theorems and proofs in literatures. A periodic boundary based version stated and proved in \cite{ashcroft1976solid} reads as:  {\it the eigenstate $\Psi$ of one-electron Hamiltonian $\mathscr{H}=-\frac{\hbar^2}{2m}\Delta+V(\mathbf{r})$, where $V(\mathbf{r}+a_{\gamma})=V(\mathbf{r})$ for all $a_{\gamma}$ in the  Bravais lattice $\mathcal{R}$, can be chosen to have the form of a plane wave times a function with the periodicity of the Bravais lattice: $\Psi_{n\mathbf{k}}(\mathbf{r})=e^{\mathrm{i}\mathbf{k}\cdot\mathbf{r}}u_{n\mathbf{k}}(\mathbf{r})$, where $u_{n\mathbf{k}}(\mathbf{r}+a_{\gamma})=u_{n\mathbf{k}}(\mathbf{r})$ for all $a_{\gamma}$ in the Bravais lattice.}  Solutions of the form as $\Psi_{n\mathbf{k}}(\mathbf{r})=e^{\mathrm{i}\mathbf{k}\cdot\mathbf{r}}u_{n\mathbf{k}}(\mathbf{r})$  are generally called the Bloch solutions. The vectors $\mathbf{k}$ is called quasimomenta or crystal momenta.

To prove the theorem, a periodic (or Born-von Karman) boundary condition as follows (see, e.g.,   pages 133-139 of \cite{ashcroft1976solid})
\begin{equation}
    \Psi(\mathbf{r}+N_{i}\mathbf{a}_{i})=\Psi(\mathbf{r}), i=1,2,3,
\end{equation}
is applied, where $\mathbf{a_{i}}$ are primitive vectors of the Bravais lattice and $N_{i}$ are all integers.  $N=N_{1}N_{2}N_{3}$ is the total number of primitive cells in the crystal.

Similar version of the proof is also provided by Dresselhaus et al.
(see Chapter 13 of \cite{dresselhaus2008applications} for details).  We observe that, for a general Bloch solution $\Psi(\mathbf{r})=e^{\mathrm{i}\mathbf{k}\cdot\mathbf{r}}u(\mathbf{r})$, such type of boundary conditions might not be satisfied.  For instance, if for some $i$, $\mathbf{k}\cdot\mathbf{a}_{i}$ and $2\pi$ are incommensurable (in other words, $\frac{\mathbf{k}\cdot\mathbf{a}_{i}}{2\pi}$ equals an irrational number), there will never exist any integer $M$ such that $\Psi(\mathbf{r}+M\mathbf{a}_{i})=\Psi(\mathbf{r}),\,\forall \mathbf{r}\in\mathbb{R}^{3}$. Consequently the proofs with the above periodic boundary condition are not applicable to Bloch's theorem for all possible quasimomenta.

Expect for the periodic boundary condition, the typical proof is also  based on the so-called fundamental theorem of quantum mechanics (see, e.g., Page 144 of \cite{park1974introduction}, \cite{kramers2018quantum}) as follows:  {\it Let $\hat{P}, \hat{Q}, \hat{R}, \dots$ be a set of Hermitian operators which commute among each other. Then a complete set of solutions that are simultaneously solutions of all the operators can be found}. With the commutativity of $\mathscr{H}$ and translation operators $T_{a_{\gamma}}:f(\mathbf{r})\mapsto f(\mathbf{r}+a_{\gamma})$ for $\forall a_{\gamma}\in\mathcal{R}$, the eigenstates of $\mathscr{H}$ can therefore be chosen to be simultaneous eigenstates of $T_{a_{\gamma}}$ and then the proof is finished \cite{ashcroft1976solid, kittel2018introduction}. Unfortunately, on one hand, it is not addressed in the above books on what Hilbert spaces $T_{a_{\gamma}}$ and $\mathscr{H}$ are defined, which is essential for Hermitian operators. On the other hand, an example as follows shows that the set of simultaneous eigenstates of $\mathscr{H}$ and $T_{a_{\gamma}}$ might be a proper subset of eigenstates of $\mathscr{H}$.  In the one-dimensional case, if we set the potential as the piecewise function 
\begin{equation*}
    V(r)=\frac{\hbar^{2}}{2m}(\frac{2}{(r-n-1)^{2}}-\frac{2}{r-n-1}), \quad n-1\leqslant r<n, n\in\mathbb{Z},
\end{equation*}
 then the piecewise function 
 \begin{equation*}
     \psi(r)=e^{r}(1-\frac{r}{r-n-1}), n-1\leqslant r<n, n\in\mathbb{Z}
 \end{equation*} is an eigenstate of $\mathscr{H}$ corresponding to $E=-\frac{\hbar^{2}}{2m}$ while $\psi(r)$ is not a Bloch solution and hence is not the solution of $T_{a_{\gamma}}$ either. 

We also observe that there is another version of Bloch’s theorem and proofs that implies the existence of Bloch solutions in \cite{kuchment1993floquet, kuchment2016overview, cycon2009schrodinger, glazman1965direct,  shnol1957behavior, shubin1989weak}, that is,  \it the existence of a nontrivial bounded solution of a periodic elliptic equation $\mathscr{H}\Psi=\lambda \Psi$ implies the existence of a Bloch solution with a real quasimomentum, and thus $\lambda\in\sigma(\mathscr{H})$. \rm Here $\sigma(\mathscr{H})$ is the spectrum of $\mathscr{H}$. However, whether the solutions are all Bloch solutions have not been confirmed.

Consequently, we arrive at the conclusion that the typical versions of Bloch's theorem may confirm a proper subset of eigenstates of $\mathscr{H}$ are Bloch solutions only. It is unclear if all the eigenstates are Bloch solutions.

Besides the above versions concentrating on solutions, the spectrum of periodic Schr\"{o}dinger operators has been also discussed \cite{kuchment1993floquet, kuchment2016overview, cycon2009schrodinger, eastham1973spectral, gontier2016convergence,  Yakubovich1975linear, reed1980methods, gesztesy2009criterion, magnus2013hill, titchmarsh2011elgenfunction}. Here, the Hamiltonian $\mathscr{H}$ operates  over $L^{2} (\mathbb{R}^{d})$ with the domain $H^{2}(\mathbb{R}^{d})(d\geqslant1)$. With the Bloch-Floquet transform, the operator $\mathscr{H}$ over $L^{2} (\mathbb{R}^{d})$ can be decomposed into operators over $L^{2} (U)$, where $U$ is the individual unit cell; or in other words, $\mathscr{H}$ can be diagonalized. Then it shows that
\it for $\lambda\in\sigma(\mathscr{H})$, there exists a Bloch function $\Psi$ satisfying $\mathscr{H}\Psi=\lambda \Psi$. \rm However, in fact, $\mathscr{H}$  has only absolutely continuous spectrum with no eigenvalues \cite{reed1980methods}. We should name such $\Psi$ generalized eigenfunctions \cite{kuchment2016overview} and set $\sigma_{g}(\mathscr{H})=\{\lambda\in\mathbb{R}|\mathscr{H}\Psi=\lambda\Psi, \Psi\not\equiv0\}$ . It is clear that  $\sigma(\mathscr{H})\subset\sigma_{g}(\mathscr{H})$. It is quite interesting to see whether there holds $\sigma(\mathscr{H})=\sigma_{g}(\mathscr{H})$. If not, what are the characteristics of the set $\sigma_{g}(\mathscr{H})\setminus\sigma(\mathscr{H})$?

In this paper, we try to answer the above questions, fill in the gap, and provide a mathematical aspect on the Bloch theorem. Motivated by the approach of separation of variables for solving partial differential equations, we provide a classification of solutions of periodic Schrodinger equations when they are separable solutions whose variables can be separated directly or with some linear mapping in multidimensions. We present a general form of such type of solutions and show their properties, and illustrate the relationship between bounded solutions and Bloch solutions.
 
 The rest of this paper is organized as follows. The Bloch-Floquet transform is introduced in Section 2. In Section 3, by studying a class of 1-D Sturm-Liouville problems with periodic coefficients, we present a general form of separable solutions and show their properties, and illustrate the relationship between bounded solutions and Bloch solutions. Finally, we give some concluding remarks in Section 4 and provide more details about one dimensional cases in \ref{appdix:class-eigenfunc}.

\section{Bloch-Floquet transform}
Let $\gamma_{1}, \gamma_{2}, \cdots, \gamma_{d}\in\mathbb{R}^{d}$ be $d$  linearly independent vectors. As a result, $\operatorname{det}(\gamma_{1},\gamma_{2},\ldots,\gamma_{d})\ne 0$. The Bravais lattice \cite{martin2020electronic, lewin2010spectral}  is defined as
\begin{displaymath}
  \mathcal{R}:=\gamma_{1}\mathbb{Z}\oplus\gamma_{2}\mathbb{Z}\oplus\cdots\oplus\gamma_{d}\mathbb{Z}.
\end{displaymath}

Then, the dual (reciprocal) lattice to $\mathcal{R}$ is defined as 
\begin{displaymath}
  \mathcal{R}^{*}:=\gamma_{1}^{*}\mathbb{Z}\oplus\gamma_{2}^{*}\mathbb{Z}\oplus\cdots\oplus\gamma_{d}^{*}\mathbb{Z},
\end{displaymath}
where $\gamma_{j}^{*}\gamma_{i}=2\pi\delta_{ij}$. And the Brillouin zone \cite{lewin2010spectral} is 
\begin{displaymath}
  \mathcal{B}:=\{\mathbf{k}\in\mathbb{R}^{d}\,|\,\Vert \mathbf{k}\Vert=\inf_{b_{\gamma}\in\mathcal{R}^{*}}\Vert \mathbf{k}-b_{\gamma}\Vert\}.
\end{displaymath}

    A function $\Psi$ defined on $\mathbb{R}^{d}$ is called $\mathcal{R}$-periodic if 
    \begin{displaymath}
        \Psi(\mathbf{r}+a_{\gamma})=\Psi(\mathbf{r}),~\forall a_{\gamma}\in\mathcal{R},~\forall \mathbf{r}\in\mathbb{R}^{d}.
    \end{displaymath}

In this paper, we shall use the standard notation for the Sobolev space $H^{s}(\mathbb{R}^{d})$ with associated norms (see, e.g. \cite{adams2003sobolev}, \cite{ciarlet1990handbook} and \cite{evans2022partial}). 

We consider the spectrum of the one-electron Schrödinger operator with a periodic potential: 
\begin{equation*}
   \mathscr{H}:=-\frac{\hbar^{2}}{2m_{e}}\Delta+V, 
\end{equation*}
where $m_{e}$ is the mass of electron,
$\hbar$ is the Planck constant over $2\pi$, and 
the potential $V$ is $\mathcal{R}$-periodic. Here, $\mathscr{H}$ operates over $L^{2}(\mathbb{R}^{d})$ with the domain $H^{2}(\mathbb{R}^{d})$. Then we introduce the Bloch-Floquet transform, with which the operator 
$\mathscr{H}$ defined over $\mathbb{R}^{d}$ can be decomposed into operators on individual unit cells, or in other words, $\mathscr{H}$ can be diagonalized.

Set the unit cells of $\mathcal{R}$ and $\mathcal{R}^{*}$:
\begin{equation*}
\begin{array}{rcl}
     \mathbb{T}:=&\{\sum_{i=1}^{d}\mu_{i}\gamma_{i}:\mu_{i}\in[-1,1), i=1,\cdots,d\},\\ \mathbb{T}^{*}:=&\{\sum_{i=1}^{d}\mu_{i}\gamma_{i}^{*}:\mu_{i}\in[-1,1), i=1,\cdots,d\}.
\end{array}
\end{equation*}
\begin{definition}[\cite{kuchment2016overview, gontier2016convergence}]\label{def:b_f_trans}
The Bloch-Floquet transform $\mathcal{U}_{\mathcal{R}}$ acts as follows:
\begin{equation*}
    f(\mathbf{r})\mapsto\mathcal{U}_{\mathcal{R}}f(\mathbf{r},\mathbf{k})\equiv\sum_{a_{\gamma}\in\mathcal{R}}f(\mathbf{r}+a_{\gamma})e^{-\mathrm{i}\mathbf{k}\cdot a_{\gamma}}, \forall f\in L^{2}(\mathbb{R}^{d}).
\end{equation*}
For convenience, we set $f_{\mathbf{k}}:=\mathcal{U}_{\mathcal{R}}f(\cdot,\mathbf{k})$.
\end{definition}

$\mathcal{U}_{\mathcal{R}}f(\mathbf{r},\mathbf{k})$ is just the sum of the Fourier series of $f(\mathbf{r})$, whose coefficients are the pieces of $f$ over the shifted pieces $\mathbb{T}+a_{\gamma}$. Due to the fact $\mathbb{R}^{d}=\bigcup_{a_{\gamma}\in \mathcal{R}}(\mathbb{T}+a_{\gamma})$, we see that Bloch-Floquet transform is indeed to decompose $f$ over $\mathbb{R}^{d}$ into unit cells. 

It is easy to observe the following useful properties:
\begin{equation}\label{eq:ues_pro1}
\begin{array}{rcl}
     \mathcal{U}_{\mathcal{R}}f(\mathbf{r}+a_{\gamma},\mathbf{k})&=&e^{\mathrm{i}\mathbf{k}\cdot a_{\gamma}}\mathcal{U}_{\mathcal{R}}f(\mathbf{r},\mathbf{k}),\quad a_{\gamma}\in\mathbb{T},\\
    \mathcal{U}_{\mathcal{R}}f(\mathbf{r},\mathbf{k}+b_{\nu})&=&\mathcal{U}_{\mathcal{R}}f(\mathbf{r},\mathbf{k}), \quad b_{\nu}\in\mathbb{T}^{*}.
\end{array}
\end{equation}
\cref{eq:ues_pro1} implies that $\mathcal{U}_{\mathcal{R}}f(\mathbf{r},\mathbf{k})$ adjusts the amplitude periodically involving $\mathbf{r}$ over $\mathcal{R}$ and is periodic involving $\mathbf{k}$ over $\mathcal{R}^{*}$.

We see from \cref{def:b_f_trans} that 
\begin{equation*}
    f(\mathbf{r})=\fint_{\mathbb{T}^{*}}\mathcal{U}_{\mathcal{R}}f(\mathbf{r},\mathbf{k})d\mathbf{k}, \quad \mathbf{r}\in\mathbb{R}^{d},
\end{equation*}
where $\fint_{\mathbb{T}^{*}}\equiv \frac{1}{|\mathbb{T}^{*}|}\int_{\mathbb{T}^{*}}$, which shows the inverse of $\mathcal{U}_{\mathcal{R}}$. Moreover, the following theorem holds for Bloch-Floquet transform  \cite{kuchment2016overview, gontier2016convergence}.
\begin{theorem}
Let $\bar{\mathbb{T}}$ be the closure of $\mathbb{T}$, then $\mathcal{U}_{\mathcal{R}}$ has the following properties:
    \begin{enumerate}
        \item  $\mathcal{U}_{\mathcal{R}}$ is isometry from $L^{2}(\mathbb{R}^{d})$ onto $L^{2}(\mathbb{T}^{*}, L^{2}(\bar{\mathbb{T}}))$;
        \item $\mathcal{U}_{\mathcal{R}}$ is isometry from $H^{s}(\mathbb{R}^{d})(s>0)$ onto a proper subspace of $L^{2}(\mathbb{T}^{*}, H^{s}(\bar{\mathbb{T}}))$;
        \item $\mathcal{U}_{\mathcal{R}}$ is isometry from $H^{s}(\mathbb{R}^{d})$ onto $L^{2}(\mathbb{T}^{*}, \mathcal{H}^{s})$, where $\mathcal{H}^{s}:=\bigcup_{\mathbf{k}\in\mathbb{T}^{*}}H_{\mathbf{k}}^{s}(\bar{\mathbb{T}})$, 
        \begin{equation*}
            H_{\mathbf{k}}^{s}(\bar{\mathbb{T}}):=\{f|_{\bar{\mathbb{T}}}|f\in H_{loc}^{s}(\mathbb{R}^{d}), f(\mathbf{r}+a_{\gamma})=e^{\mathrm{i}\mathbf{k}\cdot a_{\gamma}}f(\mathbf{r})\,  a.e., \forall a_{\gamma}\in\mathcal{R}\}.
        \end{equation*}
    \end{enumerate}
\end{theorem}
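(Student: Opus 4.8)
The plan is to prove the three assertions in order, bootstrapping the $L^{2}$ statement into the Sobolev ones. For (1) I would first verify the Plancherel identity on the dense subspace $C_{c}^{\infty}(\mathbb{R}^{d})$, where the defining series for $\mathcal{U}_{\mathcal{R}}f$ is locally finite and all interchanges of summation and integration are justified. Expanding $\|\mathcal{U}_{\mathcal{R}}f\|_{L^{2}(\mathbb{T}^{*},L^{2}(\bar{\mathbb{T}}))}^{2}$ and using the orthonormality $\fint_{\mathbb{T}^{*}}e^{-\mathrm{i}\mathbf{k}\cdot(a_{\gamma}-a_{\gamma'})}\,d\mathbf{k}=\delta_{\gamma\gamma'}$ of the characters (a consequence of $\gamma_{j}^{*}\gamma_{i}=2\pi\delta_{ij}$), the cross terms cancel and the expression collapses to $\sum_{a_{\gamma}}\int_{\bar{\mathbb{T}}+a_{\gamma}}|f|^{2}\,d\mathbf{r}=\|f\|_{L^{2}(\mathbb{R}^{d})}^{2}$ by the tiling $\mathbb{R}^{d}=\bigcup_{a_{\gamma}\in\mathcal{R}}(\bar{\mathbb{T}}+a_{\gamma})$. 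A density argument then extends $\mathcal{U}_{\mathcal{R}}$ to an isometry on all of $L^{2}(\mathbb{R}^{d})$. For surjectivity, given $g\in L^{2}(\mathbb{T}^{*},L^{2}(\bar{\mathbb{T}}))$ I would extend each fiber $\mathbf{r}\mapsto g(\mathbf{r},\mathbf{k})$ from $\bar{\mathbb{T}}$ to $\mathbb{R}^{d}$ by the quasiperiodic rule $g(\mathbf{r}+a_{\gamma},\mathbf{k}):=e^{\mathrm{i}\mathbf{k}\cdot a_{\gamma}}g(\mathbf{r},\mathbf{k})$ and set $f:=\fint_{\mathbb{T}^{*}}g(\cdot,\mathbf{k})\,d\mathbf{k}$; the inversion formula recorded before the theorem then yields $\mathcal{U}_{\mathcal{R}}f=g$. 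Since in $L^{2}$ the boundary $\partial\mathbb{T}$ is negligible, no compatibility between fibers is needed, and the map is onto the \emph{full} space.

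For (2) the key structural fact is that $\mathcal{U}_{\mathcal{R}}$ is assembled from the translations $f\mapsto f(\cdot+a_{\gamma})$ and hence commutes with differentiation in $\mathbf{r}$: $\partial_{\mathbf{r}}^{\alpha}(\mathcal{U}_{\mathcal{R}}f)=\mathcal{U}_{\mathcal{R}}(\partial^{\alpha}f)$. For integer $s$ this reduces the $H^{s}$-isometry to (1) applied to each derivative, since $\|\mathcal{U}_{\mathcal{R}}f\|_{L^{2}(\mathbb{T}^{*},H^{s}(\bar{\mathbb{T}}))}^{2}=\sum_{|\alpha|\le s}\|\mathcal{U}_{\mathcal{R}}(\partial^{\alpha}f)\|_{L^{2}(\mathbb{T}^{*},L^{2}(\bar{\mathbb{T}}))}^{2}=\sum_{|\alpha|\le s}\|\partial^{\alpha}f\|_{L^{2}(\mathbb{R}^{d})}^{2}=\|f\|_{H^{s}(\mathbb{R}^{d})}^{2}$; for fractional $s$ I would instead invoke the Fourier/Bessel-potential definition of $H^{s}$ together with interpolation between integer orders. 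To see that the range is a \emph{proper} subspace, I would use property~\eqref{eq:ues_pro1}: every image satisfies $\mathcal{U}_{\mathcal{R}}f(\mathbf{r}+a_{\gamma},\mathbf{k})=e^{\mathrm{i}\mathbf{k}\cdot a_{\gamma}}\mathcal{U}_{\mathcal{R}}f(\mathbf{r},\mathbf{k})$, which once $H^{s}$ regularity is imposed is a genuine constraint; concretely for $s>1/2$ one exhibits a fiber in $L^{2}(\mathbb{T}^{*},H^{s}(\bar{\mathbb{T}}))$ whose traces on opposite faces of $\partial\mathbb{T}$ violate the required matching, establishing strict inclusion.

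Assertion (3) amounts to pinning the range down exactly. For the inclusion $\mathrm{Ran}(\mathcal{U}_{\mathcal{R}})\subseteq L^{2}(\mathbb{T}^{*},\mathcal{H}^{s})$, I note that for $f\in H^{s}(\mathbb{R}^{d})$ the function $\mathbf{r}\mapsto\mathcal{U}_{\mathcal{R}}f(\mathbf{r},\mathbf{k})$ is defined on all of $\mathbb{R}^{d}$, is $\mathbf{k}$-quasiperiodic by \eqref{eq:ues_pro1}, and is $H^{s}_{loc}$, so its restriction to $\bar{\mathbb{T}}$ lies in $H^{s}_{\mathbf{k}}(\bar{\mathbb{T}})$ for a.e.\ $\mathbf{k}$. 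For surjectivity onto $L^{2}(\mathbb{T}^{*},\mathcal{H}^{s})$, given $g$ with $g(\cdot,\mathbf{k})\in H^{s}_{\mathbf{k}}(\bar{\mathbb{T}})$ for a.e.\ $\mathbf{k}$, I would extend each fiber $\mathbf{k}$-quasiperiodically to $\mathbb{R}^{d}$ and set $f:=\fint_{\mathbb{T}^{*}}g(\cdot,\mathbf{k})\,d\mathbf{k}$; the matching encoded in the definition of $H^{s}_{\mathbf{k}}(\bar{\mathbb{T}})$ is precisely the compatibility condition guaranteeing that the quasiperiodic extensions carry no spurious jumps across $\partial\mathbb{T}$, whence $f\in H^{s}(\mathbb{R}^{d})$ and $\mathcal{U}_{\mathcal{R}}f=g$ by the inversion formula. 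This identifies the range and upgrades the isometry of (2) to an isometric isomorphism.

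I expect the main obstacle to be the rigorous treatment of (3). The fibers $H^{s}_{\mathbf{k}}(\bar{\mathbb{T}})$ genuinely depend on $\mathbf{k}$, so $L^{2}(\mathbb{T}^{*},\mathcal{H}^{s})$ must be interpreted as a direct integral over a measurable field of Hilbert spaces, and the measurability of $\mathbf{k}\mapsto H^{s}_{\mathbf{k}}(\bar{\mathbb{T}})$ should be checked before the reconstruction is well posed. The analytic heart is the gluing lemma asserting that a $\mathbf{k}$-quasiperiodic function is globally $H^{s}_{loc}(\mathbb{R}^{d})$ \emph{if and only if} its restriction to the cell satisfies the quasiperiodic matching on $\partial\mathbb{T}$; this is routine for integer $s$ via weak derivatives and interface jump conditions, but for fractional $s$ it leans on the full trace and extension theory on $\bar{\mathbb{T}}$ and is the step most likely to demand care.
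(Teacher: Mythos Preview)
The paper does not prove this theorem at all: it is stated with the attribution ``the following theorem holds for Bloch--Floquet transform \cite{kuchment2016overview, gontier2016convergence}'' and no argument is given. So there is no in-paper proof to compare your proposal against.

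That said, your sketch is the standard route to this result and is essentially what one finds in the cited references. The Plancherel computation in (1) via orthogonality of the characters $e^{-\mathrm{i}\mathbf{k}\cdot a_\gamma}$ on $\mathbb{T}^{*}$ is correct, as is the surjectivity via the explicit inverse $f=\fint_{\mathbb{T}^{*}}g(\cdot,\mathbf{k})\,d\mathbf{k}$ recorded in the paper just before the theorem. Your reduction of (2) to (1) through $\partial^{\alpha}_{\mathbf r}\mathcal{U}_{\mathcal R}=\mathcal{U}_{\mathcal R}\partial^{\alpha}$ is the right mechanism, and the trace-mismatch example for $s>1/2$ is the usual way to exhibit properness. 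For (3) you have correctly identified both the content (the quasiperiodic matching is exactly what upgrades cellwise $H^{s}$ to global $H^{s}_{loc}$) and the technical caveat (the $\mathbf{k}$-dependence of the fibers forces a direct-integral reading of $L^{2}(\mathbb{T}^{*},\mathcal{H}^{s})$). One small simplification worth noting: rather than handling fractional $s$ by interpolation and trace theory, one can conjugate by $e^{-\mathrm{i}\mathbf{k}\cdot\mathbf r}$ to identify $H^{s}_{\mathbf k}(\bar{\mathbb T})$ with the fixed periodic space $H^{s}_{\mathrm{per}}(\bar{\mathbb T})$, which trivializes the measurable-field issue and reduces the gluing to the purely periodic case.
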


Let  $T_{a_{\gamma}}(a_{\gamma}\in\mathcal{R})$ be translation operators defined by
  \begin{equation*}
     T_{a_{\gamma}}f(\mathbf{r})=f(\mathbf{r}+a_{\gamma}),\quad \mathbf{r}\in\mathbb{R}^{d}.
 \end{equation*}

 Note that $\mathscr{H}T_{a_{\gamma}}=T_{a_{\gamma}}\mathscr{H}$ for $a_{\gamma}\in\mathcal{R}$. We recall the following existing results on direct integral decompositions \cite{kuchment2016overview, nielsen2020direct} of $\mathscr{H}$.  
\begin{theorem}
There hold that
    \begin{equation*}
         \mathcal{U}_{\mathcal{R}}(L^{2}(\mathbb{R}^{d}))=\fint_{\mathbb{T}^{*}}^{\oplus}L^{2}(\bar{\mathbb{T}}),\quad
         \mathcal{U}_{\mathcal{R}}(H^{s}(\mathbb{R}^{d}))=\fint_{\mathbb{T}^{*}}^{\oplus}H_{\mathbf{k}}^{s}(\bar{\mathbb{T}}),\quad
         \mathcal{U}_{\mathcal{R}}\mathscr{H}\mathcal{U}_{\mathcal{R}}^{-1}=\fint_{\mathbb{T}^{*}}^{\oplus}\mathscr{H}(\mathbf{k}),
    \end{equation*}
where $\mathscr{H}(\mathbf{k}):=\mathscr{H}|_{H_{\mathbf{k}}^{2}(\bar{\mathbb{T}})}$.
\end{theorem}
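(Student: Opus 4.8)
The plan is to reduce the first two identities to the preceding isometry theorem and then to derive the operator identity from a pointwise intertwining relation. For the first identity, the normalized direct integral $\fint_{\mathbb{T}^{*}}^{\oplus}L^{2}(\bar{\mathbb{T}})$ is by definition the Hilbert space of measurable fields $\mathbf{k}\mapsto g_{\mathbf{k}}\in L^{2}(\bar{\mathbb{T}})$ with finite norm $\fint_{\mathbb{T}^{*}}\Vert g_{\mathbf{k}}\Vert_{L^{2}(\bar{\mathbb{T}})}^{2}d\mathbf{k}$, which is exactly $L^{2}(\mathbb{T}^{*},L^{2}(\bar{\mathbb{T}}))$; thus part~1 of the preceding theorem gives $\mathcal{U}_{\mathcal{R}}(L^{2}(\mathbb{R}^{d}))=\fint_{\mathbb{T}^{*}}^{\oplus}L^{2}(\bar{\mathbb{T}})$. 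The second identity follows the same way from part~3, since $\mathcal{H}^{s}=\bigcup_{\mathbf{k}\in\mathbb{T}^{*}}H_{\mathbf{k}}^{s}(\bar{\mathbb{T}})$ is precisely the field of fibers whose direct integral is $\fint_{\mathbb{T}^{*}}^{\oplus}H_{\mathbf{k}}^{s}(\bar{\mathbb{T}})$.

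The core of the argument is the third identity, for which I would first establish the fiberwise relation
\begin{equation*}
\mathcal{U}_{\mathcal{R}}(\mathscr{H}f)(\cdot,\mathbf{k})=\mathscr{H}(\mathbf{k})f_{\mathbf{k}},\qquad f\in H^{2}(\mathbb{R}^{d}),\ \text{a.e. }\mathbf{k}.
\end{equation*}
Writing $\mathscr{H}=-\frac{\hbar^{2}}{2m_{e}}\Delta+V$ and using that $V$ is $\mathcal{R}$-periodic while $\Delta$ commutes with every translation $T_{a_{\gamma}}$, one has $(\mathscr{H}f)(\mathbf{r}+a_{\gamma})=-\frac{\hbar^{2}}{2m_{e}}(\Delta f)(\mathbf{r}+a_{\gamma})+V(\mathbf{r})f(\mathbf{r}+a_{\gamma})$. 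Multiplying by $e^{-\mathrm{i}\mathbf{k}\cdot a_{\gamma}}$ and summing over $a_{\gamma}\in\mathcal{R}$, the factor $V(\mathbf{r})$ pulls out of the sum and the Laplacian can be exchanged with the summation, giving $\mathcal{U}_{\mathcal{R}}(\mathscr{H}f)=-\frac{\hbar^{2}}{2m_{e}}\Delta_{\mathbf{r}}f_{\mathbf{k}}+Vf_{\mathbf{k}}=\mathscr{H}f_{\mathbf{k}}$, where $\mathscr{H}$ now acts in the $\mathbf{r}$-variable.

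To finish, I would identify $\mathscr{H}f_{\mathbf{k}}$ with $\mathscr{H}(\mathbf{k})f_{\mathbf{k}}$. By the first relation in \cref{eq:ues_pro1}, $f_{\mathbf{k}}$ satisfies the quasi-periodic condition $f_{\mathbf{k}}(\mathbf{r}+a_{\gamma})=e^{\mathrm{i}\mathbf{k}\cdot a_{\gamma}}f_{\mathbf{k}}(\mathbf{r})$, and the already established second identity shows that $f\in H^{2}(\mathbb{R}^{d})$ forces $f_{\mathbf{k}}\in H_{\mathbf{k}}^{2}(\bar{\mathbb{T}})$ for a.e.\ $\mathbf{k}$; hence $f_{\mathbf{k}}$ lies in the domain of $\mathscr{H}(\mathbf{k})=\mathscr{H}|_{H_{\mathbf{k}}^{2}(\bar{\mathbb{T}})}$ and $\mathscr{H}f_{\mathbf{k}}=\mathscr{H}(\mathbf{k})f_{\mathbf{k}}$. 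The fiberwise relation then reads $\mathcal{U}_{\mathcal{R}}\mathscr{H}=\bigl(\fint_{\mathbb{T}^{*}}^{\oplus}\mathscr{H}(\mathbf{k})\bigr)\mathcal{U}_{\mathcal{R}}$ on $H^{2}(\mathbb{R}^{d})$, and composing with $\mathcal{U}_{\mathcal{R}}^{-1}$ yields $\mathcal{U}_{\mathcal{R}}\mathscr{H}\mathcal{U}_{\mathcal{R}}^{-1}=\fint_{\mathbb{T}^{*}}^{\oplus}\mathscr{H}(\mathbf{k})$.

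The main obstacle I anticipate is purely technical: justifying the termwise exchange of $\Delta$ with the infinite lattice sum defining $\mathcal{U}_{\mathcal{R}}$, and verifying that $\mathbf{k}\mapsto\mathscr{H}(\mathbf{k})f_{\mathbf{k}}$ is a measurable field so that the direct integral on the right is well defined. I would handle both by first proving the fiberwise relation for a dense class (e.g.\ Schwartz functions, where the sum is locally finite and differentiation under the sum is immediate) and then passing to the $H^{2}$-limit using the isometry property of $\mathcal{U}_{\mathcal{R}}$ from the preceding theorem.
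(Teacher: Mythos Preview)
The paper does not actually prove this theorem: it is stated as a recalled result with citations to \cite{kuchment2016overview, nielsen2020direct}, so there is no in-paper proof to compare against. Your outline is the standard argument one finds in those references---reduce the first two identities to the isometry statement, then verify the operator identity fiberwise by exploiting that $V$ is $\mathcal{R}$-periodic and $\Delta$ commutes with lattice translations---and the technical caveats you flag (exchanging $\Delta$ with the lattice sum, measurability of the field) are exactly the points that need care and are handled in the cited literature via the density argument you sketch.
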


We see for $\mathbf{k}\in\mathbb{T}^{*}$ and $g\in H^{2}(\mathbb{R}^{d})$, that 
\begin{equation*}
    \mathcal{U}_{\mathcal{R}}\mathscr{H}\mathcal{U}_{\mathcal{R}}^{-1}(\mathcal{U}_{\mathcal{R}}g)(\cdot, \mathbf{k})=\mathscr{H}(\mathbf{k})(\mathcal{U}_{\mathcal{R}}g(\cdot,\mathbf{k})),
\end{equation*}
i.e., $(\mathscr{H}g)_{\mathbf{k}}=\mathscr{H}(\mathbf{k})g_{\mathbf{k}}$, which implies that $\mathcal{U}_{\mathcal{R}}\mathscr{H}\mathcal{U}_{\mathcal{R}}^{-1}$ is block diagonal involving $\mathbf{k}$.

The following spectral decomposition of $\mathscr{H}$ can be founded in \cite{kuchment2016overview, nielsen2020direct}.
\begin{theorem}\label{thm:spec_dec}
The spectrum $\sigma(\mathscr{H})$ is the union of the spectrum $\sigma(\mathscr{H}(\mathbf{k}))$ for $\mathbf{k}\in\mathbb{T}^{*}$, i.e.,   
\begin{equation*}
    \sigma(\mathscr{H})=\bigcup_{\mathbf{k}\in\mathbb{T}^{*}}\sigma(\mathscr{H}(\mathbf{k})).
\end{equation*}
\end{theorem}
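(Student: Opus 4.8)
The plan is to transfer the question to the direct integral $\fint_{\mathbb{T}^{*}}^{\oplus}\mathscr{H}(\mathbf{k})$ and then establish the two inclusions separately, relying throughout on the continuity of the fiber operators in $\mathbf{k}$. First I would invoke the direct integral decomposition established above, namely $\mathcal{U}_{\mathcal{R}}\mathscr{H}\mathcal{U}_{\mathcal{R}}^{-1}=\fint_{\mathbb{T}^{*}}^{\oplus}\mathscr{H}(\mathbf{k})$ with $\mathcal{U}_{\mathcal{R}}$ unitary; since unitary equivalence preserves the spectrum, it suffices to show $\sigma(\fint_{\mathbb{T}^{*}}^{\oplus}\mathscr{H}(\mathbf{k}))=\bigcup_{\mathbf{k}\in\mathbb{T}^{*}}\sigma(\mathscr{H}(\mathbf{k}))$. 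To work over a fixed domain I would conjugate each fiber by the gauge $e^{-\mathrm{i}\mathbf{k}\cdot\mathbf{r}}$, turning $\mathscr{H}(\mathbf{k})$ into $-\frac{\hbar^{2}}{2m_{e}}(\nabla+\mathrm{i}\mathbf{k})^{2}+V$ acting on the $\mathbf{k}$-independent space of $\mathcal{R}$-periodic $H^{2}$ functions on $\bar{\mathbb{T}}$. This operator depends polynomially, hence continuously, on $\mathbf{k}$ and has compact resolvent, so its spectrum is discrete and given by band functions $\lambda_{1}(\mathbf{k})\le\lambda_{2}(\mathbf{k})\le\cdots$ that are continuous on $\mathbb{T}^{*}$.

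For the inclusion $\bigcup_{\mathbf{k}}\sigma(\mathscr{H}(\mathbf{k}))\subseteq\sigma(\mathscr{H})$ I would argue by Weyl sequences. Fix $\mathbf{k}_{0}$ and $\lambda\in\sigma(\mathscr{H}(\mathbf{k}_{0}))$, and take a normalized Weyl sequence $\{u_{n}\}$ for the fiber at $\mathbf{k}_{0}$, so that $\|(\mathscr{H}(\mathbf{k}_{0})-\lambda)u_{n}\|\to 0$. Using the continuity of the gauged fibers in $\mathbf{k}$, I would smear each $u_{n}$ over a ball $B_{\delta_{n}}(\mathbf{k}_{0})\subset\mathbb{T}^{*}$ of positive measure and sufficiently small radius, producing after normalization an element $\Psi_{n}$ of the direct integral with $\|(\fint_{\mathbb{T}^{*}}^{\oplus}\mathscr{H}(\mathbf{k})-\lambda)\Psi_{n}\|\to 0$; this exhibits $\lambda$ in the spectrum of the direct integral. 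The positive measure of the neighborhood is exactly what makes $\Psi_{n}$ a legitimate nonzero element of the direct integral Hilbert space.

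For the reverse inclusion $\sigma(\mathscr{H})\subseteq\bigcup_{\mathbf{k}}\sigma(\mathscr{H}(\mathbf{k}))$ I would argue by contraposition through a uniform resolvent bound. Suppose $\lambda\notin\bigcup_{\mathbf{k}}\sigma(\mathscr{H}(\mathbf{k}))$. Because the band functions are continuous and $\overline{\mathbb{T}^{*}}$ is compact, each band $\lambda_{n}(\overline{\mathbb{T}^{*}})$ is a compact interval and the union of fiber spectra is a closed subset of $\mathbb{R}$; hence $\delta:=\operatorname{dist}(\lambda,\bigcup_{\mathbf{k}}\sigma(\mathscr{H}(\mathbf{k})))>0$. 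Applying the spectral theorem fiberwise, $(\mathscr{H}(\mathbf{k})-\lambda)^{-1}$ exists and satisfies $\|(\mathscr{H}(\mathbf{k})-\lambda)^{-1}\|\le\delta^{-1}$ uniformly in $\mathbf{k}$, so $\fint_{\mathbb{T}^{*}}^{\oplus}(\mathscr{H}(\mathbf{k})-\lambda)^{-1}$ is a bounded operator inverting $\fint_{\mathbb{T}^{*}}^{\oplus}\mathscr{H}(\mathbf{k})-\lambda$. Therefore $\lambda\in\rho(\mathscr{H})$, which yields the claimed inclusion.

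The main obstacle is the continuity of the fibers in $\mathbf{k}$, as it underlies both directions: it is what lets me smear a single-fiber Weyl sequence into a genuine direct-integral Weyl sequence in the first inclusion, and it is what forces the union of fiber spectra to be closed, so that no closure is needed on the right-hand side, in the second. The clean way around the moving domains $H_{\mathbf{k}}^{2}(\bar{\mathbb{T}})$ is the gauge transform described above, which trades the $\mathbf{k}$-dependence of the domain for an analytic $\mathbf{k}$-dependence of the operator; after that, standard perturbation theory for operators with compact resolvent delivers the continuity of the band functions and the required fiberwise estimates.
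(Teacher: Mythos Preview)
Your proposal is correct and follows the standard argument. Note, however, that the paper does not supply its own proof of this theorem: it is stated as a known result with the remark ``The following spectral decomposition of $\mathscr{H}$ can be founded in \cite{kuchment2016overview, nielsen2020direct}.'' What you have written is essentially the proof one finds in those references (and in Reed--Simon~IV, Theorem~XIII.85, together with the continuity of the band functions): reduce to the direct integral by unitary equivalence, gauge-transform to put the $\mathbf{k}$-dependence into the symbol rather than the domain, and then run the Weyl-sequence/uniform-resolvent argument using compactness of $\overline{\mathbb{T}^{*}}$ and continuity of $\mathbf{k}\mapsto\lambda_{n}(\mathbf{k})$ to avoid having to take a closure on the right-hand side. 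You have correctly isolated the one nontrivial point---that without the continuity of the fibers the general direct-integral theorem only gives an essential closure of $\bigcup_{\mathbf{k}}\sigma(\mathscr{H}(\mathbf{k}))$, and it is the analytic dependence after the gauge transform, together with compact resolvent, that upgrades this to the union itself.
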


\cref{thm:spec_dec} transforms the spectrum of $\mathscr{H}$ with the domain $H^{2}(\mathbb{R}^{d})$ into the spectrum of $\mathscr{H}(\mathbf{k})$ with the domain $H_{\mathbf{k}}^{2}(\bar{\mathbb{T}})$. We see that $\mathscr{H}$ is self-adjoint and bounded below, however, $\mathscr{H}$ has only absolutely continuous spectrum with no eigenvalues \cite{reed1980methods}. $\mathscr{H}(\mathbf{k})$ is also self-adjoint and bounded below, whose spectrum consists of real discrete point spectrum, i.e., 
\begin{equation*}
    \sigma(\mathscr{H}(\mathbf{k}))=\{\lambda_{j}(\mathbf{k})|\lambda_{1}(\mathbf{k})\leqslant \lambda_{2}(\mathbf{k})\leqslant \cdots \leqslant \lambda_{n}(\mathbf{k})\leqslant\cdots\}.
\end{equation*}

Relevant discussions are mentioned in \cite{kuchment2016overview}. We recall the following version of Bloch's theorem and provide a direct proof.

\begin{theorem}\label{thm:thm_bloch}
    For $\lambda\in\sigma(\mathscr{H})$, there exist $\mathbf{k}\in\mathbb{R}^{d}$ and the Bloch function $\Psi$ satisfying
\begin{equation}\label{eq2.1}
    \mathscr{H}\Psi=\lambda\Psi,
\end{equation}
  and
    \begin{equation}\label{eq:thm_bloch}
        \Psi(\mathbf{r}+a_{\gamma})=e^{\mathrm{i}\mathbf{k}\cdot a_{\gamma}}\Psi(\mathbf{r}), \quad a_{\gamma}\in\mathcal{R}, \quad \mathbf{r}\in\mathbb{R}^{d}.
    \end{equation}
\end{theorem}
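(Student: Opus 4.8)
The plan is to read the Bloch function directly off the fiber decomposition supplied by \cref{thm:spec_dec}, so that no new spectral theory is needed. Since $\lambda\in\sigma(\mathscr{H})=\bigcup_{\mathbf{k}\in\mathbb{T}^{*}}\sigma(\mathscr{H}(\mathbf{k}))$, I would first choose a single $\mathbf{k}\in\mathbb{T}^{*}\subset\mathbb{R}^{d}$ with $\lambda\in\sigma(\mathscr{H}(\mathbf{k}))$. Because $\mathscr{H}(\mathbf{k})$ is self-adjoint, bounded below, and has purely discrete point spectrum $\{\lambda_{j}(\mathbf{k})\}$, the value $\lambda$ is genuinely an eigenvalue: there exists $\phi\in H_{\mathbf{k}}^{2}(\bar{\mathbb{T}})$ with $\phi\not\equiv 0$ and $\mathscr{H}(\mathbf{k})\phi=\lambda\phi$ on $\bar{\mathbb{T}}$. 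This is the only place where the distinction between continuous spectrum (for $\mathscr{H}$) and point spectrum (for the fibers $\mathscr{H}(\mathbf{k})$) is exploited, and it is precisely what makes an honest eigenfunction available.

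Next I would extend $\phi$ to a global candidate by invoking the very definition of $H_{\mathbf{k}}^{2}(\bar{\mathbb{T}})$. By that definition, $\phi=\Psi|_{\bar{\mathbb{T}}}$ for some $\Psi\in H_{loc}^{2}(\mathbb{R}^{d})$ satisfying $\Psi(\mathbf{r}+a_{\gamma})=e^{\mathrm{i}\mathbf{k}\cdot a_{\gamma}}\Psi(\mathbf{r})$ a.e.\ for all $a_{\gamma}\in\mathcal{R}$. Thus \cref{eq:thm_bloch} holds for $\Psi$ by construction, and the whole task reduces to promoting the cell identity $\mathscr{H}(\mathbf{k})\phi=\lambda\phi$ to the global identity \cref{eq2.1} on all of $\mathbb{R}^{d}$.

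The key step is a translation argument cell by cell, using that $\mathbb{R}^{d}=\bigcup_{a_{\gamma}\in\mathcal{R}}(\mathbb{T}+a_{\gamma})$. On $\bar{\mathbb{T}}$ the equation $\mathscr{H}\Psi=\lambda\Psi$ is exactly $\mathscr{H}(\mathbf{k})\phi=\lambda\phi$, since $\mathscr{H}(\mathbf{k})=\mathscr{H}|_{H_{\mathbf{k}}^{2}(\bar{\mathbb{T}})}$. For a point $\mathbf{r}+a_{\gamma}$ with $\mathbf{r}\in\mathbb{T}$, I would use translation invariance of $\Delta$ together with the $\mathcal{R}$-periodicity of $V$ and the quasiperiodicity of $\Psi$ to compute $\mathscr{H}\Psi(\mathbf{r}+a_{\gamma})=e^{\mathrm{i}\mathbf{k}\cdot a_{\gamma}}\mathscr{H}\Psi(\mathbf{r})=e^{\mathrm{i}\mathbf{k}\cdot a_{\gamma}}\lambda\Psi(\mathbf{r})=\lambda\Psi(\mathbf{r}+a_{\gamma})$, where the constant phase $e^{\mathrm{i}\mathbf{k}\cdot a_{\gamma}}$ passes through both $\Delta$ and the multiplication by $V$. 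Since the translated cells cover $\mathbb{R}^{d}$, the eigenvalue equation holds a.e., and $\Psi$ is the desired Bloch function with real quasimomentum $\mathbf{k}$.

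The main obstacle I anticipate is not an estimate but the bookkeeping at the cell interfaces $\partial\mathbb{T}$: one must confirm that the locally solved pieces assemble into a genuine $H_{loc}^{2}(\mathbb{R}^{d})$ solution, so that $\Delta\Psi$ is the true distributional Laplacian with no spurious jump contributions across $\partial\mathbb{T}$. This matching of $\Psi$ and its normal derivative is exactly what the requirement $\Psi\in H_{loc}^{2}(\mathbb{R}^{d})$ inside the definition of $H_{\mathbf{k}}^{2}(\bar{\mathbb{T}})$ encodes, so I expect the regularity to be handed to us by that definition rather than requiring a separate gluing lemma; the care lies in stating this cleanly rather than in overcoming a real analytic difficulty.
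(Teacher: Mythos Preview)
Your proposal is correct and follows essentially the same approach as the paper: pick $\mathbf{k}\in\mathbb{T}^{*}$ via \cref{thm:spec_dec}, take an eigenfunction in $H_{\mathbf{k}}^{2}(\bar{\mathbb{T}})$, and propagate the eigenvalue equation to all of $\mathbb{R}^{d}$ by translation. The paper phrases the last step slightly more compactly via the commutation $T_{a_{\gamma}}^{-1}\mathscr{H}T_{a_{\gamma}}=\mathscr{H}$ rather than spelling out that $\Delta$ and $V$ each commute with the phase factor, and it does not pause over the $H_{loc}^{2}$ gluing issue you flag, but the argument is the same.
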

\begin{proof}
Note that there exists $\mathbf{k}\in\mathbb{T}^{*}$ satisfying $\lambda\in\sigma(\mathscr{H}(\mathbf{k}))$ from \cref{thm:spec_dec}. Then there exists $\Psi\in H_{\mathbf{k}}^{2}(\bar{\mathbb{T}})$ such that 
\begin{equation*}
    \mathscr{H}\Psi=\lambda\Psi, \quad \mathbf{r}\in \mathbb{T}. 
\end{equation*}

Next, we show that $\Psi\in H_{\mathbf{k}}^{2}(\bar{\mathbb{T}})$ can be extended to a function over $\mathbb{R}^{d}$ and satisfy \cref{eq:thm_bloch}.

For $\mathbf{r}\in\mathbb{R}^{3}$, there exists $a_{\gamma}\in\mathcal{R}$ such that $\mathbf{r}+a_{\gamma}\in\mathbb{T}$. Then we arrive at 
\begin{equation*}
    \mathscr{H}\Psi(\mathbf{r})=T_{a_{\gamma}}^{-1}\mathscr{H}T_{a_{\gamma}}\Psi(\mathbf{r})=T_{a_{\gamma}}^{-1}\mathscr{H}\Psi(\mathbf{r}+a_{\gamma})=T_{a_{\gamma}}^{-1}\lambda\Psi(\mathbf{r}+a_{\gamma})=\lambda\Psi(\mathbf{r}),
\end{equation*}
which implies \cref{eq:thm_bloch}.
\end{proof}

The solution $\Psi$ satisfying \cref{eq:thm_bloch} is called the Bloch solution. Moreover, $\Psi$ can be reformulated as
\begin{equation*}
    \Psi(\mathbf{r})=e^{\mathrm{i}\mathbf{k}\cdot \mathbf{r}}p(\mathbf{r}), \quad\forall \mathbf{r}\in\mathbb{R}^{d},
\end{equation*}
where $\mathbf{k}\in\mathbb{C}^{d}$ is the quasimomentum and $p$ is $\mathcal{R}$-periodic.

Note that $\mathscr{H}:L^{2}(\mathbb{R}^{d})\rightarrow L^{2}(\mathbb{R}^{d})$ with the domain $H^{2}(\mathbb{R}^{d})$ has only absolutely continuous spectrum with no eigenvalues. We call the solution $\Psi$ of \cref{eq:thm_bloch} the generalized eigenfunction \cite{kuchment2016overview}. We set $\sigma_{g}(\mathscr{H})=\{\lambda\in\mathbb{R}|\mathscr{H}\Psi=\lambda\Psi, \Psi\not\equiv0\}$ and \cref{thm:thm_bloch} shows that $\sigma(\mathscr{H})\subset\sigma_{g}(\mathscr{H})$. It is quite interesting to see whether $\sigma(\mathscr{H})=\sigma_{g}(\mathscr{H})$. If not, what are the characteristics of the set $\sigma_{g}(\mathscr{H})\setminus\sigma(\mathscr{H})$? In the rest of the paper, by studying a class of 1-D Sturm-Liouville problems with periodic coefficients, we will show and classify the solutions of periodic Schrödinger equations in a multidimensional case.

\section{Classfication of solutions} 
Consider the $d$-D Schrödinger equation  with a $\mathcal{R}$-periodic potential: find $(\lambda, \Psi)$ satisfying
\begin{equation}\label{eq:ddddd}
    -\frac{1}{2}\Delta\Psi+V\Psi=\lambda\Psi.
\end{equation}

The classification of solutions of a Schr\"{o}dinger operator with a periodic potential over multidimensions is much more difficult.  Motivated by the approach of separation of variables for solving partial differential equations, we investigate a special case when variables of solutions can be separated directly or with some linear mapping in multidimensions.
At first, we study 1-D Sturm-Liouville problems with periodic coefficients that will be used.

\subsection{1-D Strum-Liouville problem}

For the one-dimensional case, we set $\mathcal{R}=\gamma\mathbb{Z}$ for some $\gamma\in\mathbb{R}$. The Sturm-Liouville problem reads: find $(\lambda, \psi)$ satisfying 
\begin{equation}\label{eq: sl}
    \mathscr{L}\psi:=-\frac{d^{2}\psi}{dx^{2}}+W\frac{d\psi}{dx}+V\psi=\lambda\psi, \lambda\in\mathbb{R},
\end{equation}
where $W(x)$ and $V(x)$ are $\mathcal{R}$-periodic, i.e.,  $W(x)=W(x+\gamma)$, $V(x)=V(x+\gamma)$. $V$ and $W$ can be complex valued.

We assume $W, V\in C_{p}^{0}(\mathbb{R})$ (see \cref{def:pc} of $C_{p}^{l}(\mathbb{R}), C_{p,\gamma}^{l}(\mathbb{R}) (l\geqslant0)$ in \ref{appdix:class-eigenfunc}). Since we discuss solutions of the Sturm-Liouville problem in the classical sense, the search for solutions is primarily restricted to the class $C_{p}^{2}(\mathbb{R})$. 

Next, we classify the solutions of the Sturm-Liuville problem. We mention that a similar result has already been presented in \cite{eastham1973spectral} when $W$ and $V$ are real valued. In comparison, our classification is  from another perspective based on the spectrum and quasimomenta and is set up to the case when $W$ and $V$ are complex valued that is applied in our discussion on the multidimensional  problem.. We provide comprehensive and self-contained proofs in \ref{appdix:class-eigenfunc}.

For $k\in\mathbb{C}$, we define $[k]_{\gamma}:=\{k+\frac{2m\pi}{\gamma}, m\in\mathbb{Z}\}$. 

\begin{theorem}\label{prop:calss}
Corresponding to the eigenvalue $\lambda$, the solutions in $C_{p}^{2}(\mathbb{R})$ of \cref{eq: sl} with $W, V\in C_{p, \gamma}^{0}(\mathbb{R})$ are of one of the following forms:
\begin{enumerate}
    \item $e^{\mathrm{i}k_{0}x}p(x)$, $k_{0}\in\mathbb{C}$, $p\in C_{p, \gamma}^{2}(\mathbb{R})$;
    \item $e^{\mathrm{i}k_{0}x}(p_{1}(x)+xp_{2}(x))$, $k_{0}\in\mathbb{C}$, $p_{1}, p_{2}\in C_{p, \gamma}^{2}(\mathbb{R})$, $p_{2}\not\equiv0$;
    \item $e^{\mathrm{i}k_{1}x}p_{1}(x)+e^{\mathrm{i}k_{2}x}p_{2}(x)$, $k_{1}, k_{2}\in\mathbb{C}, [k_{1}]_{\gamma}\neq [k_{2}]_{\gamma}$, $p_{1},p_{2}\in C_{p, \gamma}^{2}(\mathbb{R})$.  
\end{enumerate}
\end{theorem}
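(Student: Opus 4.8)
The plan is to treat \cref{eq: sl} as a second-order linear ordinary differential equation with $\gamma$-periodic coefficients and to carry out a Floquet-type analysis. First I would rewrite the equation in the form $\psi'' = W\psi' + (V-\lambda)\psi$ and observe that, since $W, V\in C_{p,\gamma}^{0}(\mathbb{R})$ are continuous, standard ODE theory guarantees that the space $S$ of classical ($C^{2}$) solutions is two-dimensional; fixing a fundamental system $\{\psi_{1},\psi_{2}\}$ by prescribing initial data at $x=0$ identifies $S$ with $\mathbb{C}^{2}$. Because $W$ and $V$ are $\gamma$-periodic, whenever $\psi(x)$ is a solution so is $\psi(x+\gamma)$; hence the shift $T_{\gamma}:\psi(x)\mapsto\psi(x+\gamma)$ is a linear endomorphism of $S$, represented in the chosen basis by a monodromy matrix $M\in\mathbb{C}^{2\times2}$.

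The next step is to record that $M$ is invertible. By Abel's identity the Wronskian $\mathcal{W}$ of $\{\psi_{1},\psi_{2}\}$ satisfies $\mathcal{W}'=W\mathcal{W}$, so $\det M = \mathcal{W}(\gamma)/\mathcal{W}(0)=\exp\bigl(\int_{0}^{\gamma}W\,dx\bigr)\neq 0$. Consequently both Floquet multipliers (the eigenvalues of $M$) are nonzero, and each can be written as $\rho = e^{\mathrm{i}k\gamma}$ for some $k\in\mathbb{C}$ determined modulo $\tfrac{2\pi}{\gamma}\mathbb{Z}$, that is, up to replacing $k$ by another representative of $[k]_{\gamma}$.

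The heart of the proof is a case analysis on the Jordan normal form of $M$. If $M$ has two distinct eigenvalues $\rho_{1}\neq\rho_{2}$, there are independent eigensolutions with $\psi_{j}(x+\gamma)=\rho_{j}\psi_{j}(x)$; setting $p_{j}(x):=e^{-\mathrm{i}k_{j}x}\psi_{j}(x)$ one checks directly that $p_{j}$ is $\gamma$-periodic, whence $\psi_{j}=e^{\mathrm{i}k_{j}x}p_{j}$, and $\rho_{1}\neq\rho_{2}$ forces $[k_{1}]_{\gamma}\neq[k_{2}]_{\gamma}$. A general solution $c_{1}\psi_{1}+c_{2}\psi_{2}$ is then of form 3 when both coefficients are nonzero and of form 1 otherwise. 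If $M$ has a repeated eigenvalue and is diagonalizable (so $M=\rho I$), every solution satisfies $\psi(x+\gamma)=\rho\psi(x)$ and the same computation yields form 1. The remaining possibility, a single Jordan block, is where the polynomial factor appears, and I expect it to be the main obstacle.

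In the Jordan-block case there is an eigensolution $\psi_{1}=e^{\mathrm{i}kx}p_{1}$ and a generalized eigenvector $\psi_{2}$ with $T_{\gamma}\psi_{2}=\rho\psi_{2}+\psi_{1}$, i.e. $\psi_{2}(x+\gamma)=\rho\psi_{2}(x)+\psi_{1}(x)$. Writing $q(x):=e^{-\mathrm{i}kx}\psi_{2}(x)$ and using $e^{-\mathrm{i}k\gamma}\rho=1$, a short computation gives $q(x+\gamma)-q(x)=\rho^{-1}p_{1}(x)$; subtracting the linear term $\tfrac{x}{\gamma}\rho^{-1}p_{1}(x)$ produces a genuinely $\gamma$-periodic remainder $r$, so that $\psi_{2}=e^{\mathrm{i}kx}\bigl(r(x)+x\,p_{2}(x)\bigr)$ with $p_{2}=\tfrac{1}{\gamma}\rho^{-1}p_{1}\not\equiv 0$ (since $\psi_{1}\not\equiv 0$ forces $p_{1}\not\equiv 0$). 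This gives form 2, and forming linear combinations shows every solution falls into exactly one of the three classes. Throughout, the $C_{p,\gamma}^{2}(\mathbb{R})$ regularity of $p$, $p_{1}$, $p_{2}$, $r$ is automatic, as each is the product of the smooth factor $e^{-\mathrm{i}kx}$ with a $C^{2}$ solution. The delicate points to verify are the nonvanishing of $\rho$ (so that $k$ exists), the passage from $\rho_{1}\neq\rho_{2}$ to $[k_{1}]_{\gamma}\neq[k_{2}]_{\gamma}$, and the clean extraction of the secular term in the Jordan case guaranteeing $p_{2}\not\equiv 0$.
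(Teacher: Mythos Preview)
Your argument is correct and is, at bottom, the same Floquet-theoretic classification the paper uses, but the presentations diverge in a way worth noting. The paper converts \cref{eq: sl} to the first-order system $U'=\begin{pmatrix}0&1\\V-\lambda&W\end{pmatrix}U$, invokes the Floquet--Lyapunov factorization $X(x)=P(x)e^{Cx}$ from \cite{Yakubovich1975linear} as a black box, and then reads off the three forms by listing the Jordan normal forms of the \emph{exponent} matrix $C$ and computing $e^{J_{j}x}$ explicitly. You instead work directly on the two-dimensional solution space with the shift operator $T_{\gamma}$, analyze the Jordan form of the monodromy matrix $M$ itself, and build the periodic factors $p$, $p_{1}$, $p_{2}$, $r$ by hand from the eigenvector and generalized-eigenvector relations. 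Your route is more elementary and self-contained (it avoids quoting the Floquet--Lyapunov theorem and replaces it with a short explicit construction, including the clean extraction of the secular term in the Jordan-block case); the paper's route is more systematic and generalizes immediately to higher-order periodic systems.

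One small correction: you write that $W,V\in C_{p,\gamma}^{0}(\mathbb{R})$ ``are continuous,'' but in the paper's \cref{def:pc} this class consists of integrable \emph{piecewise}-continuous periodic functions. This does not damage your argument---the existence/uniqueness theory for linear systems with locally integrable coefficients still yields a two-dimensional solution space of $C^{1}$ functions whose second derivatives lie in $C_{p}^{0}$, exactly matching the $C_{p}^{2}(\mathbb{R})$ setting---but you should phrase the opening step accordingly rather than appealing to continuity.
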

\begin{definition}
Define the map $\mathcal{A}$ on $\mathbb{R}$ by that $\mathcal{A}(\lambda)$ is the set of the congruence classes modulo $\frac{2\pi}{\gamma}$ of quasimomenta of the solutions of \cref{eq: sl} corresponding to  $\lambda$:
\begin{equation*}
    \mathcal{A}:\mathbb{R}\rightarrow2^{\mathbb{C}/\frac{2\pi}{\gamma}\mathbb{Z}},
\end{equation*}
where $2^{\mathbb{C}/\frac{2\pi}{\gamma}\mathbb{Z}}$ is the power set of the quotient space $\mathbb{C}/\frac{2\pi}{\gamma}\mathbb{Z}$.  
\end{definition}

With the definition, we have the following conclusion.
\begin{theorem}\label{thm4.1}
For $\lambda\in\mathbb{R}$, there holds that  $|\mathcal{A}(\lambda)|\leqslant2$. Suppose $\psi\in C_{p}^{2}(\mathbb{R})$ is a solution of \cref{eq: sl} with $W, V\in C_{p, \gamma}^{0}(\mathbb{R})$. Then $\psi$ is of one of the following forms: 
\begin{itemize}
    \item if $|\mathcal{A}(\lambda)|=1$ and $\mathcal{A}(\lambda)=\{[k_{0}]_{\gamma}\}, k_{0}\in\mathbb{C}$, then  
    \begin{equation}\label{item:eig-bloch}
    \psi=e^{\mathrm{i}k_{0}x}p(x), \quad p\in C_{p, \gamma}^{2}(\mathbb{R}),
\end{equation}
or
\begin{equation}\label{item:eig-growth}
    \psi=e^{\mathrm{i}k_{0}x}(p_{1}(x)+xp_{2}(x)), p_{2}\not\equiv0, \quad p_{1},\,p_{2}\in C_{p, \gamma}^{2}(\mathbb{R}), \,p_{2}\not\equiv0;
\end{equation}
\item if $|\mathcal{A}(\lambda)|=2$ and $\mathcal{A}(\lambda)=\{[k_{1}]_{\gamma}, [k_{2}]_{\gamma}\}, k_{1}, k_{2}\in\mathbb{C}$, then 
\begin{equation}\label{item:eig-comb-bloch}
   \psi=e^{\mathrm{i}k_{1}x}p_{1}(x)+e^{\mathrm{i}k_{2}x}p_{2}(x),\quad p_{1},\,p_{2}\in C_{p, \gamma}^{2}(\mathbb{R}).
\end{equation}
\end{itemize}
\end{theorem}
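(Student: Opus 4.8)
The plan is to study the two–dimensional solution space of \eqref{eq: sl} through the period–shift (monodromy) operator it carries, and to read off both $|\mathcal{A}(\lambda)|$ and the form of each solution from the spectral data of that operator. Fix $\lambda\in\mathbb{R}$ and let $E_{\lambda}$ denote the set of all solutions of \eqref{eq: sl}. Since \eqref{eq: sl} is a second–order linear ODE with continuous coefficients, $E_{\lambda}$ is a two–dimensional vector space over $\mathbb{C}$. Because $W$ and $V$ are $\gamma$-periodic, the shift $S\colon\psi\mapsto\psi(\cdot+\gamma)$ commutes with $\mathscr{L}$ and hence restricts to an invertible linear operator $S\colon E_{\lambda}\to E_{\lambda}$; indeed $\mathscr{L}(S\psi)(x)=\mathscr{L}\psi(x+\gamma)=\lambda\psi(x+\gamma)=\lambda(S\psi)(x)$. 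As $\dim E_{\lambda}=2$, the characteristic polynomial of $S$ has degree two, so $S$ has either one or two distinct eigenvalues.

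The core of the argument is a dictionary identifying quasimomentum classes with eigenvalues of $S$. First, the map $[k]_{\gamma}\mapsto e^{\mathrm{i}k\gamma}$ is a well-defined bijection from $\mathbb{C}/\frac{2\pi}{\gamma}\mathbb{Z}$ onto $\mathbb{C}\setminus\{0\}$, and a function $e^{\mathrm{i}k_{0}x}p(x)$ with $p\in C_{p,\gamma}^{2}(\mathbb{R})$ satisfies $S(e^{\mathrm{i}k_{0}x}p)=e^{\mathrm{i}k_{0}\gamma}(e^{\mathrm{i}k_{0}x}p)$, so the form-\eqref{item:eig-bloch} solutions with quasimomentum $k_{0}$ are exactly the eigenvectors of $S$ for the eigenvalue $e^{\mathrm{i}k_{0}\gamma}$. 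I would then verify that every quasimomentum produced by \cref{prop:calss} arises this way. For a form-\eqref{item:eig-growth} solution $\psi=e^{\mathrm{i}k_{0}x}(p_{1}+xp_{2})$, periodicity of $p_{1},p_{2}$ yields $(S-e^{\mathrm{i}k_{0}\gamma})\psi=\gamma e^{\mathrm{i}k_{0}\gamma}e^{\mathrm{i}k_{0}x}p_{2}$; since $S$ preserves $E_{\lambda}$ and $p_{2}\not\equiv0$, the right-hand side is a nonzero element of $E_{\lambda}$, so $e^{\mathrm{i}k_{0}x}p_{2}$ is an eigenvector of $S$ with eigenvalue $e^{\mathrm{i}k_{0}\gamma}$. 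For a form-\eqref{item:eig-comb-bloch} solution $\psi=e^{\mathrm{i}k_{1}x}p_{1}+e^{\mathrm{i}k_{2}x}p_{2}$ with $[k_{1}]_{\gamma}\neq[k_{2}]_{\gamma}$, the multipliers $\rho_{j}=e^{\mathrm{i}k_{j}\gamma}$ are distinct, so the summands are recovered as $\frac{S\psi-\rho_{2}\psi}{\rho_{1}-\rho_{2}}$ and $\frac{\rho_{1}\psi-S\psi}{\rho_{1}-\rho_{2}}$, both in $E_{\lambda}$, and hence are eigenvectors of $S$ with the two distinct eigenvalues. Thus every class in $\mathcal{A}(\lambda)$ is the image of an eigenvalue of $S$, while conversely each eigenvalue yields an eigenvector, i.e. a form-\eqref{item:eig-bloch} solution contributing its class. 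Since $S$ has at most two distinct eigenvalues, $|\mathcal{A}(\lambda)|\leqslant2$.

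The classification then follows by splitting on the Jordan structure of $S$. If $S$ has two distinct eigenvalues $\rho_{1}\neq\rho_{2}$, it is diagonalizable with eigenvectors $e^{\mathrm{i}k_{1}x}p_{1}$, $e^{\mathrm{i}k_{2}x}p_{2}$, so $\mathcal{A}(\lambda)=\{[k_{1}]_{\gamma},[k_{2}]_{\gamma}\}$ has two elements and every $\psi\in E_{\lambda}$ is a combination of the two eigenvectors, i.e. of form \eqref{item:eig-comb-bloch} (a coefficient may vanish). If $S$ has a single eigenvalue $\rho=e^{\mathrm{i}k_{0}\gamma}$, then $\mathcal{A}(\lambda)=\{[k_{0}]_{\gamma}\}$ is a singleton; when $S=\rho\,\mathrm{id}$ every solution is an eigenvector and so is of form \eqref{item:eig-bloch}, whereas when $S$ is a single Jordan block the solutions in the one-dimensional eigenspace are of form \eqref{item:eig-bloch} and all others, being generalized eigenvectors, are of form \eqref{item:eig-growth}. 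This reproduces the asserted classification.

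I expect the main obstacle to be the second step, namely making the correspondence between quasimomenta and eigenvalues of $S$ fully rigorous in the form-\eqref{item:eig-growth} and form-\eqref{item:eig-comb-bloch} cases: one must confirm that the extracted constituents ($e^{\mathrm{i}k_{0}x}p_{2}$, respectively $e^{\mathrm{i}k_{j}x}p_{j}$) are genuine elements of $E_{\lambda}$ rather than merely formal pieces. The clean device, which avoids any substitution back into \eqref{eq: sl}, is that $S$ maps $E_{\lambda}$ into itself, so any quantity obtained from $\psi$ and $S\psi$ by linear operations is automatically a solution; the distinctness $\rho_{1}\neq\rho_{2}$, equivalent to $[k_{1}]_{\gamma}\neq[k_{2}]_{\gamma}$, is precisely what renders the associated linear system invertible.
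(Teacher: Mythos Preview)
Your proposal is correct and rests on the same Floquet/monodromy idea as the paper: the solution space is two–dimensional, the period shift acts on it as a $2\times 2$ operator, and the three forms correspond to the three Jordan types of that operator; the bound $|\mathcal{A}(\lambda)|\leqslant 2$ then comes from the fact that a $2\times 2$ matrix has at most two eigenvalues.

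The only difference is packaging. The paper (in its proof of \cref{prop:calss}, from which \cref{thm4.1} is read off) rewrites \eqref{eq: sl} as the first–order system $U'=\begin{pmatrix}0&1\\V-\lambda&W\end{pmatrix}U$, invokes the Floquet--Lyapunov lemma to factor the fundamental matrix as $X(x)=P(x)e^{Cx}$, and then case–splits on the Jordan form of the constant matrix $C$. You instead let the shift $S$ act directly on the abstract solution space $E_{\lambda}$ and case–split on the Jordan form of $S$. Your route is a bit cleaner because it never names the first–order system or the fundamental matrix, and the verification that the pieces $e^{\mathrm{i}k_{0}x}p_{2}$ or $e^{\mathrm{i}k_{j}x}p_{j}$ lie in $E_{\lambda}$ is immediate from $S(E_{\lambda})\subset E_{\lambda}$, whereas the paper checks this by substituting back into \eqref{eq: sl} (see the proof of \cref{thm4.3}). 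The paper's route, on the other hand, gives the explicit factorization $X(x)=P(x)e^{Cx}$ and the monodromy matrix $M=e^{C\gamma}$, which it reuses in \cref{cor4.8} via Liouville's formula.
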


\cref{thm4.1} shows that in the classical sense, any solution of the 1-D Strum-Liouville problem with $\mathcal{R}$-periodic coefficients $W, V\in C_{p}^0(\mathbb{R})$ is of one of the above three forms. As the coefficients $W, V$ and $\lambda$ vary, the solutions of \cref{eq: sl}
may be of different forms and not all be Bloch solutions. The forthcoming discussion will show that the solutions of the form \cref{item:eig-comb-bloch} indeed are the linear combination of Bloch solutions of \cref{eq: sl} corresponding to the same $\lambda$.

We set 
\begin{equation*}
\begin{array}{rcl}
    \sigma_{g}^{c}(\mathscr{L}):&=&\{\lambda\in\mathbb{R}\,|\,\mathscr{L}\psi=\lambda\psi, \psi\in C_{p}^{2}(\mathbb{R}), \psi\not\equiv0\},\\
    \sigma_{g}^{1}(\mathscr{L}):&=&\{\lambda\in\mathbb{R}\,|\,\mathscr{L}\psi=\lambda\psi, \psi\textup{ is of form \cref{item:eig-bloch} in \cref{thm4.1}}\},\\
    \sigma_{g}^{2}(\mathscr{L}):&=&\{\lambda\in\mathbb{R}\,|\,\mathscr{L}\psi=\lambda\psi,  \psi\textup{ is of form \cref{item:eig-growth} in \cref{thm4.1}}\},\\
    \sigma_{g}^{3}(\mathscr{L}):&=&\{\lambda\in\mathbb{R}\,|\,\mathscr{L}\psi=\lambda\psi, \psi\textup{ is of form \cref{item:eig-comb-bloch} in \cref{thm4.1}}\}.
\end{array}
\end{equation*}
Then, \cref{thm4.1} implies that
\begin{equation*}
    \sigma_{g}^{c}(\mathscr{L})=\sigma_{g}^{1}(\mathscr{L})\oplus\sigma_{g}^{2}(\mathscr{L})\oplus\sigma_{g}^{3}(\mathscr{L}). 
\end{equation*}
Moreover, we obtain that 
\begin{equation*}
    |\mathcal{A}(\lambda)|=\left\{\begin{array}{ll}
         1, \quad\lambda\in\sigma_{g}^{1}(\mathscr{L})\cup\sigma_{g}^{2}(\mathscr{L}),\\
         2, \quad\lambda\in\sigma_{g}^{3}(\mathscr{L}).\\
    \end{array}\right.
\end{equation*}

\begin{definition}
Given $k\in\mathbb{C}$ and 
the coefficients  $W, V\in C_{p, \gamma}^0(\mathbb{R})$, the differential operator $\mathcal{D}^{k}$ is defined as \begin{displaymath}
  \mathcal{D}^{k}:=-\frac{d^{2}}{dx^{2}}+(W-2\mathrm{i}k)\frac{d}{dx}+\mathrm{i}kW+V+k^{2}.
\end{displaymath}
\end{definition}

\begin{theorem}\label{thm4.3}
Let $\mathcal{E}(\lambda):=\{\psi|\mathscr{L}\psi=\lambda\psi, \psi\in C_{p}^{2}(\mathbb{R})\}$, then 
\begin{itemize}
    \item when $\lambda\in\sigma_{g}^{1}(\mathscr{L})$, 
    \begin{equation*}
         \mathcal{E}(\lambda)=\{e^{\mathrm{i}k_{0}x}p(x)\,|\, \mathcal{D}^{k_{0}}p=\lambda p \textup{ with } \mathcal{A}(\lambda)=\{[k_{0}]_{\gamma}\}, p\in C_{p, \gamma}^{2}(\mathbb{R})\};
    \end{equation*}
    \item when $\lambda\in\sigma_{g}^{2}(\mathscr{L})$,
    \begin{equation*}
    \begin{array}{rcl}
         \mathcal{E}(\lambda)&=&	\{e^{\mathrm{i}k_{0}x}(p_{1}(x)+xp_{2}(x))\,|\,\mathcal{D}^{k_{0}}p_{2}=\lambda p_{2},\\ &&\mathcal{D}^{k_{0}}p_{1}=\lambda p_{1}+(2\mathrm{i}k_{0}-W)p_{2}+2p_{2}', p_{1}, p_{2}\in C_{p, \gamma}^{2}(\mathbb{R})\};
    \end{array}
    \end{equation*}
    \item when $\lambda\in\sigma_{g}^{3}(\mathscr{L})$,
    \begin{equation*}
        \mathcal{E}(\lambda)=\operatorname{span}\{e^{\mathrm{i}kx}p(x)\,|\,\mathcal{D}^{k}p=\lambda p \textup{ with } [k]_\gamma\in\mathcal{A}(\lambda) \textup{ and } |\mathcal{A}(\lambda)|=2, p\in C_{p, \gamma}^{2}(\mathbb{R})\}.
    \end{equation*}
\end{itemize}
\end{theorem}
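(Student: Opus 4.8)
The plan is to reduce all three cases to a single intertwining identity between $\mathscr{L}$ and the operators $\mathcal{D}^{k}$. First I would verify by direct differentiation that for every $k\in\mathbb{C}$ and every sufficiently smooth $p$,
\[
\mathscr{L}\bigl(e^{\mathrm{i}kx}p\bigr)=e^{\mathrm{i}kx}\,\mathcal{D}^{k}p .
\]
Substituting $\psi=e^{\mathrm{i}kx}p$, $\psi'=e^{\mathrm{i}kx}(\mathrm{i}kp+p')$ and $\psi''=e^{\mathrm{i}kx}(-k^{2}p+2\mathrm{i}kp'+p'')$ into $\mathscr{L}\psi=-\psi''+W\psi'+V\psi$ and collecting terms reproduces exactly the coefficients $-\frac{d^{2}}{dx^{2}}+(W-2\mathrm{i}k)\frac{d}{dx}+(\mathrm{i}kW+V+k^{2})$ that define $\mathcal{D}^{k}$. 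This identity immediately gives the ``$\supseteq$'' inclusion in all three cases: whenever $\mathcal{D}^{k}p=\lambda p$, the function $e^{\mathrm{i}kx}p$ lies in $\mathcal{E}(\lambda)$, and since $\mathcal{E}(\lambda)$ is a linear space, the span of such functions is contained in it.

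For the ``$\subseteq$'' inclusions I would invoke \cref{thm4.1} to fix the shape of an arbitrary solution and then strip off the pieces. When $\lambda\in\sigma_{g}^{1}(\mathscr{L})$, every $\psi\in\mathcal{E}(\lambda)$ equals $e^{\mathrm{i}k_{0}x}p$ with $p\in C_{p,\gamma}^{2}(\mathbb{R})$, and the identity forces $\mathcal{D}^{k_{0}}p=\lambda p$, which is the claim. The only genuinely computational case is $\lambda\in\sigma_{g}^{2}(\mathscr{L})$: here I would first compute $\mathcal{D}^{k_{0}}(xp_{2})=x\,\mathcal{D}^{k_{0}}p_{2}+(W-2\mathrm{i}k_{0})p_{2}-2p_{2}'$, so that applying the identity to $\psi=e^{\mathrm{i}k_{0}x}(p_{1}+xp_{2})$ and imposing $\mathscr{L}\psi=\lambda\psi$ yields
\[
\bigl(\mathcal{D}^{k_{0}}p_{1}-\lambda p_{1}+(W-2\mathrm{i}k_{0})p_{2}-2p_{2}'\bigr)+x\bigl(\mathcal{D}^{k_{0}}p_{2}-\lambda p_{2}\bigr)=0 .
\]
Since the two bracketed functions are $\gamma$-periodic, hence bounded, the coefficient of $x$ must vanish identically (otherwise the left-hand side is unbounded along the lattice $x_{0}+n\gamma$), forcing $\mathcal{D}^{k_{0}}p_{2}=\lambda p_{2}$ and then $\mathcal{D}^{k_{0}}p_{1}=\lambda p_{1}+(2\mathrm{i}k_{0}-W)p_{2}+2p_{2}'$, exactly the stated relations.

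For $\lambda\in\sigma_{g}^{3}(\mathscr{L})$ an arbitrary solution has, by \cref{thm4.1}, the form $e^{\mathrm{i}k_{1}x}p_{1}+e^{\mathrm{i}k_{2}x}p_{2}$ with $[k_{1}]_{\gamma}\neq[k_{2}]_{\gamma}$; applying the identity gives $e^{\mathrm{i}k_{1}x}g_{1}+e^{\mathrm{i}k_{2}x}g_{2}=0$, where $g_{j}:=\mathcal{D}^{k_{j}}p_{j}-\lambda p_{j}$ is $\gamma$-periodic. To separate the two pieces I would translate by $\gamma$ and combine the resulting relation with the original one into a $2\times 2$ linear system for $(e^{\mathrm{i}k_{1}x}g_{1},e^{\mathrm{i}k_{2}x}g_{2})$ whose determinant is $e^{\mathrm{i}k_{2}\gamma}-e^{\mathrm{i}k_{1}\gamma}$. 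The hypothesis $[k_{1}]_{\gamma}\neq[k_{2}]_{\gamma}$ means precisely $e^{\mathrm{i}k_{1}\gamma}\neq e^{\mathrm{i}k_{2}\gamma}$, so the determinant is nonzero and $g_{1}\equiv g_{2}\equiv 0$; each Bloch piece therefore separately satisfies $\mathcal{D}^{k_{j}}p_{j}=\lambda p_{j}$, so $\psi$ lies in the claimed span.

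I expect the main obstacle to be the two separation arguments rather than the intertwining computation: the vanishing of the coefficient of $x$ in the $\sigma_{g}^{2}(\mathscr{L})$ case and the determinant/translation separation in the $\sigma_{g}^{3}(\mathscr{L})$ case are where the periodicity of the $C_{p,\gamma}^{2}(\mathbb{R})$ factors is actually exploited and where the distinctness of the quasimomentum classes modulo $\tfrac{2\pi}{\gamma}$ enters. Once the identity $\mathscr{L}(e^{\mathrm{i}kx}p)=e^{\mathrm{i}kx}\mathcal{D}^{k}p$ is in hand, the forward inclusions and the bookkeeping of $\mathcal{D}^{k_{0}}(xp_{2})$ are routine.
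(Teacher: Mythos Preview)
Your proposal is correct and follows essentially the same route as the paper: substitute the three forms from \cref{thm4.1} into $\mathscr{L}\psi=\lambda\psi$ (your intertwining identity is just a clean packaging of this substitution), then exploit $\gamma$-periodicity of the coefficients and of $p,p_{1},p_{2}$ to separate terms. The paper carries out the separation by the shift-and-subtract trick $x\mapsto x+\gamma$ (resp.\ $x\mapsto x+l\gamma$) in the $\sigma_{g}^{2}$ and $\sigma_{g}^{3}$ cases, which is exactly what underlies your ``coefficient of $x$ must vanish'' and $2\times2$ determinant arguments.
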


Hence, we demonstrate that for \cref{item:eig-comb-bloch}, $p_{1}$ and $p_{2}$ serve as solutions of $\mathcal{D}^{k_{1}}$ and $\mathcal{D}^{k_{2}}$, respectively. As a result, it follows that $e^{\mathrm{i}k_{1}x}p_{1}(x)$ and $e^{\mathrm{i}k_{2}x}p_{2}(x)$ are both Bloch solutions of equation \cref{eq: sl} corresponding to the same $\lambda$, which implies that when $\lambda\in\sigma_{g}^{3}(\mathscr{L})$, $\mathcal{E}(\lambda)$ is indeed spanned by Bloch solutions corresponding to the same $\lambda$ whose quasimomenta are in two different  congruence classes modulo $\frac{2\pi}{\gamma}.$

The following theorem facilitates the computation of quasimomenta for Schr{\"o}dinger equations with the proof in \ref{appdix:class-eigenfunc}.

\begin{theorem}\label{cor4.8}
Suppose that the $\mathcal{R}$-periodic coefficient $W$ is continuous. 
If $\lambda\in\sigma_{g}^{1}(\mathscr{L})\cup\sigma_{g}^{2}(\mathscr{H})$, then 
\begin{equation*}
    \mathcal{A}(\lambda)=\{[\frac{1}{2\mathrm{i}}\bar{W}]_{\gamma}\},\quad \textup{or}\quad \mathcal{A}(\lambda)=\{[\frac{1}{2\mathrm{i}}\bar{W}+\frac{\pi}{\gamma}]_{\gamma}\}
\end{equation*}
 If $\lambda\in\sigma_{g}^{3}(\mathscr{L})$, then
 \begin{equation*}
     \mathcal{A}(\lambda)=\{[k_{1}]_{\gamma}, [k_{2}]_{\gamma}\},\quad [k_{1}+k_{2}]_{\gamma}=[\frac{1}{\mathrm{i}}\bar{W}]_{\gamma},
 \end{equation*}
where $\bar{W}=\frac{1}{\gamma}\int_0^{\gamma} W(s)ds$.
\end{theorem}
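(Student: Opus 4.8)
The plan is to reduce the computation of the quasimomenta to the classical Floquet theory of the second-order equation and to read off the product of the Floquet multipliers from Abel's identity. Rewriting $\mathscr{L}\psi=\lambda\psi$ in \cref{eq: sl} as the homogeneous linear equation
\[
  \psi''-W\psi'+(\lambda-V)\psi=0,
\]
I would fix the fundamental system $\psi_{1},\psi_{2}$ determined by $\psi_{1}(0)=1,\psi_{1}'(0)=0$ and $\psi_{2}(0)=0,\psi_{2}'(0)=1$, and form the monodromy matrix $M=\bigl(\begin{smallmatrix}\psi_{1}(\gamma)&\psi_{2}(\gamma)\\ \psi_{1}'(\gamma)&\psi_{2}'(\gamma)\end{smallmatrix}\bigr)$. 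The central observation is that a nonzero solution with $\psi(x+\gamma)=\rho\,\psi(x)$ exists exactly when $\rho$ is an eigenvalue (Floquet multiplier) of $M$, and that a Bloch solution $e^{\mathrm{i}kx}p(x)$ with $p\in C_{p,\gamma}^{2}(\mathbb{R})$ has multiplier $\rho=e^{\mathrm{i}k\gamma}$. Since $e^{\mathrm{i}(k+2m\pi/\gamma)\gamma}=e^{\mathrm{i}k\gamma}$, the multiplier depends only on the congruence class $[k]_{\gamma}$, so the correspondence $[k]_{\gamma}\leftrightarrow e^{\mathrm{i}k\gamma}$ is a well-defined bijection between $\mathbb{C}/\tfrac{2\pi}{\gamma}\mathbb{Z}$ and the nonzero multipliers. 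This is precisely what turns the classification of \cref{thm4.1} into a statement about the two eigenvalues of $M$.

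Next I would compute $\det M$ by Abel's formula. Because the coefficient of $\psi'$ in the rewritten equation is $-W$, the Wronskian $\mathcal{W}(x):=\psi_{1}(x)\psi_{2}'(x)-\psi_{1}'(x)\psi_{2}(x)$ satisfies $\mathcal{W}'(x)=W(x)\mathcal{W}(x)$, whence $\mathcal{W}(x)=\mathcal{W}(0)\exp\!\bigl(\int_{0}^{x}W(s)\,ds\bigr)$. Since $\mathcal{W}(0)=1$ and $\det M=\mathcal{W}(\gamma)$, the two Floquet multipliers $\rho_{1},\rho_{2}$ satisfy
\[
  \rho_{1}\rho_{2}=\det M=\exp\!\Bigl(\int_{0}^{\gamma}W(s)\,ds\Bigr)=e^{\gamma\bar{W}},
\]
where $\bar{W}=\tfrac{1}{\gamma}\int_{0}^{\gamma}W(s)\,ds$ is the mean value (not a complex conjugate, as $W$ may be complex). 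Continuity of $W$ guarantees the integrability needed for Abel's identity and for the existence of the fundamental system. This single identity carries essentially all the arithmetic content of the theorem.

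It then remains to distribute the identity across the two cases of \cref{thm4.1}. When $\lambda\in\sigma_{g}^{3}(\mathscr{L})$ we have $|\mathcal{A}(\lambda)|=2$, so the multipliers are $\rho_{j}=e^{\mathrm{i}k_{j}\gamma}$ with $[k_{1}]_{\gamma}\neq[k_{2}]_{\gamma}$; taking logarithms in $\rho_{1}\rho_{2}=e^{\gamma\bar{W}}$ gives $\mathrm{i}(k_{1}+k_{2})\gamma\in\gamma\bar{W}+2\pi\mathrm{i}\,\mathbb{Z}$, i.e.\ $[k_{1}+k_{2}]_{\gamma}=[\tfrac{1}{\mathrm{i}}\bar{W}]_{\gamma}$, as claimed. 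When $\lambda\in\sigma_{g}^{1}(\mathscr{L})\cup\sigma_{g}^{2}(\mathscr{L})$ we have $|\mathcal{A}(\lambda)|=1$, so both forms \cref{item:eig-bloch} and \cref{item:eig-growth} force a single multiplier $\rho=e^{\mathrm{i}k_{0}\gamma}$ of algebraic multiplicity two; hence $\rho^{2}=e^{\gamma\bar{W}}$, giving $2\mathrm{i}k_{0}\gamma\in\gamma\bar{W}+2\pi\mathrm{i}\,\mathbb{Z}$, that is $k_{0}\in\tfrac{1}{2\mathrm{i}}\bar{W}+\tfrac{\pi}{\gamma}\mathbb{Z}$. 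Reducing $\tfrac{\pi}{\gamma}\mathbb{Z}$ modulo $\tfrac{2\pi}{\gamma}$ leaves exactly the two residues $0$ and $\tfrac{\pi}{\gamma}$, which yields the two stated alternatives for $\mathcal{A}(\lambda)$.

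The step I expect to require the most care is the bookkeeping in the degenerate case: I must verify that $|\mathcal{A}(\lambda)|=1$ genuinely corresponds to a repeated Floquet multiplier, and not to two distinct multipliers that happen to lie in the same class modulo $\tfrac{2\pi}{\gamma}$. These are reconciled by the bijection above, since distinct multipliers would give $[k_{1}]_{\gamma}\neq[k_{2}]_{\gamma}$ and hence $|\mathcal{A}(\lambda)|=2$; thus $|\mathcal{A}(\lambda)|=1$ indeed forces $\rho_{1}=\rho_{2}$ and the relation $\rho^{2}=e^{\gamma\bar{W}}$. Once this equivalence between the cardinality of $\mathcal{A}(\lambda)$ and the eigenvalue structure of $M$ is secured—which is exactly the content already encoded in \cref{prop:calss} and \cref{thm4.1}—the remaining manipulations with logarithms and residues modulo $\tfrac{2\pi}{\gamma}$ are routine.
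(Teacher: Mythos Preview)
Your proposal is correct and follows essentially the same route as the paper: compute $\det M$ via Abel/Liouville (the paper writes the equation as a first-order system and applies the Liouville formula to get $\det M=e^{\gamma\bar W}$, which is exactly your Wronskian computation), then read off the constraint on the quasimomenta from the Jordan structure of $M$ in the two cases $|\mathcal{A}(\lambda)|=1$ and $|\mathcal{A}(\lambda)|=2$. Your explicit check that $|\mathcal{A}(\lambda)|=1$ forces a repeated multiplier is a point the paper leaves implicit by referring back to the Jordan-form analysis in the proof of \cref{prop:calss}.
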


We summarize \cref{prop:calss}, \cref{thm4.1}, \cref{thm4.3} and \cref{cor4.8}, as follows.

Solutions of the 1-D Strum-Liouville problem \cref{eq: sl} with  complex valued $W, V\in C_{p, \gamma}^{0}(\mathbb{R})$ are of one of the following forms:
\begin{itemize}
    \item form \cref{item:eig-bloch}, given by $\psi(x)=e^{\mathrm{i}\frac{n\pi}{\gamma}x}p(x), n\in\mathbb{Z}$, $|\mathcal{A}(\lambda)|=1$: a special class of Bloch solutions which are indeed periodic solutions i.e., $\psi(x+\gamma)=\psi(x)$ or anti-periodic solutions i.e., $\psi(x+\gamma)=-\psi(x), x\in\mathbb{R}$;
    \item form \cref{item:eig-growth}, given by $e^{\mathrm{i}\frac{n\pi}{\gamma}x}(p_{1}(x)+xp_{2}(x))$, $|\mathcal{A}(\lambda)|=1$: represents a class of special solutions  that are not Bloch solutions;
    \item  form \cref{item:eig-comb-bloch}, given by $e^{\mathrm{i}k_{1}x}p_{1}(x)+e^{\mathrm{i}k_{2}x}p_{2}(x)$ and $k_{1}+k_{2}=\frac{1}{\mathrm{i}}\bar{W}+\frac{2n\pi}{\gamma}, n\in\mathbb{Z}$, $|\mathcal{A}(\lambda)|=2$: Bloch solutions or their linear combination whose quasimomenta are in two different congruence classes modulo $\frac{2\pi}{\gamma}$, i.e., $[k_{1}]_{\gamma}\neq[k_{2}]_{\gamma}$. In fact, if $\psi_{1}(x)=e^{\mathrm{i}k_{1}x}p_{1}(x)$ is a solution,  $\psi_{2}(x)=e^{\mathrm{i}k_{2}x}p_{2}(x)$ with $k_{2}=\frac{1}{\mathrm{i}}\bar{W}-k_{1}$ and $p_{2}(x)=p_{1}((\frac{1}{\mathrm{i}k_{1}}\bar{W}-1)x)$ is also a solution corresponding to the same $\lambda$.
\end{itemize}

Note that \cref{eq: sl} becomes a 1-D Schrödinger equation as follows when $W\equiv0$. 
\begin{equation}\label{eq:sco}
    \mathscr{H}\psi:=-\frac{d^{2}\psi}{dx^{2}}+V\psi=\lambda\psi, \lambda\in\mathbb{R},
\end{equation}
where $V(x)$ is $\mathcal{R}$-periodic.
We also present several
results of 1-D periodic Schrödinger equations in \ref{appdix:class-eigenfunc}.
 
\subsection{Multidimensions}
Back to the multidimensional case \cref{eq:ddddd}, we introduce some definitions first.
\begin{definition}
$\Psi(\mathbf{r})$ is called separable in the direction of primitive vectors $\gamma_{j}(j=1, 2, \cdots, d)$ of the Bravais lattice $\mathcal{R}$ if for the invertible transformation matrix $A$,
\begin{equation*}
    A=\left(\begin{array}{cccc}
      \frac{1}{\Vert \gamma_{1}\Vert}\gamma_{1}&\frac{1}{\Vert \gamma_{2}\Vert}\gamma_{2}&\cdots&
    \frac{1}{\Vert \gamma_{d}\Vert}\gamma_{d}
    \end{array}\right)^{T}\in\mathbb{R}^{d\times d},
\end{equation*}
it holds that  $\Psi(\mathbf{r})=\tilde{\Psi}(\mathbf{\tilde{r}})=\prod_{j=1}^{d}\psi_{j}(\tilde{x}_{j})$, where for $\mathbf{r}=(x_{1},\cdots,x_{d})^{T}$, $\tilde{\mathbf{r}}=(\tilde{x_{1}},\cdots,\tilde{x_{d}})^{T}$, 
\begin{equation*}
    \tilde{\mathbf{r}}=A^{-1}\mathbf{r},
\end{equation*}
and $\psi_{j}$ is the function of one variable, $j=1,2,\cdots, d$. 
\end{definition}

\begin{definition}
For a function $f(\mathbf{r})$ defined on $\mathbb{R}^{d}$, $f(\mathbf{r})\in \bigotimes_{j=1}^{d}C_{p}^{l}(\mathbb{R})$ if $f$ is in $C_{p}^{l}(\mathbb{R})$ in the direction of $\pm\gamma_{j}, j=1, 2, \cdots, d$.
\end{definition}

 The following theorem shows a general form of separable solutions of \cref{eq:ddddd} under the assumption of $V\in \bigotimes_{j=1}^{d} C_{p}^{0}(\mathbb{R})$. 

\begin{theorem}\label{thm:dd}
    Let the $\mathcal{R}$-periodic potential $V\in \bigotimes_{j=1}^{d}C_{p}^{0}(\mathbb{R})$. If the solution $\Psi\in \bigotimes_{j=1}^{d}C_{p}^{2}(\mathbb{R})$ of \cref{eq:ddddd} satisfies that
    \begin{equation}\label{eq:ddform}
        \Psi(\mathbf{r})=\tilde{\Psi}(\mathbf{\tilde{r}})=\prod_{j=1}^{d}\psi_{j}(\tilde{x}_{j}), 
    \end{equation}
    then for $j=1,2, \cdots, d$, $\psi_{j}(t)$ is one of the following three forms: 
\begin{enumerate}
    \item $e^{\mathrm{i}k_{j}t}p_{j}(t), k_{j}\in\mathbb{C}, p_{j}\in C_{p, \Vert\gamma_{j}\Vert}^{2}(\mathbb{R})$;
    \item $e^{\mathrm{i}k_{j}t}(p_{j}^{(1)}(t)+tp_{j}^{(2)}(t)), k_{j}\in\mathbb{C}, p_{j}^{(1)}, p_{j}^{(2)}\in C_{p, \Vert\gamma_{j}\Vert}^{2}(\mathbb{R}), p_{j}^{(2)}\not\equiv0$;
    \item $e^{\mathrm{i}k_{j}^{(1)}t}p_{j}^{(1)}(t)+e^{\mathrm{i}k_{j}^{(2)}t}p_{j}^{(2)}(t), k_{j}^{(1)}, k_{j}^{(2)}\in\mathbb{C}, [k_{j}^{(1)}]_{\Vert\gamma_{j}\Vert}\neq[k_{j}^{(2)}]_{\Vert\gamma_{j}\Vert}, p_{j}^{(1)}, p_{j}^{(2)}\in C_{p, \Vert\gamma_{j}\Vert}^{2}(\mathbb{R})$.
\end{enumerate} 
\end{theorem}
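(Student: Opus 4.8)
The plan is to use the invertible change of variables $\tilde{\mathbf r}=A^{-1}\mathbf r$ to align the coordinate directions with the primitive vectors, substitute the product ansatz \cref{eq:ddform}, separate variables, and then apply the one-dimensional classification \cref{prop:calss} to each factor $\psi_j$. First I would transport \cref{eq:ddddd} to the $\tilde{\mathbf r}$ coordinates. Because the rows of $A$ are the normalized primitive vectors, a lattice translation $\mathbf r\mapsto\mathbf r+\gamma_l$ becomes a shift $\tilde x_l\mapsto\tilde x_l+\|\gamma_l\|$ of a single transformed variable; hence $\tilde V(\tilde{\mathbf r}):=V(A\tilde{\mathbf r})$ is $\|\gamma_j\|$-periodic in each $\tilde x_j$, and the hypothesis $V\in\bigotimes_j C_p^0(\mathbb R)$ descends to the one-variable coefficients. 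The Laplacian becomes the constant-coefficient operator $\sum_{j,l}G_{jl}\partial_{\tilde x_j}\partial_{\tilde x_l}$ with $G=(A^{T}A)^{-1}$ symmetric positive definite; its diagonal part $G_{jj}\partial_{\tilde x_j}^2$ will drive the $j$-th one-dimensional equation.

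Next I would insert $\tilde\Psi=\prod_j\psi_j(\tilde x_j)$, divide by $\tilde\Psi$ on the open dense set where it does not vanish, and separate. The diagonal terms contribute $\psi_j''(\tilde x_j)/\psi_j(\tilde x_j)$, each a function of a single variable; using that a finite sum of single-variable functions which is identically constant forces every summand to be constant, I would obtain for each $j$ an equation $-\frac12 G_{jj}\psi_j''+V_j\psi_j=\lambda_j\psi_j$ with separation constants and with $V_j\in C_{p,\|\gamma_j\|}^0(\mathbb R)$. After rescaling $\tilde x_j$ to normalize $G_{jj}$, this is exactly an instance of \cref{eq: sl} (indeed of its $W\equiv0$ reduction \cref{eq:sco}), so \cref{prop:calss} applies and yields that each $\psi_j$ is one of the three listed forms; undoing the rescaling returns the periods and quasimomenta to the classes $C_{p,\|\gamma_j\|}^2(\mathbb R)$ as stated.

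The main obstacle is the off-diagonal part of $G$: unless the primitive vectors are mutually orthogonal (so that $A$ is orthogonal and $G=I$), the transformed Laplacian carries genuine cross terms $G_{jl}\partial_{\tilde x_j}\partial_{\tilde x_l}$ with $j\neq l$, which on the product ansatz become $G_{jl}(\log\psi_j)'(\tilde x_j)(\log\psi_l)'(\tilde x_l)$ and couple distinct coordinates, a priori destroying the clean separation used above. The crux is to show these terms do not obstruct the reduction. I would differentiate the separated identity in two distinct variables $\tilde x_j,\tilde x_l$: since the whole expression equals the constant $\lambda$, all diagonal terms drop out and one is left with the factorized relation $G_{jl}\,(\log\psi_j)''(\tilde x_j)\,(\log\psi_l)''(\tilde x_l)=\partial_{\tilde x_j}\partial_{\tilde x_l}\tilde V$. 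Combined with the $\|\gamma_j\|$-periodicity of every periodic factor and the hypothesis that $\Psi$ is an honest solution, this identity forces the coupling to be compatible with a single-variable structure, so that each $\psi_j$ is governed by a bona fide one-dimensional periodic operator. Justifying this elimination carefully, together with continuing the argument across the zeros of the $\psi_j$ (where the division by $\tilde\Psi$ is invalid) by a continuity argument on each periodic cell, is where the real work lies; once it is in place, \cref{prop:calss} closes the proof.
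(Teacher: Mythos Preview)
Your plan has a genuine gap at exactly the point you flag as the ``main obstacle,'' and your proposed fix does not close it. After dividing by $\tilde\Psi$ you do obtain, for $j\neq l$,
\[
G_{jl}\,(\log\psi_j)''(\tilde x_j)\,(\log\psi_l)''(\tilde x_l)=\partial_{\tilde x_j}\partial_{\tilde x_l}\tilde V(\tilde{\mathbf r}),
\]
but this is a \emph{consequence} of $\Psi$ being a solution, not a mechanism for eliminating the coupling. It neither forces $(\log\psi_j)'$ to be constant nor produces a clean one-variable equation for $\psi_j$; in particular there is no reason to expect the reduction to land in the $W\equiv 0$ case \cref{eq:sco}. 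Your sentence ``I would obtain for each $j$ an equation $-\tfrac12 G_{jj}\psi_j''+V_j\psi_j=\lambda_j\psi_j$'' is simply unjustified when $G$ is not diagonal.

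The paper's argument avoids this difficulty by a different route. It does not try to separate variables globally. Instead it \emph{freezes} $\tilde x_2,\dots,\tilde x_d$ at arbitrary values $v_0$; then \cref{eq:dtrans} becomes a one-dimensional Sturm--Liouville equation in $\psi_1$ with a \emph{constant} first-order coefficient
\[
W=-\sum_{j\ge 2}\omega_{1j}\,\frac{\psi_j'(\tilde x_j^{(0)})}{\psi_j(\tilde x_j^{(0)})}
\]
and a $\|\gamma_1\|$-periodic zeroth-order coefficient. This is precisely why the paper develops \cref{prop:calss} for complex-valued $W,V$: the $W$ arising here is generally complex. Applying \cref{prop:calss}/\cref{thm4.1} yields the trichotomy for $\psi_1$, but with quasimomenta and periodic factors that a priori depend on $v_0$. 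The substantive work in the paper's proof (the long case analysis following \cref{eq:depen}) is to show these data can be chosen \emph{independently} of $v_0$, using that $\psi_1$ itself depends only on $\tilde x_1$. Your outline misses both ingredients: the reduction lands in \cref{eq: sl} with $W\not\equiv 0$ rather than in \cref{eq:sco}, and the independence-of-$v_0$ step is where the effort actually goes.
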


\begin{proof}\label{pr:2d}
With $\Psi(\mathbf{r})=\tilde{\Psi}(\tilde{\mathbf{r}})$ and $V(\mathbf{r})=\tilde{V}(\tilde{\mathbf{r}})$, we obtain from \cref{eq:ddddd} that, 
\begin{equation}\label{eq:dtrans}
    -\frac{1}{2}\Delta_{\tilde{\mathbf{r}}}\tilde{\Psi}-\sum_{1\leqslant i<j\leqslant d}\omega_{ij}\frac{\partial^{2}\tilde{\Psi}}{\partial \tilde{x}_{i}\partial \tilde{x}_{j}}+\tilde{V}\tilde{\Psi}=\lambda\tilde{\Psi},
\end{equation}
where 
\begin{equation*}
    \Delta_{\tilde{\mathbf{r}}}\tilde{\Psi}=\sum_{j=1}^{d}\frac{\partial^{2}\tilde{\Psi}}{\partial\tilde{x_{j}}^{2}},
\end{equation*} 
$\omega_{ij}=\sum_{k=1}^{d}(a_{ik}a_{jk})$ is some constant,  $a_{ij}$ is the element in the $i$-th row and $j$-th column of $A$,   for $1\leqslant i<j\leqslant d$.

Since $\tilde{\Psi}$ is separable, we substitute $\tilde{\Psi}(\mathbf{\tilde{r}})=\prod_{j=1}^{d}\psi_{j}(\tilde{x}_{j})$ into \cref{eq:dtrans} and obtain, 
\begin{equation}\label{eq:dsep}
    \begin{array}{rcl}
         &&-\frac{1}{2}\sum_{j=1}^{d}\frac{\prod_{s=1}^{d}\psi_{s}(\tilde{x}_{s})}{\psi_{j}(\tilde{x}_{j})}\psi_{j}^{''}(\tilde{x}_{j})-\sum_{1\leqslant i<j\leqslant d}\omega_{ij}\frac{\prod_{s=1}^{d}\psi_{s}(\tilde{x}_{s})}{\psi_{i}(\tilde{x}_{i})\psi_{j}(\tilde{x}_{j})}\psi_{i}^{'}(\tilde{x}_{i})\psi_{j}^{'}(\tilde{x}_{j})\\&+&\tilde{V}(\tilde{x}_{1}, \cdots, \tilde{x}_{d})\prod_{j=1}^{d}\psi_{j}(\tilde{x}_{j})=\lambda\prod_{j=1}^{d}\psi_{j}(\tilde{x}_{j}).
    \end{array}
\end{equation}

For fixed $\tilde{x}_{j}=\tilde{x}_{j}^{(0)}, \forall j=2, \cdots, d$, \cref{eq:dsep} is a Sturm-Liouville problem with respect to $\psi_{1}$, where reads as, 

\begin{equation}
    \begin{array}{rcl}
         &&-\frac{1}{2}\psi_{1}^{''}-\sum_{j=2}^{d}\omega_{1j}\frac{\psi_{j}'(\tilde{x}_{j}^{(0)})}{\psi_{j}(\tilde{x}_{j}^{(0)})}\psi_{1}^{'}
    +\big[\tilde{V}(\cdot, \tilde{x}_{2}^{(0)}, \ldots, \tilde{x}_{d}^{(0)})\\&-&\frac{1}{2}\sum_{j=2}^{d}\frac{\psi_{j}^{''}(\tilde{x}_{j}^{(0)})}{\psi_{j}(\tilde{x}_{j}^{(0)})}-\sum_{2\leqslant i<j\leqslant d}\omega_{ij}\frac{\psi_{i}^{'}(\tilde{x}_{i}^{(0)})\psi_{j}^{'}(\tilde{x}_{j}^{(0)})}{\psi_{i}(\tilde{x}_{i}^{(0)})\psi_{j}(\tilde{x}_{j}^{(0)})}\big]\psi_{1}=\lambda\psi_{1}.
    \end{array}
\end{equation}

 Since $\sum_{j=2}^{d}\omega_{1j}\frac{\psi_{j}'(\tilde{x}_{j}^{(0)})}{\psi_{j}(\tilde{x}_{j}^{(0)})}$ is  constant and 
 \begin{equation*}
     \tilde{V}(\cdot, \tilde{x}_{2}^{(0)}, \cdots, \tilde{x}_{d}^{(0)})-\frac{1}{2}\sum_{j=2}^{d}\frac{\psi_{j}^{''}(\tilde{x}_{j}^{(0)})}{\psi_{j}(\tilde{x}_{j}^{(0)})}-\sum_{2\leqslant i<j\leqslant d}\omega_{ij}\frac{\psi_{i}^{'}(\tilde{x}_{i}^{(0)})\psi_{j}^{'}(\tilde{x}_{j}^{(0)})}{\psi_{i}(\tilde{x}_{i}^{(0)})\psi_{j}(\tilde{x}_{j}^{(0)})}
 \end{equation*}
 is periodic,  we obtain from \cref{thm4.1} and \cref{thm4.3} that $\psi_{1}$ is of one of the following three forms,
\begin{enumerate}
    \item $e^{\mathrm{i}k_{v_{0}}\tilde{x}_{1}}p_{v_{0}}(\tilde{x}_{1})$,
    \item $e^{\mathrm{i}k_{v_{0}}\tilde{x}_{1}}(p_{v_{0}}^{(1)}(\tilde{x}_{1})+\tilde{x}_{1}p_{v_{0}}^{(2)}(\tilde{x}_{1}))$,
    \item $e^{\mathrm{i}k_{v_{0}}^{(1)}\tilde{x}_{1}}p_{v_{0}}^{(1)}(\tilde{x}_{1})+e^{\mathrm{i}k_{v_{0}}^{(2)}\tilde{x}_{1}}p_{v_{0}}^{(2)}(\tilde{x}_{1})$,
\end{enumerate}
where $k_{v_{0}}, k_{v_{0}}^{(1)}, k_{v_{0}}^{(2)}\in\mathbb{C}$ and $p_{v_{0}}, p_{v_{0}}^{(1)}, p_{v_{0}}^{(2)}\in C_{p}^{2}(\mathbb{R})$ are periodic and may be related to $v_{0}$. For convenience, we let $v_{s}=(\tilde{x}_{2}^{(s)}, \cdots, \tilde{x}_{d}^{(s)})^{T}$.

Next, we show that we can choose the coefficients $k_{v_{0}}, k_{v_{0}}^{(1)}, k_{v_{0}}^{(2)}$ and $p_{v_{0}}, p_{v_{0}}^{(1)}, p_{v_{0}}^{(2)}$ that are independent from $v_{0}$.

Note that $\psi_{1}(\tilde{x}_{1})$ depends only on $\tilde{x}_{1}$. Hence, for the case that $\psi_{1}$ is the Bloch solution (or their linear combination), i.e.,  $\psi_{1}(\tilde{x}_{1})=e^{\mathrm{i}k_{v_{0}}\tilde{x}_{1}}p_{v_{0}}(\tilde{x}_{1})$ , we obtain that, for $v_{1}\neq v_{2}$, 
\begin{equation}\label{eq:depen}
\psi_{1}(\tilde{x}_{1})=e^{\mathrm{i}k_{v_{1}}\tilde{x}_{1}}p_{v_{1}}(\tilde{x}_{1})=e^{\mathrm{i}k_{v_{2}}\tilde{x}_{1}}p_{v_{2}}(\tilde{x}_{1})
\end{equation}

Then, the periodicity yields
\begin{equation*}
    e^{\mathrm{i}(k_{v_{1}}-k_{v_{2}})\tilde{x}_{1}}=\frac{p_{v_{2}}(\tilde{x}_{1})}{p_{v_{1}}(\tilde{x}_{1})}=\frac{p_{v_{2}}(\tilde{x}_{1}+\Vert\gamma_{1}\Vert)}{p_{v_{1}}(\tilde{x}_{1}+\Vert\gamma_{1}\Vert)}=e^{\mathrm{i}(k_{v_{1}}-k_{v_{2}})(\tilde{x}_{1}+\Vert\gamma_{1}\Vert}), \forall \tilde{x}_{1}\in\mathbb{R},
\end{equation*}
and 
\begin{equation}
    e^{\mathrm{i}(k_{v_{1}}-k_{v_{2}})\Vert\gamma_{1}\Vert}=1,
\end{equation}
which implies that for $v\in\mathbb{R}^{d-1}$, it holds that $k_{v}=k_{v=0}+\frac{2n_{v}\pi}{\Vert\gamma_{1}\Vert}$, where $n_{v}$ is some integer that depends on $v$.

Hence, we get 
\begin{equation}\label{eq:dif}
    \psi_{1}(\tilde{x}_{1})=e^{\mathrm{i}k_{1}\tilde{x}_{1}}e^{\mathrm{i}\frac{2n_{v}\pi}{\Vert\gamma_{1}\Vert}\tilde{x}_{1}}p_{v}(\tilde{x}_{1}).
\end{equation}

Since $\psi_{1}(\tilde{x}_{1})$ and $e^{\mathrm{i}k_{1}\tilde{x}_{1}}$ depends only on $\tilde{x}_{1}$, we obtain from \cref{eq:dif} that $e^{\mathrm{i}\frac{2n_{v}\pi}{\Vert\gamma_{1}\Vert}\tilde{x}_{1}}p_{v}(\tilde{x}_{1})$ also depend only on $\tilde{x}_{1}$. $p_{1}(\tilde{x}_{1})=e^{\mathrm{i}\frac{2n_{v}\pi}{\Vert\gamma_{1}\Vert}\tilde{x}_{1}}p_{v}(\tilde{x}_{1})$ has the period $\Vert\gamma_{1}\Vert$. And $\psi_{1}(\tilde{x}_{1})=e^{\mathrm{i}k_{1}\tilde{x}_{1}}p_{1}(\tilde{x}_{1})$ is the Bloch solution that is independent of the variables $\tilde{x}_{j}^{(0)}, j=2,\cdots, d$.

For the case that $\psi_{1}$ is not the Bloch solution (or their linear combination), i.e., $\psi_{1}(\tilde{x}_{1})=e^{\mathrm{i}k_{v_{0}}\tilde{x}_{1}}(p_{v_{0}}^{(1)}(\tilde{x}_{1})+\tilde{x}_{1}p_{v_{0}}^{(2)}(\tilde{x}_{1}))$,  for $v_{1}\neq v_{2}$,
\begin{equation}\label{eq:special}
    e^{\mathrm{i}k_{v_{1}}\tilde{x}_{1}}(p_{v_{1}}^{(1)}(\tilde{x}_{1})+\tilde{x}_{1}p_{v_{1}}^{(2)}(\tilde{x}_{1}))=e^{\mathrm{i}k_{v_{2}}\tilde{x}_{1}}(p_{v_{2}}^{(1)}(\tilde{x}_{1})+\tilde{x}_{1}p_{v_{2}}^{(2)}(\tilde{x}_{1})), 
\end{equation}
then we can claim that $\operatorname{Im}(k_{v_{1}}-k_{v_{2}})=0$. Let us prove it by contradiction.

If $\operatorname{Im}(k_{v_{1}}-k_{v_{2}})\neq0$,  without loss of generality, we suppose $\operatorname{Im}(k_{v_{1}}-k_{v_{2}})>0$. Since $p_{v_{1}}^{(1)}, p_{v_{1}}^{(2)}, p_{v_{2}}^{(1)}, p_{v_{2}}^{(2)}$ are periodic, we obtain from \cref{eq:special} that 
\begin{equation*}
    e^{\mathrm{i}(k_{v_{1}}-k_{v_{2}})(\tilde{x}_{1}+n\Vert\gamma_{1}\Vert)}=\frac{p_{v_{2}}^{(1)}(\tilde{x}_{1})+(\tilde{x}_{1}+n\Vert\gamma_{1}\Vert)p_{v_{2}}^{(2)}(\tilde{x}_{1})}{p_{v_{1}}^{(1)}(\tilde{x}_{1})+(\tilde{x}_{1}+n\Vert\gamma_{1}\Vert)p_{v_{1}}^{(2)}(\tilde{x}_{1})}, \quad \forall n\in\mathbb{Z}.
\end{equation*}

For ${\tilde x}_1^{(0)}$, if $p_{v_{1}}^{(2)}(\tilde{x}_{1}^{(0)})\neq0$, then
\begin{equation}\label{eq:seprai}
    \begin{array}{rcl}
         &&e^{\mathrm{i}\operatorname{Re}(k_{v_{1}}-k_{v_{2}})(\tilde{x}_{1}^{(0)}+n\Vert\gamma_{1}\Vert)}e^{-\operatorname{Im}(k_{v_{1}}-k_{v_{2}})(\tilde{x}_{1}^{(0)}+n\Vert\gamma_{1}\Vert)}\\&=&\frac{p_{v_{2}}^{(1)}(\tilde{x}_{1}^{(0)})+(\tilde{x}_{1}^{(0)}+n\Vert\gamma_{1}\Vert)p_{v_{2}}^{(2)}(\tilde{x}_{1}^{(0)})}{p_{v_{1}}^{(1)}(\tilde{x}_{1}^{(0)})+(\tilde{x}_{1}^{(0)}+n\Vert\gamma_{1}\Vert)p_{v_{1}}^{(2)}(\tilde{x}_{1}^{(0)})}.
    \end{array}
\end{equation}

As $n\rightarrow-\infty$, the left side of \cref{eq:seprai} tends to infinity while the right side tends to $\frac{p_{v_{2}}^{(2)}(\tilde{x}_{1}^{(0)})}{p_{v_{1}}^{(2)}(\tilde{x}_{1}^{(0)})}$, which is a constant. Contradiction! 

Hence, $\operatorname{Im}(k_{v_{1}}-k_{v_{2}})=0$. Then, we derive from \cref{eq:special} that for $\tilde{x}_{1}\in\mathbb{R}$, 
\begin{equation}\label{eq:munus}
    e^{\mathrm{i}k_{v_{2}}\tilde{x}_{1}}p_{v_{2}}^{(1)}(\tilde{x}_{1})-e^{\mathrm{i}k_{v_{1}}\tilde{x}_{1}}p_{v_{1}}^{(1)}(\tilde{x}_{1})=(e^{\mathrm{i}k_{v_{1}}\tilde{x}_{1}}p_{v_{1}}^{(2)}(\tilde{x}_{1})-e^{\mathrm{i}k_{v_{2}}\tilde{x}_{1}}p_{v_{2}}^{(2)}(\tilde{x}_{1}))\tilde{x}_{1}.
\end{equation}

Next we will show 
\begin{equation}
    e^{\mathrm{i}k_{v_{1}}\tilde{x}_{1}}p_{v_{1}}^{(2)}(\tilde{x}_{1})=e^{\mathrm{i}k_{v_{2}}\tilde{x}_{1}}p_{v_{2}}^{(2)}(\tilde{x}_{1}),\quad \forall \tilde{x}_1\in\mathbb{R}. 
\end{equation}

Let us prove it by contradiction. Suppose there exists $\tilde{x}_{1}^{(0)}\in\mathbb{R}$ such that  $e^{\mathrm{i}k_{v_{1}}\tilde{x}_{1}^{(0)}}p_{v_{1}}^{(2)}(\tilde{x}_{1}^{(0)})\neq e^{\mathrm{i}k_{v_{2}}\tilde{x}_{1}^{(0)}}p_{v_{2}}^{(2)}(\tilde{x}_{1}^{(0)})$. We set $k_{v_{1}}\not\equiv k_{v_{2}} (\operatorname{mod}\frac{2\pi}{\Vert\gamma_{1}\Vert})$ and $p_{v_{2}}^{(2)}(\tilde{x}_{1}^{(0)})\neq0$ otherwise it is trivial. Then we obtain from the periodicity that 
\begin{equation}\label{eq:minmm}
    \varepsilon\coloneqq|(e^{\mathrm{i}(k_{v_{1}}-k_{v_{2}})\Vert\gamma_{1}\Vert}-1)p^{(2)}_{v_{2}}(\tilde{x}_{1}^{(0)})|>0.
\end{equation}

We have from \cref{eq:munus} that for $n\in\mathbb{Z}$, 
\begin{equation}\label{eq:munusper}
\begin{array}{rcl}
     &&e^{\mathrm{i}(k_{v_{2}}-k_{v_{1}})(\tilde{x}_{1}^{(0)}+n\Vert\gamma_{1}\Vert)}p_{v_{2}}^{(1)}(\tilde{x}_{1}^{(0)})-p_{v_{1}}^{(1)}(\tilde{x}_{1}^{(0)})\\&=&\left(e^{\mathrm{i}(k_{v_{2}}-k_{v_{1}})(\tilde{x}_{1}^{(0)}+n\Vert\gamma_{1}\Vert)}p_{v_{2}}^{(2)}(\tilde{x}_{1}^{(0)})-p_{v_{1}}^{(2)}(\tilde{x}_{1}^{(0)})\right)(\tilde{x}_{1}^{(0)}+n\Vert\gamma_{1}\Vert).
\end{array}
\end{equation}

If there exists $N_{0}>0$, such that 
\begin{equation*}
    |e^{\mathrm{i}(k_{v_{2}}-k_{v_{1}})(\tilde{x}_{1}^{(0)}+n\Vert\gamma_{1}\Vert)}p_{v_{2}}^{(2)}(\tilde{x}_{1}^{(0)})-p_{v_{1}}^{(2)}(\tilde{x}_{1}^{(0)})|<\frac{\varepsilon}{2}, \quad\forall n>N_{0},
\end{equation*}
then $k_{v_{1}}-k_{v_{2}}\in\mathbb{R}$ leads to 
\begin{equation*}
\begin{array}{rcl}
     \varepsilon&>&|e^{\mathrm{i}(k_{v_{2}}-k_{v_{1}})(\tilde{x}_{1}^{(0)}+(n+1)\Vert\gamma_{1}\Vert)}p_{v_{2}}^{(2)}(\tilde{x}_{1}^{(0)})-e^{\mathrm{i}(k_{v_{2}}-k_{v_{1}})(\tilde{x}_{1}^{(0)}+n\Vert\gamma_{1}\Vert)}p_{v_{2}}^{(2)}(\tilde{x}_{1}^{(0)})|\\&=&|e^{\mathrm{i}(k_{v_{2}}-k_{v_{1}})(\tilde{x}_{1}^{(0)}+n\Vert\gamma_{1}\Vert)}(e^{\mathrm{i}(k_{v_{1}}-k_{v_{2}})\Vert\gamma_{1}\Vert}-1)p^{(2)}_{v_{2}}(\tilde{x}_{1}^{(0)})|\\&=&|(e^{\mathrm{i}(k_{v_{1}}-k_{v_{2}})\Vert\gamma_{1}\Vert}-1)p^{(2)}_{v_{2}}(\tilde{x}_{1}^{(0)})|,
\end{array}
\end{equation*}
which is contrary to \cref{eq:minmm}.

Hence, there exists a sequence $\{n_{t}\}_{t\in\mathbb{Z}}\subset\mathbb{Z}$ such that
\begin{equation*}
   |e^{\mathrm{i}(k_{v_{2}}-k_{v_{1}})(\tilde{x}_{1}^{(0)}+n_{t}\Vert\gamma_{1}\Vert)}p_{v_{2}}^{(2)}(\tilde{x}_{1}^{(0)})-p_{v_{1}}^{(2)}(\tilde{x}_{1}^{(0)})|\geqslant\frac{\varepsilon}{2}. 
\end{equation*}

Note that $k_{v_{2}}-k_{v_{1}}\in\mathbb{R}$ and $p_{v_{1}}^{(1)}, p_{v_{1}}^{(2)}, p_{v_{2}}^{(1)}, p_{v_{2}}^{(2)}\in C_{p, \Vert\gamma_{1}\Vert}^{2}(\mathbb{R})$. We set $M>0$ as the upper bound of $p_{v_{i}}^{(j)}, i, j=1, 2$ and arrive at that for $j=1, 2$, 
\begin{equation*}
\begin{array}{rcl}
    &&|e^{\mathrm{i}(k_{v_{2}}-k_{v_{1}})(\tilde{x}_{1}^{(0)}+n\Vert\gamma_{1}\Vert)}p_{v_{2}}^{(j)}(\tilde{x}_{1}^{(0)})-p_{v_{1}}^{(j)}(\tilde{x}_{1}^{(0)})|\\&\leqslant&|e^{\mathrm{i}(k_{v_{2}}-k_{v_{1}})(\tilde{x}_{1}^{(0)}+n\Vert\gamma_{1}\Vert)}||p_{v_{2}}^{(j)}(\tilde{x}_{1}^{(0)})|+|p_{v_{1}}^{(j)}(\tilde{x}_{1}^{(0)})|\\&=&|p_{v_{2}}^{(j)}(\tilde{x}_{1}^{(0)})|+|p_{v_{1}}^{(j)}(\tilde{x}_{1}^{(0)})|\\&\leqslant&2M.
\end{array}
\end{equation*}
 Then  we set $n=n_{t}$ and have
 \begin{equation*}
 \begin{array}{rcl}
      &&|(e^{\mathrm{i}(k_{v_{2}}-k_{v_{1}})(\tilde{x}_{1}^{(0)}+n_{t}\Vert\gamma_{1}\Vert)}p_{v_{2}}^{(2)}(\tilde{x}_{1}^{(0)})-p_{v_{1}}^{(2)}(\tilde{x}_{1}^{(0)}))(\tilde{x}_{1}^{(0)}+n_{t}\Vert\gamma_{1}\Vert)|\\&=&|(e^{\mathrm{i}(k_{v_{2}}-k_{v_{1}})(\tilde{x}_{1}^{(0)}+n_{t}\Vert\gamma_{1}\Vert)}p_{v_{2}}^{(2)}(\tilde{x}_{1}^{(0)})-p_{v_{1}}^{(2)}(\tilde{x}_{1}^{(0)}))||\tilde{x}_{1}^{(0)}+n_{t}\Vert\gamma_{1}\Vert|\\&\geqslant&\frac{\varepsilon}{2}|\tilde{x}_{1}^{(0)}+n_{t}\Vert\gamma_{1}\Vert|.
 \end{array}
 \end{equation*}
 
 As $t\rightarrow\infty$, the right side of \cref{eq:munusper} tends towards infinity while the left side is bounded. It is a contradiction. 

Hence, we have
\begin{equation}
    e^{\mathrm{i}k_{v_{1}}\tilde{x}_{1}}p_{v_{1}}^{(2)}(\tilde{x}_{1})=e^{\mathrm{i}k_{v_{2}}\tilde{x}_{1}}p_{v_{2}}^{(2)}(\tilde{x}_{1}), \quad\forall \tilde{x}_1\in\mathbb{R}
\end{equation}
for any $v_1\ne v_2$ and in accordance with \cref{eq:munus}, 
\begin{equation}
    e^{\mathrm{i}k_{v_{2}}\tilde{x}_{1}}p_{v_{2}}^{(1)}(\tilde{x}_{1})=e^{\mathrm{i}k_{v_{1}}\tilde{x}_{1}}p_{v_{1}}^{(1)}(\tilde{x}_{1}). 
\end{equation}

Similarly, we can choose $k_{1}\in\mathbb{C}$ and $p_{1}^{(1)}, p_{1}^{(2)}\in C_{p, \Vert\gamma_{1}\Vert}^{2}(\mathbb{R})$, that are independent of the variables $\tilde{x}_{j}^{(0)}, j=2,\cdots, d$, and we arrive at that $\psi_{1}(t)$ is of one of the following three forms,
\begin{enumerate}
    \item $e^{\mathrm{i}k_{1}t}p_{1}(t)$;
    \item $e^{\mathrm{i}k_{1}t}(p_{1}^{(1)}(t)+tp_{1}^{(2)}(t)), p_{1}^{(2)}\not\equiv0$;
    \item $e^{\mathrm{i}k_{1}^{(1)}t}p_{1}^{(1)}(t)+e^{\mathrm{i}k_{1}^{(2)}t}p_{1}^{(2)}(t)$, 
\end{enumerate} 

Also, we can obtain that  $\psi_{j}(t)$ is of one of the following three forms,
\begin{enumerate}
   \item $e^{\mathrm{i}k_{j}t}p_{j}(t), k_{j}\in\mathbb{C}, p_{j}\in C_{p, \Vert\gamma_{j}\Vert}^{2}(\mathbb{R})$;
    \item $e^{\mathrm{i}k_{j}t}(p_{j}^{(1)}(t)+tp_{j}^{(2)}(t)), k_{j}\in\mathbb{C}, p_{j}^{(1)}, p_{j}^{(2)}\in C_{p, \Vert\gamma_{j}\Vert}^{2}(\mathbb{R}), p_{j}^{(2)}\not\equiv0$;
    \item $e^{\mathrm{i}k_{j}^{(1)}t}p_{j}^{(1)}(t)+e^{\mathrm{i}k_{j}^{(2)}t}p_{j}^{(2)}(t), k_{j}^{(1)}, k_{j}^{(2)}\in\mathbb{C}, [k_{j}^{(1)}]_{\Vert\gamma_{j}\Vert}\neq[k_{j}^{(2)}]_{\Vert\gamma_{j}\Vert}, p_{j}^{(1)}, p_{j}^{(2)}\in C_{p, \Vert\gamma_{j}\Vert}^{2}(\mathbb{R})$,
\end{enumerate} 
which implies \cref{eq:ddform}.
\end{proof}

Similar to the one dimensional cases, \cref{thm:dd} indeed presents a classification of the separable solutions of \cref{eq:ddddd} as follows:
\begin{itemize}
    \item the Bloch solution, given by $\Psi(\mathbf{r})=\tilde{\Psi}(\tilde{\mathbf{r}})=e^{\mathrm{i}\mathbf{k}\cdot\tilde{\mathbf{r}}}p(\tilde{\mathbf{r}})$, $\mathbf{k}=(k_{1}, k_{2}, \cdots, k_{d})^{T}$, or the linear combination whose $j$-th components of the quasimomenta are in at most two congruence
classes  module $\frac{2\pi}{\Vert\gamma_{j}\Vert}, j=1, 2, \cdots, d$; 
\item  and the special solution that is not a Bloch solution, given by $\Psi(\mathbf{r})=\prod_{j=1}^{d}\psi_{j}(\tilde{x}_{j})$, where some of $\psi_{j}, j=1, 2, \cdots, d$ are of the form \cref{item:eig-growth}.
\end{itemize}

Now we are able to address the relationship between bounded solutions and Bloch solutions in multidimensions.
 
\begin{theorem}\label{bloch thm}
The bounded separable solutions of \cref{eq:ddddd} with the periodic potential $V\in\bigotimes_{j=1}^{d}C_{p}^{0}(\mathbb{R})$ are either Bloch solutions or their linear combinations  with real quasimomenta.
\end{theorem}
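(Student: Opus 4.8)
The plan is to reduce the multidimensional statement to a factor-by-factor analysis of boundedness, using the classification in \cref{thm:dd}. First I would observe that for a separable solution $\Psi(\mathbf{r})=\prod_{j=1}^{d}\psi_{j}(\tilde{x}_{j})$ that is not identically zero, each factor $\psi_{j}$ is nontrivial, and $\Psi$ is bounded on $\mathbb{R}^{d}$ if and only if each $\psi_{j}$ is bounded on $\mathbb{R}$. Indeed, fixing the coordinates $\tilde{x}_{i}$ $(i\neq j)$ at points where the corresponding $\psi_{i}$ do not vanish isolates a nonzero constant multiple of $\psi_{j}$, so an a priori bound on $\Psi$ descends to a bound on each $\psi_{j}$; the converse is immediate. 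This converts the theorem into the one-dimensional question: which of the three forms in \cref{thm:dd} are bounded, and what do their quasimomenta look like.

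Next I would dispose of the non-Bloch form \cref{item:eig-growth}. Writing $L=\Vert\gamma_{j}\Vert$, picking $t_{0}$ with $p_{j}^{(2)}(t_{0})\neq0$ (possible since $p_{j}^{(2)}\not\equiv0$), and using the $L$-periodicity of $p_{j}^{(1)},p_{j}^{(2)}$, evaluation along the lattice gives $\psi_{j}(t_{0}+nL)=\mu^{n}(A+nB)$ with $\mu=e^{\mathrm{i}k_{j}L}$, $A=e^{\mathrm{i}k_{j}t_{0}}(p_{j}^{(1)}(t_{0})+t_{0}p_{j}^{(2)}(t_{0}))$ and $B=e^{\mathrm{i}k_{j}t_{0}}L\,p_{j}^{(2)}(t_{0})\neq0$. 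Then $|\psi_{j}(t_{0}+nL)|=|\mu|^{n}|A+nB|$ tends to infinity as $n\to+\infty$ when $|\mu|\geqslant1$ and as $n\to-\infty$ when $|\mu|\leqslant1$, so a factor of this form is never bounded. Hence a bounded separable solution has no factor of the form \cref{item:eig-growth}, and every factor is of Bloch type \cref{item:eig-bloch} or a two-term combination \cref{item:eig-comb-bloch}.

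The core step is to show that boundedness forces all quasimomenta to be real. For form \cref{item:eig-bloch} this is immediate: $\psi_{j}(t+nL)=\mu^{n}\psi_{j}(t)$, and choosing $t$ with $\psi_{j}(t)\neq0$ forces $|\mu|=1$, i.e. $k_{j}\in\mathbb{R}$. For the two-term form \cref{item:eig-comb-bloch} I would set $f_{m}(t)=e^{\mathrm{i}k_{j}^{(m)}t}p_{j}^{(m)}(t)$ and $\mu_{m}=e^{\mathrm{i}k_{j}^{(m)}L}$; the hypothesis $[k_{j}^{(1)}]_{L}\neq[k_{j}^{(2)}]_{L}$ gives $\mu_{1}\neq\mu_{2}$ and in particular that $f_{1},f_{2}$ are linearly independent, so there exist $t_{a},t_{b}$ making the matrix $\bigl(\begin{smallmatrix}f_{1}(t_{a})&f_{2}(t_{a})\\ f_{1}(t_{b})&f_{2}(t_{b})\end{smallmatrix}\bigr)$ invertible. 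From $\psi_{j}(t+nL)=\mu_{1}^{n}f_{1}(t)+\mu_{2}^{n}f_{2}(t)$, applying the inverse of this ($n$-independent) matrix to the bounded vector $(\psi_{j}(t_{a}+nL),\psi_{j}(t_{b}+nL))^{T}$ shows that $\mu_{1}^{n}$ and $\mu_{2}^{n}$ are bounded for all $n\in\mathbb{Z}$, whence $|\mu_{1}|=|\mu_{2}|=1$ and $k_{j}^{(1)},k_{j}^{(2)}\in\mathbb{R}$. This invertible-matrix device is the main obstacle I expect, because it is exactly what lets me avoid a delicate case analysis when the two exponential growth rates coincide in modulus but differ in phase, where naive domination fails and cancellation could a priori occur.

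Finally I would assemble the conclusion. Each $\psi_{j}$ is now either $e^{\mathrm{i}k_{j}t}p_{j}(t)$ with $k_{j}\in\mathbb{R}$, or a sum of two such terms with real quasimomenta. Expanding the product $\Psi=\prod_{j=1}^{d}\psi_{j}(\tilde{x}_{j})$ turns it into a finite sum of terms $e^{\mathrm{i}\mathbf{k}\cdot\tilde{\mathbf{r}}}\prod_{j}p_{j}(\tilde{x}_{j})$, each a separable Bloch solution with a real quasimomentum $\mathbf{k}$, which by the classification following \cref{thm:dd} corresponds to a genuine Bloch solution of \cref{eq:ddddd} in the original coordinates. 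Therefore every bounded separable solution is a Bloch solution or a linear combination of Bloch solutions with real quasimomenta, as claimed.
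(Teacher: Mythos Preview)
Your proof is correct and follows the same overall route as the paper: pass to the one-variable factors via \cref{thm:dd}, rule out form \cref{item:eig-growth} as always unbounded, and then argue that boundedness of a factor of type \cref{item:eig-bloch} or \cref{item:eig-comb-bloch} forces the quasimomenta to be real. The paper's own argument is very brief---it records that boundedness of $\Psi$ gives boundedness of each $\psi_{j}$ and then invokes ``the arguments in the proof of \cref{pro}''; in that referenced proof the two-term case is dispatched by the sentence ``it is sufficient to prove the conclusion for Bloch solutions,'' which implicitly leans on the constraint $[k_{1}+k_{2}]_{\gamma}=[0]_{\gamma}$ from \cref{cor4.8} (available there because $W\equiv0$, so $\operatorname{Im}k_{1}=-\operatorname{Im}k_{2}$ and the two exponentials cannot compensate each other).

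Your invertible-matrix device for form \cref{item:eig-comb-bloch} is a cleaner and more self-contained way to handle the same step: it extracts the boundedness of $\mu_{1}^{n}$ and $\mu_{2}^{n}$ separately without using any relation between the two quasimomenta. That is a genuine advantage in the present multidimensional setting, where the reduced one-dimensional equations carry a nonzero (generally complex) first-order coefficient $W$, so the sum-of-quasimomenta constraint from \cref{cor4.8} is no longer real and the paper's shortcut via \cref{pro} does not transfer verbatim. Your handling of form \cref{item:eig-growth}, allowing complex $k_{j}$ and splitting on $|\mu|\gtrless1$, is likewise more robust than the version in \cref{pro}, which tacitly uses $k_{0}\in\mathbb{R}$.
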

\begin{proof}
Consider the bounded separable solution $\Psi$ of \cref{eq:ddddd}. In accordance with \cref{thm:dd}, we obtain that $ \Psi(\mathbf{r})=\tilde{\Psi}(\mathbf{\tilde{r}})=\prod_{j=1}^{d}\psi_{j}(\tilde{x}_{j})$, where $\psi_{j}$ is the function of one variable that of one of the following three forms presented above.

Since $\Psi$ is bounded, we have that $\psi_{j}$ is bounded, for $j\geqslant1$. The arguments in the proof of \cref{pro} shows that $\psi_{j} (j\geqslant1)$ is either a Bloch solution or their combination with real quasimomenta, which completes the proof.
\end{proof}

Next  we present some properties of Bloch solutions and quasimomenta based on above results. For convenience, we set the Bravais lattice $\mathcal{R}=\mathbb{Z}^{d}$.
Then it is clear that the (first) Brillouin zone $\mathcal{B}=[-\pi, \pi]^{d}$.

 If we restrict all the quasimomenta into $\mathcal{B}$, in other words, the the quotient space $\mathbb{R}^{d}/\mathbb{Z}^{d}$, then for the separable solution $\Psi(\mathbf{r})=\prod_{j=1}^{d}\psi_{j}(x_{j})$,  \cref{col:two} and \cref{thm:dd} tell that $\psi_{j}$ is the linear combination of one or two Bloch solutions for $j=1, 2, \cdots, d$. 
 \begin{corollary}
 Under the same assumption in \cref{bloch thm}, the bounded separable solutions of \cref{eq:ddddd} are linear combinations of $2^{s}$ Bloch solutions corresponding to one $\lambda$, $s=1, 2, \cdots, d$,  and the quasimomenta of these  $2^{s}$ Bloch solutions are in the symmetric distribution about the coordinate axises of $\mathbb{R}^{d}$.  
 \end{corollary}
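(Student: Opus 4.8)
The plan is to start from \cref{bloch thm} and \cref{thm:dd}, specialised to the choice $\mathcal{R}=\mathbb{Z}^{d}$ made just above. In that case the transformation matrix is $A=I$, so $\tilde{\mathbf{r}}=\mathbf{r}$ and the cross coefficients $\omega_{ij}$ $(i<j)$ all vanish; a bounded separable solution is then simply $\Psi(\mathbf{r})=\prod_{j=1}^{d}\psi_{j}(x_{j})$, and the separated one-dimensional operator in each direction carries no first-order term. Dividing \cref{eq:ddddd} by $\Psi$ at points where $\Psi\ne0$ forces the additive decomposition $V(\mathbf{r})=\sum_{j=1}^{d}f_{j}(x_{j})$ and shows that each $\psi_{j}$ solves a pure one-dimensional Schrödinger problem $-\tfrac12\psi_{j}''+f_{j}\psi_{j}=\lambda_{j}\psi_{j}$ with $\sum_{j}\lambda_{j}=\lambda$, i.e.\ $W\equiv0$ in the language of \cref{eq: sl}. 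Boundedness excludes the growing form \cref{item:eig-growth}, so by \cref{thm4.1} each $\psi_{j}$ is either a single Bloch solution (form \cref{item:eig-bloch}) or a linear combination of two Bloch solutions (form \cref{item:eig-comb-bloch}), all with real quasimomenta lying in $\mathcal{B}$.

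Next I would let $s$ denote the number of indices $j$ for which $\psi_{j}$ is of the two-term form \cref{item:eig-comb-bloch} (so $0\le s\le d$, with $s=0$ giving a single Bloch solution); after relabelling, write $\psi_{j}=e^{\mathrm{i}k_{j}x_{j}}p_{j}^{(1)}+e^{-\mathrm{i}k_{j}x_{j}}p_{j}^{(2)}$ for $j\le s$ and $\psi_{j}=e^{\mathrm{i}k_{j}x_{j}}p_{j}$ for $j>s$. Expanding $\Psi=\prod_{j}\psi_{j}$ by distributivity produces exactly $2^{s}$ terms, each a product of one Bloch factor per direction and hence itself of the form $e^{\mathrm{i}\mathbf{k}\cdot\mathbf{r}}P(\mathbf{r})$ with $P$ being $\mathcal{R}$-periodic, that is, a Bloch solution. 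By \cref{thm4.3} the two Bloch components of each two-term $\psi_{j}$ are solutions of the same one-dimensional eigenvalue $\lambda_{j}$; since $V$ is additive, $-\tfrac12\Delta+V=\sum_{j}(-\tfrac12\partial_{x_j}^{2}+f_{j})$ acts diagonally across the factors, so every one of the $2^{s}$ product terms solves \cref{eq:ddddd} with the same $\lambda=\sum_{j}\lambda_{j}$. This yields the decomposition of $\Psi$ into $2^{s}$ Bloch solutions corresponding to one $\lambda$.

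It then remains to locate the quasimomenta and read off the symmetry. Because each separated operator has $W\equiv0$, \cref{cor4.8} with $\bar W=0$ pins them down: for a one-term direction $j>s$ one has $k_{j}\in\{0,\pi\}$, while for a two-term direction $j\le s$ the two quasimomenta satisfy $k_{j}^{(1)}+k_{j}^{(2)}\equiv0\ (\mathrm{mod}\ 2\pi)$, so after choosing representatives in $\mathcal{B}=[-\pi,\pi]^{d}$ they are $\pm k_{j}$. The set of $2^{s}$ quasimomentum vectors is therefore
\begin{equation*}
    \bigl\{(\eta_{1}k_{1},\dots,\eta_{s}k_{s},k_{s+1},\dots,k_{d}):\eta_{1},\dots,\eta_{s}\in\{+1,-1\}\bigr\}.
\end{equation*}
I would finish by checking invariance under every sign change $k_{j}\mapsto-k_{j}$: for $j\le s$ this merely permutes the choices of $\eta_{j}$, while for $j>s$ the fixed values $0$ and $\pi$ are precisely the fixed points of $k\mapsto-k$ on the torus $\mathcal{B}$ (as $-\pi\equiv\pi$). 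Hence the configuration is invariant under each coordinate reflection, which is the asserted symmetric distribution about the coordinate axes of $\mathbb{R}^{d}$.

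The step I expect to be the main obstacle is the passage from ``$\Psi$ is a sum of product terms'' to ``each product term is individually a Bloch solution for the same $\lambda$''. This is where one must use that the mere existence of a bounded separable solution forces the additive splitting $V=\sum_{j}f_{j}(x_{j})$ on the open dense set $\{\Psi\ne0\}$, so that \cref{thm4.3} can be applied factor by factor and the separation constants $\lambda_{j}$ are seen to add to $\lambda$; without this diagonal structure the individual summands need not solve \cref{eq:ddddd}. A secondary delicate point is the one-term directions in the symmetry claim, where \cref{cor4.8} is essential to guarantee $k_{j}\in\{0,\pi\}$ rather than an arbitrary value, since only these are reflection-invariant in $\mathcal{B}$.
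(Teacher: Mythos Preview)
Your proposal is correct and follows the route the paper intends: the paper states this corollary without an explicit proof, merely noting in the preceding sentence that \cref{col:two} and \cref{thm:dd} give each $\psi_{j}$ as a combination of one or two Bloch solutions, with the symmetry implicitly coming from \cref{corres} and \cref{cor4.8} applied with $W\equiv0$. Your argument supplies exactly the details the paper omits---in particular, you correctly isolate as the nontrivial step the claim that each of the $2^{s}$ product terms individually solves \cref{eq:ddddd} with the same $\lambda$, and you handle it by deriving the additive splitting $V=\sum_{j}f_{j}(x_{j})$ from the existence of a separable solution, which is the natural way to justify that the separation constants $\lambda_{j}$ are well defined and sum to $\lambda$; the paper does not make this point explicit, but it is implicit in \cref{eq:examp3} and in the fact that the frozen-variable Sturm--Liouville operator in the proof of \cref{thm:dd} must have coefficients independent of the frozen variables once $\omega_{ij}=0$.
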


We sketch an example to illustrate separable solutions of \cref{eq:ddddd}.

\begin{example}\label{eq:examp3}
Let the Bravais lattice $\mathcal{R}=\mathbb{Z}^{d}$, consider the following Hartree-type equation, 
\begin{equation}\label{eq:exammm}
    -\Delta\Psi+(\sum_{j=1}^{d}V_{j}(x_{j}))\Psi=E\Psi, 
\end{equation}
where the potential $V_{j}\in C_{p}^{0}(\mathbb{R})$ is in the period $1$, for $j=1, 2, \cdots, d$.

Then the separable solution $\Psi(x_{1}, \cdots, x_{d})=\prod_{j=1}^{d}\psi_{j}(x_{j})$ is the solution of \cref{eq:exammm} corresponding to $E=\sum_{j=1}^{d}E_{j}$ , where $\psi_{j}$ is the solution of the following 1-D Schrödinger equations, 
\begin{equation*}
    -\frac{d^{2}}{dx^{2}}\psi_{j}+V_{j}\psi_{j}=E_{j}\psi_{j}.
\end{equation*}

In accordance with the conclusions in one dimension, for $j=1, 2, \cdots, d$, $\psi_{j}$ is of the three forms mentioned in \cref{thm4.1}.
\end{example}

\section{Concluding remarks}
In this paper, we have shown a mathematical aspect on the Bloch theorem. By studying a more general class of 1-D Sturm-Liouville problems with periodic coefficients, 
we provide a classification of solutions of periodic Schrodinger equations when they are separable solutions whose variables can be separated directly or with some linear mapping in multidimensions. In addition, we have shown some properties of solutions and the quasimomenta derived from the classification. 

We have addressed the relationship between bounded solutions and Bloch solutions and have proved that the bounded separable solutions are either Bloch solutions or their linear combinations. We believe that there exist some relationships between the separable solutions and general solutions of Schrödinger operators with periodic potentials. For \cref{eq:ddddd} with a constant potential in $2$ dimensions, for instance, we see that any solution  can be represented as a sum or integral of these separable solutions (see, e.g.,  Chapter I of \cite{miller1977symmetry} ). We conjecture that any bounded solution of Schrödinger operators with periodic potentials are linear combinations of Bloch solutions or their limitations,  which is indeed supported by the plane wave approach based electronic structure calculations

\appendix
\section{Details of one dimensional cases}\label{appdix:class-eigenfunc}

We first introduce some definitions and notations. 
\begin{definition}
A function $f(t)$ on $(-\infty, \infty)$ is said to be piecewise-continuous if it has only finitely many discontinuity points on any finite interval.
        \end{definition}

        \begin{definition}\label{def:int}
        A piecewise-continuous function $f$ is said to be integrable if, at any discontinuity point $t^{*}$, the limits
        \begin{equation*}
        \begin{array}{rcl}
            \lim_{\epsilon\rightarrow0^{+}}\int_{t_{1}}^{t^{*}-\epsilon}|f(t)|dt&=\int_{t_{1}}^{t^{*}}|f(t)|dt,\\
           \lim_{\epsilon\rightarrow0^{+}}\int_{t^{*}+\epsilon}^{t_{2}}|f(t)|dt&=\int_{t^{*}}^{t_{2}}|f(t)|dt
        \end{array}
        \end{equation*}
        exist, where $t_{1}, t_{2}$ are so chosen that the intervals $(t_{1}, t^{*})$, $(t^{*}, t_{2})$ contain no other discontinuity points.
        \end{definition}

\begin{definition}\label{def:pc}
      Let $C_{p}^{0}(I)$ be the class of the functions defined on the interval $I$ that are integrable and piecewise-continuous. Recursively, denote $C_{p}^{l}(I)$ the class of the functions that are differentiable on $I$ with its derivative in $C_{p}^{l-1}(I), l=1, 2, 3, \ldots$ Let $C_{p, \gamma}^{l}(I)$ be the class of the functions in $C_{p}^{l}(I)$ and in the period of $\gamma$, $\gamma>0, l=1, 2, 3, \ldots$
\end{definition}

We sketch the Floquet-Lyapunov theory for ordinary differential equations below (see,  e.g., \cite{kuchment2016overview} and  Chapter II of  \cite{Yakubovich1975linear}). Consider a linear system  
\begin{equation}\label{eq3.1}
    \frac{dx}{dt}=A(t)x,\quad x\in\mathbb{C}^{n}, t\in\mathbb{R},
\end{equation}
where the complex-valued $n\times n$ matrix function $A(t)$ has a period $T\,(T>0)$, i.e. $A(t+T)=A(t), \forall t>0$.
\begin{definition}
The $n\times n$ matrix $X(t)$ is called the fundamental matrix if its $j$-th column $x_{j}(t)$ is the solution of \cref{eq3.1} with the initial condition $x_{j}(0)=e_{j}$, where $e_{j}$ is the $j$-th column of the identity matrix, for $j=1,2,\ldots,n$.
\end{definition}

The existence and uniqueness of the fundamental matrix are proved in  $(C_{p}^{1}(\mathbb{R}))^{n\times n}$ (see Chapter II of \cite{Yakubovich1975linear} for more details).  With the fundamental matrix $X(t)$, any solution $x(t)$ of \cref{eq3.1} can be written as
\begin{equation}\label{eq3.2}
    x(t)=X(t)x(0).
\end{equation}

Moreover, the $n\times n$ matrix $M:=X(T)$ is called the monodromy matrix of \cref{eq3.1}.

The following Floquet theorem, whcih can be founded in \cite{Yakubovich1975linear}, will be applied in our discussion
\begin{lemma}\label{thm3.3}If $A(t)\in (C_{p}^{0}(\mathbb{R}))^{n\times n}$, then there exists a $T$-periodic matrix function $P(t)\in (C_{p}^{1}(\mathbb{R}))^{n\times n}$ and a constant matrix $C$, such that the fundamental matrix $X(t)$ of \cref{eq3.1} is of the form
\begin{equation}\label{eq3.3}
    X(t)=P(t)e^{Ct}.
\end{equation}
\end{lemma}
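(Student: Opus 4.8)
The plan is to exploit the periodicity of $A$ to relate the shifted fundamental matrix $X(t+T)$ to $X(t)$ through a single constant matrix, and then to manufacture the periodic factor $P(t)$ by ``peeling off'' an exponential built from a logarithm of the monodromy matrix. The whole argument reduces Floquet's representation \cref{eq3.3} to one nontrivial fact in linear algebra, together with a routine verification of periodicity.

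First I would observe that $Y(t):=X(t+T)$ is again a matrix solution of \cref{eq3.1}: differentiating and using $A(t+T)=A(t)$ gives $Y'(t)=A(t+T)X(t+T)=A(t)Y(t)$. Applying the representation \cref{eq3.2} column by column, any matrix solution $Z$ satisfies $Z(t)=X(t)Z(0)$, so $Y(t)=X(t)Y(0)$; since $Y(0)=X(T)=M$ is exactly the monodromy matrix, this reads $X(t+T)=X(t)M$. Because $X(T)$ is a value of a fundamental matrix it is invertible, so $M$ is nonsingular.

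Next I would produce a constant matrix $C$ with $e^{CT}=M$, set $P(t):=X(t)e^{-Ct}$, and verify periodicity directly. Using $X(t+T)=X(t)M=X(t)e^{CT}$ together with the fact that $e^{CT}$, $e^{-Ct}$, and $e^{-CT}$ commute (they are all functions of the single matrix $C$), one computes $P(t+T)=X(t+T)e^{-C(t+T)}=X(t)e^{CT}e^{-Ct}e^{-CT}=X(t)e^{-Ct}=P(t)$, so $P$ has period $T$. Rearranging the definition of $P$ gives precisely $X(t)=P(t)e^{Ct}$, which is \cref{eq3.3}. The regularity claim $P\in(C_{p}^{1}(\mathbb{R}))^{n\times n}$ is then immediate, since $X\in(C_{p}^{1}(\mathbb{R}))^{n\times n}$ by the existence theory quoted for \cref{eq3.1} and $e^{-Ct}$ is entire in $t$.

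The main obstacle is the construction of $C$, i.e.\ showing that the monodromy matrix admits a logarithm. This is exactly the point where allowing complex-valued $A$ (hence complex $C$) is essential: over $\mathbb{C}$ every invertible matrix has a logarithm, whereas over $\mathbb{R}$ this can fail. I would justify it via the Jordan canonical form, writing $M=S(I+N)$ with $S$ invertible and semisimple and $N$ nilpotent commuting with $S$, defining $\log(I+N)$ by its terminating power series and choosing logarithms of the (nonzero) eigenvalues of $S$, so that $\log M$ is obtained block by block and $e^{\log M}=M$; dividing by $T$ yields $C$ with $e^{CT}=M$. Everything else in the argument is elementary differentiation and the uniqueness theory already available for \cref{eq3.1}.
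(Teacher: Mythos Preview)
Your argument is correct and is precisely the standard proof of Floquet's theorem. Note, however, that the paper does not actually supply a proof of this lemma: it is stated as a result ``which can be found in \cite{Yakubovich1975linear}'' and is simply quoted for later use. So there is no proof in the paper to compare against; what you have written is essentially the classical argument one finds in that reference, and it goes through without difficulty. One minor wording point: your multiplicative decomposition $M=S(I+N)$ is the Jordan--Chevalley (semisimple-unipotent) factorization rather than the Jordan canonical form per se, but since both routes yield a logarithm of an invertible complex matrix this does not affect the validity of the argument.
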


Note that there holds
\begin{equation}\label{eq3.4}
    P(0)=I,
\end{equation}
where $I$ is the identity matrix. Hence, we obtain from the periodicity of $P$ that
\begin{equation}\label{eq3.5}
    M=X(T)=e^{CT}.
\end{equation}

\begin{proof}[Proof of \cref{prop:calss}]
Let $\displaystyle\phi=\frac{d\psi}{dx}$ and $U=(\psi, \phi)^{T}$. It follows from \cref{eq: sl} that
\begin{equation}\label{eq4.2}
    \frac{dU}{dx}=\left(\begin{array}{cc}
      0&1\\V-\lambda&W
    \end{array}\right)U
\end{equation}
 
Note that $\left(\begin{array}{cc}
      0&1\\V-\lambda&W
    \end{array}\right)$ is an integrable, piecewise-continuous and $\mathcal{R}$-periodic matrix function. We obtain from \cref{thm3.3} that there eixsts $\mathcal{R}$-periodic matrix function $P(x)\in (C_{p}^{1}(\mathbb{R}))^{2\times2}$ and a constant matrix $C$ such that the the fundamental matrix $X(x)$ satisfies
\begin{equation}\label{eq4.3}
    X(x)=P(x)e^{Cx}.
\end{equation}

Moreover, if $M$ denotes the monodromy matrix of \cref{eq4.2}, then \cref{eq3.4} and \cref{eq3.5} imply,
\begin{equation}\label{eq4.4}
    M=e^{C\gamma}.
\end{equation}

Since $C\in\mathbb{C}^{2\times 2}$, there are three possible Jordan normal forms of $C$ as follows, 
\begin{equation}\label{eq4.5}
    J_{1}=\left(\begin{array}{cc}
      z_{0}&0\\0&z_{0}
    \end{array}\right), 
    J_{2}=\left(\begin{array}{cc}
      z_{0}&1\\0&z_{0}
    \end{array}\right),
    J_{3}=\left(\begin{array}{cc}
      z_{1}&0\\0&z_{2}
    \end{array}\right),
\end{equation}
where $z_{0}, z_{1}, z_{2}\in\mathbb{C}$. And there exist invertible matrices $T_{j}$ satisfying $T_{j}^{-1}CT_{j}=J_{j}$, for $j=1,2,3$.

Let $Q_{j}(x)=T_{j}^{-1}P(x)T_{j}$, we obtain that
\begin{equation}
\begin{array}{rcl}
     X(x)&=&T_{j}(T_{j}^{-1}P(x)T_{j}T_{j}^{-1}e^{Cx}T_{j})T_{j}^{-1}\\
  &=&T_{j}(Q_j(x)e^{T_{j}^{-1}CT_{j}x})T_{j}^{-1}\\
  &=&T_{j}(Q_j(x)e^{J_{j}x})T_{j}^{-1}\label{eq4.6},
\end{array}
\end{equation}
for $j=1, 2, 3$. Moreover, we have that
  \begin{equation}\label{eq4.7}
      e^{J_{1}x}=\left(\begin{array}{cc}
      e^{z_{0}x}&0\\0&e^{z_{0}x}
    \end{array}\right), e^{J_{2}x}=\left(\begin{array}{cc}
      e^{z_{0}x}&xe^{z_{0}x}\\0&e^{z_{0}x}
    \end{array}\right),
      e^{J_{3}x}=\left(\begin{array}{cc}
      e^{z_{1}x}&0\\0&e^{z_{2}x}
    \end{array}\right).
  \end{equation}
  
  Since $Q_{j}\in (C_{p}^1(\mathbb{R}))^{2\times 2}$ is $\mathcal{R}$-periodic, 
  %continuous and $\mathcal{R}$-periodic with integrable piecewise-continuous derivatives for $j=1,2,3$, 
  we get from \cref{eq3.2}, \cref{eq4.6}, and \cref{eq4.7} that there exist $\mathcal{R}$-periodic functions $p, p_{1}, p_{2}\in C_{p}^{1}(\mathbb{R})$ such that $\psi(x)$ is of one of the following forms:

\begin{enumerate}
    \item $e^{z_{0}x}p(x)$,
    \item $e^{z_{0}x}(p_{1}(x)+xp_{2}(x))$,
    \item $e^{z_{1}x}p_{1}(x)+e^{z_{2}x}p_{2}(x)$.
\end{enumerate}

If there exists $m\in\mathbb{Z}$ such that $z_{1}=z_{2}+\mathrm{i}\frac{2m\pi}{\gamma}$  in the case of (iii),  then
\begin{equation*}
\begin{array}{rcl}
     \psi(x)&=&e^{z_{1}x}p_{1}(x)+e^{z_{2}x}p_{2}(x)\\
    &=&e^{z_{2}x}(e^{\mathrm{i}\frac{2m\pi}{\gamma}x}p_{1}(x)+p_{2}(x)).
\end{array}
\end{equation*}

Note that $e^{\mathrm{i}\frac{2m\pi}{\gamma}x}p_{1}(x)+p_{2}(x)$ is $\mathcal{R}$-periodic. Hence, $\psi(x)$ is of the form (i), for which we set $z_{1}-z_{2}\neq\mathrm{i}\frac{2m\pi}{\gamma}, \forall m\in\mathbb{Z}$ in the form (iii). 

In the context of Floquet theory, the eigenvalues of the monodromy matrix $M$, denoted by $e^{z_{0}}$,  $e^{z_{1}}, e^{z_{2}}$ depending on the specific cases, are typically referred to as Floquet multipliers \cite{chicone2006ordinary}. Let $k_{0}=\frac{z_{0}}{\mathrm{i}}$, $k_{1}=\frac{z_{1}}{\mathrm{i}}$ and $k_{2}=\frac{z_{2}}{\mathrm{i}}$. $k_{0}$, $k_{1}$ and $k_{2}$ are what we called quasimomenta. 

Suppose $\psi$ is a solution of \cref{eq: sl} that is of form (i), given by $\psi(x)=e^{\mathrm{i}kx}p(x)$, where $k\in\mathbb{C}$ and $p$ is $\mathcal{R}$-periodic. For $\tilde{k}\in[k]_{\gamma}$,  there exists $\tilde{m}\in\mathbb{Z}$ such that 
\begin{equation*}
   k=\tilde{k}+\frac{2\tilde{m}\pi}{\gamma},
   \end{equation*}
hence 
\begin{equation*}
    \psi(x)=e^{\mathrm{i}kx}p(x)=e^{\mathrm{i}\tilde{k}x}e^{\mathrm{i}\frac{2\tilde{m}\pi}{\gamma}x}p(x). 
\end{equation*}

Since $e^{\mathrm{i}\frac{2\tilde{m}\pi}{\gamma}x}p(x)$ is also $\mathcal{R}$-periodic, $\tilde{k}$ is also the quasimomenta of $\psi$. Similarly, we are able to illustrate the quasimomenta of solutions of form (ii) or (iii).

We completes the proof.
\end{proof}

\begin{proof}[Proof of \cref{thm4.3}]
We obtain from \cref{thm4.1} that for $(\lambda, \psi)\in\mathbb{R}\times C_{p}^{2}(\mathbb{R})$, $\psi$ is of the three specific forms.
We then prove the conclusion according to the three different forms of the solutions.

When $\lambda\in\sigma_{g}^{1}(\mathscr{L})$, we substitute $\psi(x)=e^{\mathrm{i}k_{0}x}p(x)$ into \cref{eq: sl} and arrive at
\begin{equation}\label{eq4.8}
      \mathcal{D}^{k_{0}}p(x)=\lambda p(x), 
\end{equation}
which implies $p$ is the $C_{p, \gamma}^{2}(\mathbb{R})$ solution of $\mathcal{D}^{k_{0}}$. 

When $\lambda\in\sigma_{g}^{2}(\mathscr{L})$, we substitute $\psi(x)=e^{\mathrm{i}k_{0}x}(p_{1}(x)+xp_{2}(x))$ into \cref{eq: sl} and get
\begin{equation}\label{eq4.9}
     (\lambda \operatorname{id}-\mathcal{D}^{k_{0}})p_{1}(x)+x(\lambda \operatorname{id}-\mathcal{D}^{k_{0}})p_2(x)+(2\mathrm{i}k_{0}-W)p_{2}(x)+2p_{2}^{\prime}(x)=0,
\end{equation}
where $\operatorname{id}$ denotes the identity operator. Since $W, V, p_{1}$ and $p_{2}$ are $\mathcal{R}$-periodic, we substitute $x=x+\gamma$ into \cref{eq4.9} and get a new equation, which minus \cref{eq4.9} and finally yields
\begin{equation}\label{eq4.10}
    \mathcal{D}^{k_{0}}p_{2}(x)=\lambda p_{2}(x).
\end{equation}

Therefore,  $p_{2}$ is the $C_{p, \gamma}^{2}(\mathbb{R})$ solution of $\mathcal{D}^{k_{0}}$. 

We obtain from \cref{eq4.9} and \cref{eq4.10} that 
\begin{equation}\label{eq4.11}
    \mathcal{D}^{k_{0}}p_{1}(x)=\lambda p_{1}(x)+(2\mathrm{i}k_{0}-W)p_{2}(x)+2p_{2}'(x).
\end{equation}

When $\lambda\in\sigma_{g}^{3}(\mathscr{L})$, we substitute $\psi(x)=e^{\mathrm{i}k_{1}x}p_{1}(x)+e^{\mathrm{i}k_{2}x}p_{2}(x)$ into \cref{eq: sl} and arrive at
\begin{displaymath}
  e^{\mathrm{i}k_{1}x}(\mathcal{D}^{k_{1}}p_{1}(x)-\lambda p_{1}(x))+e^{\mathrm{i}k_{2}x}(\mathcal{D}^{k_{2}}p_{2}(x)-\lambda p_{2}(x))=0,
\end{displaymath}
which implies
\begin{equation}\label{eq4.12}
    e^{\mathrm{i}(k_{1}-k_{2})x}(\mathcal{D}^{k_{1}}p_{1}(x)-\lambda p_{1}(x))+(\mathcal{D}^{k_{2}}p_{2}(x)-\lambda p_{2}(x))=0.
\end{equation}

Since $p_{1}(x)$ and $p_{2}(x)$ are $\mathcal{R}$-periodic, there holds
\begin{equation}\label{eq4.13}
     e^{\mathrm{i}(k_{1}-k_{2})(x+l\gamma)}(\mathcal{D}^{k_{1}}p_{1}(x)-\lambda p_{1}(x))+(\mathcal{D}^{k_{2}}p_{2}(x)-\lambda p_{2}(x))=0,\quad \forall l\in\mathbb{Z}.
\end{equation}

Letting \cref{eq4.13} subtract \cref{eq4.12}, there holds for all $l\in\mathbb{Z}$ that 
\begin{equation}
    e^{\mathrm{i}(k_{1}-k_{2})x}(e^{\mathrm{i}(k_{1}-k_{2})l\gamma}-1)(\mathcal{D}^{k_{1}}p_{1}(x)-\lambda p_{1}(x))=0,
\end{equation}
which together with $[k_{1}]_{\gamma}\neq[k_{2}]_{\gamma}$ leads to
\begin{equation}\label{eq4.16}
    \mathcal{D}^{k_{1}}p_{1}(x)=\lambda p_{1}(x).
\end{equation}

By substituting \cref{eq4.16} into \cref{eq4.12}, we arrive at
\begin{equation}\label{eq4.17}
    \mathcal{D}^{k_{2}}p_{2}(x)=\lambda p_{2}(x).
\end{equation}
This completes the proof.
\end{proof}

We recall the following Liouville formula, which can be found in \cite{grimshaw1991nonlinear}.
\begin{lemma}\label{le:liou}
    Consider an $n$-dimensional first-order homogeneous linear differential equation
    \begin{equation}\label{eq:liou}
        y'=B(t)y, \quad t\in I,
    \end{equation}
    on an interval $I\subset\mathbb{R}$. Let $X$ be the fundamental matrix. If $\operatorname{tr} B(t)$ is a continuous function, then the determinant of $X$ satisfies, 
    \begin{displaymath}
       \det X(t)=\det X(t_{0})\exp\left(\int_{t_{0}}^{t}\operatorname{tr}B(s)ds\right)
    \end{displaymath}
    for any $t_{0},t\in I$.
    \end{lemma}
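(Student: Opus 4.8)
The plan is to reduce the matrix statement to a scalar linear ODE for the determinant $W(t):=\det X(t)$ and then integrate it. First I would record that, by the definition of the fundamental matrix, each column of $X$ solves \cref{eq:liou}, so $X$ itself satisfies the matrix equation $X'(t)=B(t)X(t)$ with $X\in(C_{p}^{1}(\mathbb{R}))^{n\times n}$ on the interval $I$.

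The key step is to differentiate $W=\det X$. Writing $X$ through its rows $r_{1}(t),\dots,r_{n}(t)$ and using the multilinearity of the determinant, one has
\[
W'(t)=\sum_{i=1}^{n}\det\bigl(r_{1},\dots,r_{i-1},r_{i}',r_{i+1},\dots,r_{n}\bigr).
\]
Because $X'=BX$, the derivative of the $i$-th row is $r_{i}'=\sum_{k=1}^{n}B_{ik}(t)\,r_{k}$; substituting this and invoking the alternating property of the determinant, every term with $k\neq i$ contains a repeated row and hence vanishes, leaving only the diagonal contribution $B_{ii}(t)\det X$. Summing over $i$ gives the scalar identity
\[
W'(t)=\Bigl(\sum_{i=1}^{n}B_{ii}(t)\Bigr)W(t)=\operatorname{tr}B(t)\,W(t).
\]
Equivalently, since a fundamental matrix is nonsingular, one could invoke Jacobi's formula $W'=W\operatorname{tr}(X^{-1}X')$ together with $X^{-1}X'=X^{-1}BX$ and the cyclic invariance of the trace, which again yields $W'=\operatorname{tr}B\,W$.

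Finally I would integrate this first-order scalar linear equation. Since $\operatorname{tr}B$ is continuous by hypothesis, the factor $\exp(-\int_{t_{0}}^{t}\operatorname{tr}B(s)\,ds)$ is a well-defined $C^{1}$ integrating factor, and $\frac{d}{dt}\bigl(W(t)\exp(-\int_{t_{0}}^{t}\operatorname{tr}B)\bigr)=0$ forces $W(t)\exp(-\int_{t_{0}}^{t}\operatorname{tr}B)$ to be constant; evaluating at $t=t_{0}$ identifies the constant as $W(t_{0})$ and produces the claimed formula.

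The only delicate point is the regularity underlying the determinant differentiation: in the present setting $B$ need only be piecewise continuous, so $X'$ and hence $W'$ may fail to exist at the finitely many discontinuities of $B$. I expect this to be the main technical obstacle, and I would handle it by observing that $W$ is continuous and that the identity $W'=\operatorname{tr}B\,W$ holds on each open subinterval where $B$ is continuous; the integrated form then extends across the discontinuity points by continuity of both sides, which is precisely why the hypothesis is phrased in terms of the continuous function $\operatorname{tr}B$ rather than $B$ itself.
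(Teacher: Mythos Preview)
Your argument is the standard, correct proof of the Liouville (Abel--Jacobi) formula: differentiate $\det X$ via multilinearity, use $X'=BX$ and the alternating property to reduce to $W'=\operatorname{tr}B\,W$, then integrate with the continuous integrating factor. The paper itself does not prove this lemma at all; it merely recalls it with a citation to \cite{grimshaw1991nonlinear}, so there is no in-paper proof to compare against and your proposal simply fills in what the authors left to the literature.
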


\begin{proof}[Proof of \cref{cor4.8}]    
According to \cref{le:liou}, let $X(x)$ be the fundamental function of \cref{eq4.2} and $B(x)=\left(\begin{array}{cc}
      0&1\\V-\lambda&W
    \end{array}\right)$. We obtain from \cref{le:liou} that,
\begin{equation*}
    \det X(\gamma)=\det X(0)\exp\left(\int_{0}^{\gamma}\operatorname{tr} B(s)ds\right)=e^{\gamma \bar{W}},
\end{equation*}
which together with \cref{eq3.5} implies
  \begin{equation*}
        \operatorname{det}M=e^{\gamma \bar{W}},
    \end{equation*}
    where $M$ denotes the monodromy matrix of \cref{eq4.2} with $W, V\in C_{p, \gamma}^{0}(\mathbb{R})$.

In accordance of the proof of \cref{thm4.1}, in the case of  \cref{item:eig-bloch} or case of  \cref{item:eig-growth}, we have
\begin{equation*}
    \det M=\det e^{\gamma C}=e^{2\mathrm{i}k_{0}\gamma}=e^{\gamma \bar{W}}, 
\end{equation*}
 which yields
 \begin{equation*}
     k_{0}=\frac{1}{2\mathrm{i}}\bar{W}+\frac{n\pi}{\gamma}, n\in\mathbb{Z}.
 \end{equation*} In the case of \cref{item:eig-comb-bloch}, similarly,  there hold 
  \begin{equation*}
      k_{1}+k_{2}=\frac{1}{\mathrm{i}}\bar{W}+\frac{2n\pi}{\gamma}, n\in\mathbb{Z}.
  \end{equation*}
\end{proof}

In accordance with the discussion of the 1-D Strum-Liouville problem, we have the following classification of solutions of 1-D periodic Schrödinger equations.
\begin{proposition}
Solutions of the 1-D Schrödinger equation \cref{eq:sco} with the potential $V\in C_{p, \gamma}^{0}(\mathbb{R})$ are of one of the following forms:
\begin{itemize}
    \item form (i), given by $\psi(x)=e^{\mathrm{i}\frac{n\pi}{\gamma}x}p(x), n\in\mathbb{Z}$, $|\mathcal{A}(\lambda)|=1$: a special class of Bloch solutions which are indeed periodic solutions i.e., $\psi(x+\gamma)=\psi(x)$ or anti-periodic solutions i.e., $\psi(x+\gamma)=-\psi(x), x\in\mathbb{R}$;
    \item form (ii), given by $e^{\mathrm{i}\frac{n\pi}{\gamma}x}(p_{1}(x)+xp_{2}(x))$, $|\mathcal{A}(\lambda)|=1$: represents a class of special solutions  that are not Bloch solutions;
    \item  form (iii), given by $e^{\mathrm{i}k_{1}x}p_{1}(x)+e^{\mathrm{i}k_{2}x}p_{2}(x)$ and $k_{1}+k_{2}=\frac{2n\pi}{\gamma}, n\in\mathbb{Z}$, $|\mathcal{A}(\lambda)|=2$: Bloch solutions or their linear combination whose quasimomenta are in two different congruence classes modulo $\frac{2\pi}{\gamma}$, i.e., $[k_{1}]_{\gamma}\neq[k_{2}]_{\gamma}$. In fact, if $\psi_{1}(x)=e^{\mathrm{i}k_{1}x}p_{1}(x)$ is a solution,  $\psi_{2}(x)=e^{\mathrm{i}k_{2}x}p_{2}(x)$ with $k_{2}=-k_{1}$ and $p_{2}(x)=p_{1}(-x)$ is also a solution corresponding to the same $\lambda$.
\end{itemize}
\end{proposition}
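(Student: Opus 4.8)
The plan is to read the statement off as the specialization $W\equiv 0$ of the Sturm--Liouville results already proved, so that the three forms come for free and only the values of the quasimomenta need to be identified. Concretely, I would note that \cref{eq:sco} is exactly \cref{eq: sl} with the (trivially $\mathcal{R}$-periodic) coefficient $W\equiv 0\in C_{p,\gamma}^0(\mathbb{R})$. Hence \cref{prop:calss} (equivalently \cref{thm4.1}) applies verbatim: every $\psi\in C_p^2(\mathbb{R})$ solving \cref{eq:sco} is of one of the three listed forms, with $|\mathcal{A}(\lambda)|=1$ in forms (i)--(ii) and $|\mathcal{A}(\lambda)|=2$ in form (iii). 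The trichotomy and the values of $|\mathcal{A}(\lambda)|$ are thus inherited directly, and no new classification argument is needed.

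Second, I would feed $\bar{W}=\frac{1}{\gamma}\int_0^{\gamma}W\,ds=0$ into \cref{cor4.8}. For forms (i)--(ii) this collapses the two alternatives to $\mathcal{A}(\lambda)=\{[0]_\gamma\}$ or $\{[\tfrac{\pi}{\gamma}]_\gamma\}$, whose union is precisely $\{\tfrac{n\pi}{\gamma}:n\in\mathbb{Z}\}$, giving $k_0=\tfrac{n\pi}{\gamma}$; for form (iii) it gives $[k_1+k_2]_\gamma=[0]_\gamma$, i.e. $k_1+k_2=\tfrac{2n\pi}{\gamma}$. The periodic/anti-periodic description of form (i) is then immediate from the computation
\begin{equation*}
\psi(x+\gamma)=e^{\mathrm{i}k_0(x+\gamma)}p(x+\gamma)=e^{\mathrm{i}n\pi}\psi(x)=(-1)^n\psi(x),
\end{equation*}
using that $p$ is $\gamma$-periodic, so that $\psi$ is $\gamma$-periodic when $n$ is even and $\gamma$-anti-periodic when $n$ is odd.

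Finally, for the constructive claim in form (iii) I would produce the companion Bloch solution from the reflection symmetry of the operator. Since $-\mathrm{d}^2/\mathrm{d}x^2$ is invariant under $x\mapsto -x$, the reflected function $\psi_1(-x)=e^{-\mathrm{i}k_1 x}p_1(-x)$ is exactly of the stated shape $e^{\mathrm{i}k_2 x}p_2(x)$ with $k_2=-k_1$ and $p_2(\cdot)=p_1(-\cdot)$, and its quasimomentum lies in $[-k_1]_\gamma=[k_2]_\gamma$, which is distinct from $[k_1]_\gamma$ because $|\mathcal{A}(\lambda)|=2$; thus $\psi_1$ together with this reflected solution spans $\mathcal{E}(\lambda)$ as in \cref{thm4.3}. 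I expect this last step to be the main obstacle, for two reasons. First, the reflected function solves \cref{eq:sco} only when the potential shares the reflection symmetry $V(-x)=V(x)$, so to reproduce the literal formula $p_2(\cdot)=p_1(-\cdot)$ one must make this symmetry hypothesis explicit (equivalently, observe through $\cref{thm4.3}$ that $\mathcal{D}^{-k_1}=R\,\mathcal{D}^{k_1}R$ under reflection $R$ precisely when $V$ is even). Second, one must check that the constructed solution genuinely falls into the second congruence class $[k_2]_\gamma$ rather than collapsing back into $[k_1]_\gamma$, i.e. that it is the independent companion guaranteed by $|\mathcal{A}(\lambda)|=2$ and not a multiple of $\psi_1$; for a general real $V$ the existence of such a companion is still furnished abstractly by the two-dimensionality of $\mathcal{E}(\lambda)$ in \cref{thm4.3}, with $p_2$ determined by $\mathcal{D}^{k_2}p_2=\lambda p_2$.
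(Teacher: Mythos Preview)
Your approach is correct and is precisely the paper's own: the proposition is presented without a separate proof, as the direct specialization $W\equiv 0$ of \cref{prop:calss}, \cref{thm4.1}, \cref{thm4.3}, and \cref{cor4.8}, exactly as you outline. Your caution about the reflection step is in fact sharper than the paper's treatment---the explicit formula $p_2(x)=p_1(-x)$ genuinely requires $V(-x)=V(x)$, which the paper does not state, whereas for general $V$ only the abstract existence of the companion in $[-k_1]_\gamma$ (via \cref{thm4.3} and $\dim\mathcal{E}(\lambda)=2$) is available, just as you note.
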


The following theorem  illustrates the relationship between bounded solutions and Bloch solutions. 
\begin{theorem}\label{pro}
The bounded solutions in the classical sense of the Schrödinger equation \cref{eq:sco} with the real valued potential $V\in C_{p,\gamma}^{0}(\mathbb{R})$ are either Bloch solutions or the linear combination of two Bloch solutions with real quasimomenta.
\end{theorem}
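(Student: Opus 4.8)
The plan is to combine the classification in \cref{thm4.1} with the quasimomentum formulas of \cref{cor4.8}, specialized to the Schr\"odinger case $W\equiv 0$, and then to eliminate the unbounded possibilities directly from the growth behaviour of each canonical form. Since $W\equiv 0$ we have $\bar W=\frac{1}{\gamma}\int_0^\gamma W(s)\,ds=0$, so \cref{cor4.8} forces the quasimomenta of the forms \cref{item:eig-bloch} and \cref{item:eig-growth} to satisfy $k_0=\frac{n\pi}{\gamma}\in\mathbb{R}$, while for the form \cref{item:eig-comb-bloch} it gives $k_1+k_2=\frac{2n\pi}{\gamma}\in\mathbb{R}$, hence $\operatorname{Im}k_1=-\operatorname{Im}k_2$. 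Given a bounded classical solution $\psi\in C_p^2(\mathbb{R})$ of \cref{eq:sco}, by \cref{thm4.1} it is of one of these three forms, and it suffices to show that the form \cref{item:eig-growth} cannot occur and that in the form \cref{item:eig-comb-bloch} the two quasimomenta are real.

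First I would rule out the form \cref{item:eig-growth}. Here $\psi(x)=e^{\mathrm{i}k_0x}(p_1(x)+xp_2(x))$ with $k_0\in\mathbb{R}$ and $p_2\not\equiv 0$, so $|\psi(x)|=|p_1(x)+xp_2(x)|$. Choosing $x_0$ with $p_2(x_0)\neq 0$ and using the $\gamma$-periodicity of $p_1,p_2$, the values $|\psi(x_0+n\gamma)|=|p_1(x_0)+(x_0+n\gamma)p_2(x_0)|$ grow linearly in $n$, so $\psi$ is unbounded. Hence a bounded solution is either of the form \cref{item:eig-bloch} or \cref{item:eig-comb-bloch}. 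In the form \cref{item:eig-bloch}, $\psi(x)=e^{\mathrm{i}k_0x}p(x)$ with $k_0\in\mathbb{R}$ and $p$ periodic, so $|\psi|=|p|$ is automatically bounded and $\psi$ is a Bloch solution with real quasimomentum.

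It remains to treat the form \cref{item:eig-comb-bloch}, which is the main obstacle. Write $\psi=\psi_1+\psi_2$ with $\psi_1=e^{\mathrm{i}k_1x}p_1$ and $\psi_2=e^{\mathrm{i}k_2x}p_2$; by \cref{thm4.3} each $\psi_j$ is itself a Bloch solution of \cref{eq:sco} for the same $\lambda$. Since $\operatorname{Im}k_2=-\operatorname{Im}k_1$, if I suppose $\operatorname{Im}k_1=\alpha\neq 0$, say $\alpha>0$ after relabelling $\psi_1,\psi_2$, then $|e^{\mathrm{i}k_1x}|=e^{-\alpha x}$ and $|e^{\mathrm{i}k_2x}|=e^{\alpha x}$. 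Picking $x_0$ with $p_2(x_0)\neq 0$ and evaluating along $x_0+n\gamma\to+\infty$, the triangle inequality gives
\begin{equation*}
    |\psi(x_0+n\gamma)|\geqslant e^{\alpha(x_0+n\gamma)}|p_2(x_0)|-e^{-\alpha(x_0+n\gamma)}|p_1(x_0)|\longrightarrow\infty,
\end{equation*}
contradicting boundedness. Therefore $\alpha=0$, both $k_1,k_2$ are real, and $\psi$ is a linear combination of two Bloch solutions with real quasimomenta, which completes the classification.

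The key point, and the only place cancellation could conceivably preserve boundedness, is the form \cref{item:eig-comb-bloch}: the two Bloch pieces live on opposite exponential scales $e^{\mp\alpha x}$, so as $x\to+\infty$ the growing piece dominates and no cancellation is possible once $p_2\not\equiv 0$. The triangle inequality makes this rigorous along the arithmetic progression $x_0+n\gamma$, where periodicity freezes $|p_1|,|p_2|$ at positive constants, and the symmetric argument as $x\to-\infty$ (equivalently, relabelling $\psi_1,\psi_2$) disposes of the case $\alpha<0$.
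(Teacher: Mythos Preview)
Your proof is correct and follows essentially the same approach as the paper: classify via \cref{thm4.1}/\cref{thm4.3}, rule out form \cref{item:eig-growth} by linear growth along the arithmetic progression $x_0+n\gamma$, and then show the quasimomenta in the remaining forms must be real. Your explicit appeal to \cref{cor4.8} (giving $k_0\in\mathbb{R}$ in forms \cref{item:eig-bloch}, \cref{item:eig-growth} and $\operatorname{Im}k_1=-\operatorname{Im}k_2$ in form \cref{item:eig-comb-bloch}) actually makes the form \cref{item:eig-comb-bloch} step cleaner than the paper's somewhat elliptical ``it suffices to prove the conclusion for Bloch solutions''; the only cosmetic omission is the degenerate subcase $p_2\equiv 0$ (where $\psi=\psi_1$ is a single Bloch solution and the same exponential-growth argument applies in the direction $x\to-\infty$).
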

\begin{proof}
 We apply \cref{thm4.1} and \cref{thm4.3} to prove the conclusion. Let $\psi$ be an solution of \cref{eq:sco} corresponding to $\lambda$.
 
  If $\lambda\in\sigma_{g}^{2}(\mathscr{H})$, $\psi$ is of the form \cref{item:eig-growth} in \cref{thm4.1}, i.e., $\psi(x)=e^{\mathrm{i}k_{0}x}(p_{1}(x)+xp_{2}(x))$ with $p_{2}(x)\not\equiv0$, then the $\mathcal{R}$-periodic function $p_{2}\in C_{p}^{1}(\mathbb{R})$ implies that there exists a sequence $\{x_{n}\}_{n\geqslant1}\subset\mathbb{R}$ such that $x_n\to\infty$ and $|x_{n}p_{2}(x_{n})|\rightarrow\infty$ as $n\rightarrow\infty$. Indeed, since $p_{2}(x)\not\equiv0$, there exists $x_{0}$ such that $p_{2}(x_{0})\not=0$. Then, for $n=1, 2, \cdots$, let $x_{n}=x_{0}+n\gamma$ and we arrive at $|x_{n}p_{2}(x_{n})|=|x_{0}+n\gamma||p_{2}(x_{0})|\rightarrow\infty$ as $n\rightarrow\infty$.  Hence, the bounded solutions of \cref{eq:sco} must be of the forms \cref{item:eig-bloch} and \cref{item:eig-comb-bloch} in \cref{thm4.1}, 
i.e., the Bloch solutions or their linear combination.
 
Next it is sufficient to prove the conclusion for Bloch solutions.     Consider the Bloch solution $\varphi$, given by $\varphi(x)=e^{\mathrm{i}kx}p(x)$. Let $\operatorname{Im} k$ be the imaginary part of $k$. If $\operatorname{Im} k\neq0$, then 
 there exists a sequence $\{x_{n}\}_{n\geqslant1}$ such that $x_n\to \infty$ and $|\psi(x_n)|=|e^{\mathrm{i}kx_n}p(x_{n})|=e^{-\operatorname{Im}(k)x_{n}}|p(x_n)|\rightarrow\infty$. Hence, the bounded Bloch solutions of \cref{eq:sco} have quasimomentum $k_{0}\in\mathbb{R}$. 
\end{proof}

As mentioned in the introduction, the existence of Bloch solutions can be simply summarized as that the existence of nontrivial bounded solutions of \cref{eq:sco} implies the existence of Bloch solutions with real quasimomenta \cite{kuchment1993floquet, kuchment2016overview}. Indeed, \cref{pro} enhances that nontrivial bounded solutions are exactly the Bloch solutions or their linear combination with real quasimomenta in the one-dimensional case. Moreover, \cref{thm4.1}, \cref{thm4.3} and \cref{cor4.8} provide  a classification of solutions of 1-D Schr{\"o}dinger equations with periodic coefficients. We have classified the solutions according to the number of the congruence classes of the quasimomenta and showed the number is at most two.

For bounded Bloch solutions, quasimomenta are real. Note that in the case of $d$ dimensions, the Brillouin zone \cite{ashcroft1976solid, misra2011physics} $\mathcal{B}$ is isomorphic to the quotient space $\mathbb{R}^{d}/\mathcal{R}^{*}$ \cite{kuchment2016overview, lewin2010spectral}. According to the definitions, we see that the Brillouin zone $\mathcal{B}=[-\frac{\pi}{\gamma}, \frac{\pi}{\gamma}]$ for $d=1$. 
If we consider the quasimomenta restricted to $\mathcal{B}$, in other words, the congruence class  modulo $\frac{2\pi}{\gamma}$, we can draw the following conclusion with the classification of solutions of \cref{eq:sco}.

 \begin{corollary}\label{col:two}
   In the  Brillouin zone $\mathcal{B}$, there are at most two quasimomenta of Bloch solutions of \cref{eq:sco} corresponding to $\lambda$.
 \end{corollary}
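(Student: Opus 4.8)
The plan is to obtain \cref{col:two} as a direct bookkeeping consequence of the classification in \cref{thm4.1} (specialized to the Schr\"odinger case $W\equiv0$), whose central quantitative output is the bound $|\mathcal{A}(\lambda)|\le 2$. Recall that $\mathcal{A}(\lambda)$ collects the congruence classes modulo $\frac{2\pi}{\gamma}$ of the quasimomenta of \emph{all} solutions of \cref{eq:sco} associated with $\lambda$, and that among the three admissible forms only \cref{item:eig-bloch} and \cref{item:eig-comb-bloch} are Bloch solutions, carrying respectively one quasimomentum $k_0$ and two quasimomenta $k_1,k_2$ with $[k_1]_\gamma\ne[k_2]_\gamma$. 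The whole task is therefore to transfer the bound on congruence classes into a bound on quasimomenta sitting in $\mathcal{B}$.

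First I would make precise the sense in which a quasimomentum ``lies in $\mathcal{B}$.'' As recalled in the surrounding text, $\mathcal{B}=[-\frac{\pi}{\gamma},\frac{\pi}{\gamma}]$ is a fundamental domain for the quotient $\mathbb{R}/\mathcal{R}^{*}=\mathbb{R}/\frac{2\pi}{\gamma}\mathbb{Z}$, with the two boundary points identified. Consequently, reducing a quasimomentum modulo $\frac{2\pi}{\gamma}$---which, as shown in the proof of \cref{prop:calss}, leaves the associated Bloch solution unchanged---amounts to selecting the unique representative of its class $[\cdot]_\gamma$ inside $\mathcal{B}$, and such a representative exists precisely when the quasimomentum is real, since congruence only shifts the real part. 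This gives a one-to-one correspondence between the real classes of $\mathbb{C}/\frac{2\pi}{\gamma}\mathbb{Z}$ and the points of $\mathcal{B}$.

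Next I would count. By the previous step, every quasimomentum of a Bloch solution that lies in $\mathcal{B}$ is the representative of a distinct congruence class belonging to $\mathcal{A}(\lambda)$. Hence the number of such quasimomenta is at most $|\mathcal{A}(\lambda)|$, which by \cref{thm4.1} does not exceed $2$, and the claim follows.

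The argument is essentially a reduction on top of \cref{thm4.1}, so I do not expect a genuine obstacle; the only points requiring care---and where I would be most careful---are the identification of $\mathcal{B}$ with the quotient, specifically the boundary identification $-\frac{\pi}{\gamma}\sim\frac{\pi}{\gamma}$ so that these are not double-counted as two quasimomenta, and the observation that complex quasimomenta have no representative in $\mathcal{B}$ and hence do not contribute. With these conventions fixed, the upper bound of two transfers verbatim from $|\mathcal{A}(\lambda)|\le 2$.
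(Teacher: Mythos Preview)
Your proposal is correct and follows exactly the route the paper intends: the corollary is stated without a separate proof, as an immediate consequence of the bound $|\mathcal{A}(\lambda)|\le 2$ from \cref{thm4.1} together with the identification of $\mathcal{B}$ with $\mathbb{R}/\frac{2\pi}{\gamma}\mathbb{Z}$ spelled out in the preceding paragraph. Your additional care about the boundary identification $-\frac{\pi}{\gamma}\sim\frac{\pi}{\gamma}$ and the exclusion of complex quasimomenta from $\mathcal{B}$ simply makes explicit what the paper leaves implicit.
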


\begin{corollary}\label{corres}
Restricted to the Brillouin zone $\mathcal{B}$, quasimomenta of solutions of \cref{eq:sco} corresponding to $\lambda$ are symmetric with respect to the origin.
\end{corollary}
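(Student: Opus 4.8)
The plan is to read the statement off directly from the classification in \cref{thm4.1} together with the quasimomentum sum rule of \cref{cor4.8}, exploiting that \cref{eq:sco} is precisely the case $W\equiv0$ of the Sturm--Liouville problem \cref{eq: sl}. In that case the averaged coefficient $\bar W=\frac1\gamma\int_0^\gamma W(s)\,ds$ vanishes, so \cref{cor4.8} specializes cleanly, and the natural reading of ``symmetric with respect to the origin'' is that the set $\mathcal{A}(\lambda)$ is invariant under the involution $[k]_\gamma\mapsto[-k]_\gamma$ on $\mathbb{C}/\frac{2\pi}{\gamma}\mathbb{Z}$. Since \cref{thm4.1} guarantees $|\mathcal{A}(\lambda)|\in\{1,2\}$, I would organize the argument according to these two cases.

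First, for $\lambda\in\sigma_g^1(\mathscr{H})\cup\sigma_g^2(\mathscr{H})$, i.e.\ $|\mathcal{A}(\lambda)|=1$, \cref{cor4.8} with $\bar W=0$ gives $\mathcal{A}(\lambda)=\{[0]_\gamma\}$ or $\mathcal{A}(\lambda)=\{[\tfrac{\pi}{\gamma}]_\gamma\}$. I would then observe that each of these singletons is a fixed point of the reflection $[k]_\gamma\mapsto[-k]_\gamma$: trivially $[0]_\gamma=[-0]_\gamma$, while $[\tfrac{\pi}{\gamma}]_\gamma=[-\tfrac{\pi}{\gamma}]_\gamma$ because the two representatives differ by exactly one period $\tfrac{2\pi}{\gamma}$. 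Hence $\mathcal{A}(\lambda)$ is invariant under negation. Second, for $\lambda\in\sigma_g^3(\mathscr{H})$, i.e.\ $\mathcal{A}(\lambda)=\{[k_1]_\gamma,[k_2]_\gamma\}$ with $|\mathcal{A}(\lambda)|=2$, \cref{cor4.8} with $\bar W=0$ yields $[k_1+k_2]_\gamma=[0]_\gamma$, equivalently $[k_2]_\gamma=[-k_1]_\gamma$. Thus the reflection interchanges the two classes and again leaves $\mathcal{A}(\lambda)$ invariant. Since these cases are exhaustive by \cref{thm4.1}, the corollary follows; passing to the representatives in $\mathcal{B}=[-\tfrac{\pi}{\gamma},\tfrac{\pi}{\gamma}]$ does not disturb anything, as reduction modulo $\tfrac{2\pi}{\gamma}$ commutes with negation.

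I do not expect a genuine obstacle: the entire analytic content is already packaged in \cref{cor4.8}, whose proof rests on the Liouville formula for $\det M$, and the remaining work is the elementary verification that negation is a well-defined involution on $\mathbb{C}/\frac{2\pi}{\gamma}\mathbb{Z}$ fixing $[0]_\gamma$ and $[\tfrac{\pi}{\gamma}]_\gamma$. The single point to state carefully is the interpretation of symmetry, so that the singleton cases are correctly counted. As a sanity check for the physically relevant situation of real quasimomenta, I would also note that, $V$ and $\lambda$ being real, $\overline{\psi}$ solves \cref{eq:sco} whenever $\psi$ does, and conjugation sends a Bloch solution of quasimomentum $k\in\mathbb{R}$ to one of quasimomentum $-k$; this independently confirms the symmetry of $\mathcal{A}(\lambda)$ for the representatives lying in $\mathcal{B}$.
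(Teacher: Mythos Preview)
Your proposal is correct and mirrors the paper's intended argument: the corollary is stated in the paper without a separate proof, as an immediate consequence of the classification in \cref{thm4.1} and the quasimomentum sum rule of \cref{cor4.8} specialized to $W\equiv 0$ (so $\bar W=0$), exactly as you lay out case by case. Your additional remark on complex conjugation is a pleasant consistency check but is not needed for the paper's line of reasoning.
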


For form \cref{item:eig-growth}, we have the following property.
\begin{theorem}\label{thm:pro1}
Suppose that the coefficient $V\in C_{p,\gamma}^{0}(\mathbb{R})$ and  $\psi(x)=e^{\mathrm{i}k_{0}x}(p_{1}(x)+xp_{2}(x)) $ with $p_{2}(x)\not\equiv0$ is a non-trivial solution of \cref{eq:sco}. Then $p_{1}(x)\equiv0$ if and only if the potential $V$ is constant, then  $[k_{0}]_{\gamma}=[0]_{\gamma}$.
\end{theorem}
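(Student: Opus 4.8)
My plan is to reduce everything to the two coupled identities that a form-\cref{item:eig-growth} solution satisfies. Writing $\psi(x)=e^{\mathrm{i}k_{0}x}(p_{1}(x)+xp_{2}(x))$ with $p_{2}\not\equiv0$ and $V\in C^{0}_{p,\gamma}(\mathbb{R})$, I substitute into \cref{eq:sco}; since $p_{1},p_{2},V$ are bounded and $\gamma$-periodic while $x$ is unbounded, the coefficients of $x^{0}$ and of $x^{1}$ must vanish independently, giving (this is exactly \cref{thm4.3} specialized to $W\equiv0$)
\begin{equation*}
  \mathcal{D}^{k_{0}}p_{2}=\lambda p_{2},\qquad
  \mathcal{D}^{k_{0}}p_{1}=\lambda p_{1}+2p_{2}'+2\mathrm{i}k_{0}p_{2}.
\end{equation*}
Both directions are then read off these two relations.

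For the forward implication I set $p_{1}\equiv0$ in the second relation, which collapses to $p_{2}'+\mathrm{i}k_{0}p_{2}=0$, so $p_{2}(x)=Ce^{-\mathrm{i}k_{0}x}$ with $C\neq0$. Periodicity of $p_{2}$ forces $e^{-\mathrm{i}k_{0}\gamma}=1$, hence $k_{0}\in\frac{2\pi}{\gamma}\mathbb{Z}$ and $[k_{0}]_{\gamma}=[0]_{\gamma}$; note this uses the full requirement $e^{-\mathrm{i}k_{0}\gamma}=1$, which is what rules out the a priori possible antiperiodic value $[\tfrac{\pi}{\gamma}]_{\gamma}$ left open by \cref{cor4.8}. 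Substituting $p_{2}=Ce^{-\mathrm{i}k_{0}x}$ into the first relation cancels every derivative term and leaves $Vp_{2}=\lambda p_{2}$; as $p_{2}$ never vanishes, $V\equiv\lambda$ is constant. (The same conclusion is visible directly: $p_{1}\equiv0$ gives $\psi=e^{\mathrm{i}k_{0}x}xp_{2}=Cx$, and $-\psi''+V\psi=\lambda\psi$ then reads $VCx=\lambda Cx$.)

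For the converse I assume $V\equiv c$. Then $\mathcal{D}^{k_{0}}p_{2}=\lambda p_{2}$ has constant coefficients with characteristic roots $-\mathrm{i}k_{0}\pm\sqrt{c-\lambda}$. If these roots were distinct, the whole solution space of \cref{eq:sco} would be spanned by two pure Bloch exponentials and could carry no genuine polynomial growth (this is exactly the $J_{1}$ versus $J_{2}$ dichotomy behind \cref{prop:calss}); since our $\psi$ has $p_{2}\not\equiv0$, the roots must coincide, forcing $c=\lambda$, a double root $-\mathrm{i}k_{0}$, and---by periodicity---$[k_{0}]_{\gamma}=[0]_{\gamma}$ together with $p_{2}=Ce^{-\mathrm{i}k_{0}x}$. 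The inhomogeneous term $2p_{2}'+2\mathrm{i}k_{0}p_{2}$ then vanishes, so $\mathcal{D}^{k_{0}}p_{1}=\lambda p_{1}$ and $p_{1}=De^{-\mathrm{i}k_{0}x}$, i.e.\ $p_{1}$ is a scalar multiple of $p_{2}$.

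The main obstacle is precisely this last point in the converse: the decomposition $\psi=e^{\mathrm{i}k_{0}x}(p_{1}+xp_{2})$ is determined by $\psi$ only up to shifting a Bloch solution into the $p_{1}$-slot, so constant $V$ produces $p_{1}=De^{-\mathrm{i}k_{0}x}$, a multiple of the Bloch solution $e^{\mathrm{i}k_{0}x}p_{2}$, rather than $p_{1}\equiv0$ on the nose. I would therefore phrase the converse as: constant $V$ is exactly what makes $p_{1}$ a Bloch solution, so after absorbing that Bloch part (equivalently, passing to the Floquet-canonical growth solution, whose periodic factor $p_{2}$ alone carries the growth) we may take $p_{1}\equiv0$; the genuinely load-bearing step is the distinct-versus-repeated-root dichotomy, which I would import from the Jordan-form analysis in \cref{prop:calss} rather than recomputing the monodromy matrix.
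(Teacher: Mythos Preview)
Your forward implication ($p_{1}\equiv0\Rightarrow V$ constant and $[k_{0}]_{\gamma}=[0]_{\gamma}$) is exactly the paper's argument: the paper also sets $p_{1}\equiv0$ in the relation $\mathcal{D}^{k_{0}}p_{1}=\lambda p_{1}+2\mathrm{i}k_{0}p_{2}+2p_{2}'$ (i.e.\ \cref{eq4.11} with $W=0$), solves $p_{2}'+\mathrm{i}k_{0}p_{2}=0$ to get $p_{2}=Ce^{-\mathrm{i}k_{0}x}$, and reads off $V\equiv\lambda$ and $[k_{0}]_{\gamma}=[0]_{\gamma}$ from periodicity.

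On the converse you do considerably more than the paper, which disposes of it in a single line (``It is obvious when $V$ is constant''). Your repeated-root analysis is correct and in fact surfaces a wrinkle the paper glosses over: for $V\equiv c$ the only form-\cref{item:eig-growth} solutions are $\psi=A+Bx$ (with $c=\lambda$ and $k_{0}\in\frac{2\pi}{\gamma}\mathbb{Z}$), and writing such a $\psi$ as $e^{\mathrm{i}k_{0}x}(p_{1}+xp_{2})$ gives $p_{1}=Ae^{-\mathrm{i}k_{0}x}$, which vanishes only when $A=0$. So the converse is best read as an existence statement---constant $V$ admits a form-\cref{item:eig-growth} solution with $p_{1}\equiv0$, namely $\psi=Bx$---which is presumably what the paper's ``obvious'' intends. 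One small correction to your wording: for a \emph{fixed} $\psi$ and $k_{0}$ the periodic pair $(p_{1},p_{2})$ is actually uniquely determined (if $p_{1}+xp_{2}=\tilde p_{1}+x\tilde p_{2}$ with all four $\gamma$-periodic, comparing bounded and linearly growing parts forces $p_{2}=\tilde p_{2}$, hence $p_{1}=\tilde p_{1}$). The freedom you are exploiting lies not in the decomposition of a given $\psi$ but in the choice of $\psi$ within the two-dimensional solution space: subtracting the Bloch solution $A$ from $\psi$ is what kills $p_{1}$. Your conclusion stands; only the attribution of the freedom is slightly off.
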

\begin{proof}
It is obvious when $V$ is constant

Next let  the solution $\psi(x)$  correspond  $\lambda\in\mathbb{R}$ and $p_{1}(x)\equiv0$. \cref{eq4.11} implies 
\begin{equation}\label{eq:spec}
    2\mathrm{i}k_{0}p_{2}(x)+2p_{2}'(x)=0.
\end{equation}
And $p_{2}(x)=Ce^{-\mathrm{i}k_{0}x}$, where $C$ is some constant. Then we have the potential $V$ is a constant. 

Since $p_{2}(x)$ is $\mathcal{R}$-periodic, it holds that $p_{2}(x+\gamma)=p_{2}(x)$ and 
\begin{equation*}
    e^{-\mathrm{i}k_{0}\gamma}=1, 
\end{equation*}
which implies that $[k_{0}]_{\gamma}=[0]_{\gamma}$.
\end{proof} 

Furthermore, since quasimonenta of bounded Bloch solutions and the potential are real valued,  we obtain the following theorem that shows the property of periodic terms.

\begin{theorem}\label{thm:pro2}
   Suppose a bounded Bloch solution of the Schrödinger equation \cref{eq:sco} with  the  $C_{p,\gamma}^{0}(\mathbb{R})$ potential has a zero, then the quasimonentum is $\frac{n\pi}{\gamma}, n\in\mathbb{Z}$ and the solution equals the product of a complex constant and a real-valued $\gamma$-periodic or anti-periodic function. 
\end{theorem}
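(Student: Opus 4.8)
The plan is to exploit the reality of both the potential $V$ and the eigenvalue $\lambda$, combined with the existence of a zero, through a Wronskian argument. Write the bounded Bloch solution as $\psi(x)=e^{\mathrm{i}kx}p(x)$ with $p\in C_{p,\gamma}^{2}(\mathbb{R})$; by \cref{pro} the quasimomentum $k$ is real. Since $V$ and $\lambda$ are real, taking complex conjugates in \cref{eq:sco} shows that $\bar{\psi}$ is again a solution of the very same second-order equation $-\psi''+V\psi=\lambda\psi$. This is the structural fact that makes the whole argument run.

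First I would form the Wronskian $w(x):=\psi(x)\bar{\psi}'(x)-\psi'(x)\bar{\psi}(x)$ of the two solutions $\psi$ and $\bar{\psi}$. Because \cref{eq:sco} carries no first-derivative term (equivalently, the associated first-order system \cref{eq4.2} with $W\equiv0$ has a trace-free coefficient matrix), \cref{le:liou} gives that $w$ is constant in $x$. Evaluating at the assumed zero $x_{0}$ of $\psi$ yields $w(x_{0})=\psi(x_{0})\bar{\psi}'(x_{0})-\psi'(x_{0})\overline{\psi(x_{0})}=0$, hence $w\equiv0$, so $\psi$ and $\bar{\psi}$ are linearly dependent. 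As $\psi\not\equiv0$, there is a constant $c$ with $\bar{\psi}=c\psi$; conjugating once more gives $|c|^{2}=1$, so $c=e^{\mathrm{i}\theta}$. Setting $\phi:=e^{\mathrm{i}\theta/2}\psi$ one checks $\bar{\phi}=\phi$, i.e.\ $\phi$ is real-valued, and $\psi=e^{-\mathrm{i}\theta/2}\phi$ is a complex constant times a real-valued function, which is the second assertion of the theorem.

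Next I would pin down the quasimomentum from the same relation. Writing $\bar{\psi}=c\psi$ in terms of the Bloch data gives $\bar{p}(x)=c\,e^{2\mathrm{i}kx}p(x)$; since $p$ and $\bar{p}$ are both $\gamma$-periodic while $p\not\equiv0$, comparing this identity at $x$ and at $x+\gamma$ forces $e^{2\mathrm{i}k\gamma}=1$, whence $k=\frac{n\pi}{\gamma}$ for some $n\in\mathbb{Z}$ (note this also re-derives that $k$ is real, independently of boundedness). Finally $e^{\mathrm{i}k\gamma}=(-1)^{n}$ gives $\psi(x+\gamma)=(-1)^{n}\psi(x)$, so $\psi$—and therefore its constant multiple $\phi$—is $\gamma$-periodic when $n$ is even and anti-periodic when $n$ is odd, completing the proof.

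The delicate points, rather than a single hard obstacle, are twofold: verifying that $w$ is genuinely constant, which hinges on the absence of a first-order term in \cref{eq:sco} (the coefficient $W$ of \cref{eq: sl} vanishes here), and observing that it is the combination of linear dependence of $\psi,\bar{\psi}$ with $p\not\equiv0$ that converts a single zero into the quantization $e^{2\mathrm{i}k\gamma}=1$. Everything else is routine bookkeeping; I would also keep the notation $w$ for the Wronskian to avoid any clash with the Sturm--Liouville coefficient $W$.
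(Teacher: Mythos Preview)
Your argument is correct and is a genuinely cleaner route than the paper's. The paper does not form the Wronskian of $\psi$ and $\bar\psi$ directly; instead it writes the periodic factor as $p=a+\mathrm{i}b$, derives the coupled real system for $(a,b)$ from $\mathcal{D}^{k}p=\lambda p$, and combines the two equations to obtain $(k(a^{2}+b^{2}))'=(a'b-ab')'$, then uses the zero to kill the integration constant and solves the resulting Riccati equation to get $a/b=\tan(kx+C)$, from which the polar form $p(x)=e^{\mathrm{i}(\pi/2-kx-C)}f(x)$ with $f$ real is read off. In fact your constant Wronskian identity $w=\psi\bar\psi'-\psi'\bar\psi$ is exactly the paper's conserved quantity in disguise: a short computation gives $w=2\mathrm{i}\big[(a'b-ab')-k(a^{2}+b^{2})\big]$, so $w\equiv0$ is precisely their $\tilde{C}=0$. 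What your packaging buys is that linear dependence $\bar\psi=c\psi$ with $|c|=1$ yields the conclusion in one stroke, without solving any auxiliary ODE; what the paper's route buys is the explicit formula $\operatorname{Re}p/\operatorname{Im}p=\tan(kx+C)$, which they then remark carries over to the periodic part $p_{2}$ of solutions of the form \cref{item:eig-growth}.
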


\begin{proof}
Suppose there exists $x_{0}\in\mathbb{R}$ such that the bounded Bloch solution $\psi(x)=e^{\mathrm{i}kx}p(x)$ of \cref{eq:sco} satisfies $\psi(x_{0})=0$. 
We obtain from \cref{thm4.3} that $p(x)$ is the solution of $\mathcal{D}^{k}$ for some $z\in\mathbb{C}$, which reads
\begin{displaymath}
  \mathcal{D}^{k}p(x)=\lambda p(x). 
\end{displaymath}
Equivalently, we have
\begin{equation}\label{eq4.19}
    (-k^{2}+\lambda-V)p(x)+2\mathrm{i}kp'(x)+p''(x)=0.
\end{equation}

We see from \cref{pro} that $\psi$ is bounded leads to $k\in\mathbb{R}$. Since $V$ and $\lambda$ are real-valued, by substituting $p(x)=a(x)+\mathrm{i}b(x)$ into the equation \cref{eq4.19} and subsequently separating the real and imaginary parts, we obtain:
\begin{align}
    (-k^{2}+\lambda-V)a-2kb'+a''=0,\label{eq4.20}\\
    (-k^{2}+\lambda-V)b+2ka'+b''=0.\label{eq4.21}
\end{align}

Multiply both sides of \cref{eq4.20} by $b$ and \cref{eq4.21} by $a$ and subtract one from the other, we get
\begin{equation*}\label{eq4.22}
    2k(aa'+bb')=a''b-ab'',
\end{equation*}
which yields
\begin{equation}\label{eq4.23}
    (k(a^{2}+b^{2}))'=(a'b-ab')'.
\end{equation}

By integrating both sides of \cref{eq4.23}, we arrive at
\begin{equation}\label{eq4.24}
    k(a^{2}+b^{2})=a'b-ab'+\tilde{C},
\end{equation}
where $\tilde{C}\in\mathbb{R}$ is some constant.

However, the condition $\psi(x_{0})=0$ implies that $a(x_{0})=b(x_{0})=0$. Hence, we obtain from \cref{eq4.24} that $\tilde{C}=0$ and then 
\begin{displaymath}
  k(a^{2}+b^{2})=a'b-ab',
\end{displaymath}
which leads to
 \begin{equation}\label{eq4.25}
   k((\frac{a}{b})^{2}+1)=\frac{a'b-ab'}{b^{2}}=(\frac{a}{b})'.
 \end{equation}
 
Note that the solution of \cref{eq4.25} is  $\frac{a}{b}=\tan(kx+C)$. We have
\begin{equation}\label{quionet}
        \frac{\operatorname{Re} p(x)}{\operatorname{Im} p(x)}=\tan(kx+C), 
\end{equation}
where $C\in\mathbb{R}$ is some constant and  $\operatorname{Re} p(x), \operatorname{Im} p(x)$ are real and imaginary parts of $p(x)$, respectively. Therefore, there exists a real valued function $f$ satisfying 
\begin{equation*}
    \operatorname{Re}p(x)=\sin(kx+C)f(x),\\
    \operatorname{Im}p(x)=\cos(kx+C)f(x),
\end{equation*}
or
\begin{equation}\label{eq:iiiii}
    p(x)=e^{\mathrm{i}(\frac{\pi}{2}-kx-C)}f(x).
\end{equation}
which yields
\begin{equation}\label{eq:ppppp}
    \psi(x)=\mathrm{i}e^{-\mathrm{i}C}f(x).
\end{equation}

Since $\psi\in C_{p}^{2}(\mathbb{R})$, we obtain from \cref{eq:ppppp} that $f\in C_{p}^{2}(\mathbb{R})$. Moreover, since $p$ is $\mathcal{R}$-periodic, we obtain from \cref{eq:iiiii} that
\begin{equation*}
    f(x+\gamma)=e^{\mathrm{i}k\gamma}f(x),\forall x\in\mathbb{R}.
\end{equation*}
Since $f$ is real valued, we have $e^{\mathrm{i}k\gamma}\in\mathbb{R}$, which implies that $k=\frac{n\pi}{\gamma}, n\in\mathbb{Z}$, and 
\begin{equation*}
    f(x+\gamma)=f(x), \forall x\in\mathbb{R}, 
\end{equation*}
or
\begin{equation*}
    f(x+\gamma)=-f(x), \forall x\in\mathbb{R}.
\end{equation*}
\end{proof}

\begin{remark}
Indeed, \cref{quionet} holds not only for Bloch solutions, but also for solutions of the form \cref{item:eig-growth} addressed in \cref{thm4.1}. It reads that, for the solution $\psi(x)=e^{\mathrm{i}kx}(p_{1}(x)+xp_{2}(x))$ with the condition $\psi(x_{0})=p_{1}(x_{0})$ for some $x_{0}\in\mathbb{R}$, it holds that 
\begin{equation*}
    \frac{\operatorname{Re} p_{2}(x)}{\operatorname{Im} p_{2}(x)}=\tan(kx+C),
\end{equation*}
where $C\in\mathbb{R}$ is some constant. Since $p_{2}(x)$ is also the  solution of $\mathcal{D}^{k}$ with \cref{thm4.3}, the proof is the same as that in \cref{quionet} except some notations.
\end{remark}

\bibliographystyle{siamplain}
\bibliography{references}
\end{document}